\newtheorem{proposition}{Proposition}[section]
\DeclareMathAlphabet{\mathcalligra}{T1}{calligra}{m}{n}
\begin{document}
\title{Unified Gas-kinetic Wave-Particle Method IV: Multi-species Gas Mixture and Plasma Transport}
\author[ad1]{Chang Liu}
\ead{cliuaa@connect.ust.hk}
\author[ad1,ad2]{Kun Xu\corref{cor1}}
\ead{makxu@ust.hk}
\address[ad1]{Department of Mathematics, Hong Kong University of Science and Technology, Hong Kong, China}
\address[ad2]{Shenzhen Research Institute, Hong Kong University of Science and Technology, Shenzhen, China}
\cortext[cor1]{Corresponding author}

\begin{abstract}
In this paper, we extend the unified gas-kinetic wave-particle (UGKWP) method
to the multi-species gas mixture and multiscale plasma transport.
The construction of the scheme is based on the direct modeling on the mesh size and time step scales, and the local cell's
Knudsen number determines the flow physics.
The proposed scheme has the multiscale and asymptotic complexity diminishing properties.
The multiscale property means that according to cell's Knudsen number the scheme can capture the non-equilibrium flow physics in the rarefied flow regime,
and preserve the asymptotic Euler, Navier-Stokes, and magnetohydrodynamics limit in the continuum regime.
The asymptotic complexity diminishing property means that the total
degree of freedom of the scheme automatically decreases as cell's Knudsen number decreases.
In the continuum regime, the scheme automatically degenerates from a kinetic solver to a hydrodynamic solver.
In UGKWP, the evolution of
microscopic velocity distribution is coupled with the evolution of macroscopic
variables, and the particle evolution as well as the macroscopic fluxes are
modeled from the time accumulating solution up to a time step scale from the kinetic model equation.
 For plasma transport,
current scheme provides a smooth transition from particle in cell (PIC) method in the rarefied regime
to the magnetohydrodynamic (MHD) solver in the continuum regime.
In the continuum limit, the cell size and time step of the UGKWP method is not restricted to be less than the mean free path and
mean collision time. In the highly magnetized regime, the cell size and time step are not
restricted by the Debye length and plasma cyclotron period.
The multiscale and asymptotic complexity diminishing properties of the scheme are
verified by numerical tests in multiple flow regimes.
\end{abstract}

\begin{keyword}
Unified Gas-kinetic Wave-particle Method, Multiscale Modeling, Gas Mixture, Plasma Transport
\end{keyword}
\maketitle
\section{Introduction}
Gas mixture and plasma widely exit in the universe and
are extensively applied in the industry of aerospace, chemical, and nuclear engineering.
Both gas mixture and plasma transport have multiscale flow dynamics.
For the gas mixture, the flow regime various from the rarefied to
continuum regime according to the Knudsen number.
In the rarefied regime, the fundamental governing equation is the multi-species Boltzmann equation \cite{chapman1990mathematical},
which resolves the physics on the mean free path and mean collision time scale.
The complex five-fold integral collision operator makes the Boltzmann equation difficult for
both mathematical analysis and numerical simulation. Therefore, many kinetic models
have been proposed, for example the McCormack model \cite{mccormack1973construction}
that linearizes the nonlinear collision term with the assumption for the distribution function
to slightly deviate from equilibrium;
the Andries-Aoki-Perthame model \cite{AAP} in which the collision in modeled by
a single relaxation term; and other modified models that can
recover transport coefficients correctly \cite{brull2015ellipsoidal,liu2016asymptotic}.
Although the kinetic equation resolves small scale flow physics, the high dimension of the equation puts barrier in
practical 3D engineering applications.
The hydrodynamic model, namely the Euler or Navier-Stokes (NS) equations are mostly used in the continuum regime.
For the plasma transport, the flow regime varies from rarefied regime to continuum regime according to the Knudsen number,
and varies from the two fluid regime to magnetohydrodynamic (MHD) regime according to the normalized Larmor radius and Debye length.
In the rarefied flow regime with large Knudsen number, the plasma flow physics
is described by the kinetic Fokker-Planck-Landau equation coupled with the Maxwell equation \cite{chen1984introduction}.
In the hydrodynamic regime at small Knudsen number, the two-fluid hydrodynamic system coupled with Maxwell equation can describe
the plasma flow dynamics in a more effective way, which takes into account the Hall effect, electron inertia effect,
resistive effect, etc. \cite{hakim2006high}.
In the highly magnetized flow regime where the normalized Larmor radius approaches zero and Debye length on the order of the reciprocal of
the speed of light, a single fluid ideal MHD can be used to approximate the large scale plasma flow dynamics \cite{xu1999gas}.
For both multiscale gas mixture and plasma transport, the hydrodynamic models are more effective, but limited in the continuum regime;
while the kinetic models capture small scale physics, but have complex form and high dimension. Therefore, in order to capture flow physics in different regime in the corresponding most efficient way, the construction of multiscale model
for gas mixture and plasma transport is highly demanded.

In general, the numerical methods for gas mixture and plasma transport can be categorized into the deterministic method and stochastic method.
The deterministic discrete ordinate method (DVM) has great advantage in the simulation of low speed flow as it does not suffer from any statistical noise \cite{liu2020aia}.
In the last several years, many deterministic numerical methods have
been developed for multi-species gas mixture \cite{jin2010micro,zhang2018discrete,zhang2019discrete,brull2020local}, as well as
plasma transport \cite{qiu2010conservative,liu2017unified}. On the other hand, when dealing with the high speed flow and 3D flow,
the stochastic particle method shows advantage in term of computation efficiency. The direct simulation Monte Carlo (DSMC) method has been
extended to gas mixture and chemical reaction \cite{thomasopen}. For the simulation of plasma transport, the particle in cell (PIC) method
has been developed and widely applied in industry \cite{tskhakaya2007particle}. For the traditional DVM, DSMC, and PIC methods, the numerical
cell size is usually required to be less than the mean free path and Deybye length, and the time step is required to be less than the mean
collision time. The cell size and time step constraints reduce the computational efficiency of the traditional DVM, DSMC, and PIC methods
and it becomes impossible to use them in the continuum regime. In order to remove the constraints, the asymptotic preserving schemes have been proposed that can preserve the
flow dynamics in the collisionless and Euler limiting regime \cite{jin2010micro,degond2010asymptotic}.

The unified gas-kinetic scheme (UGKS) proposed by Xu et al. is a multiscale numerical numerical method for the simulation of gas flow \cite{xu2010unified,xu2015direct}.
In the last decade, the UGKS has been well developed and extended to the field of multiscale photon transport \cite{sun2015asymptotic},
plasma transport \cite{liu2017unified}, gas-particle multiphase flow \cite{liu2019unified}, neutron transport \cite{shuang2019parallel}, etc.
The two important ingredients of UGKS are: firstly, the evolution of velocity distribution function is coupled with the evolution of the macroscopic
conservative variables; secondly, the numerical flux of UGKS is constructed from the integral solution of the kinetic equation
which takes into account both particle free stream and collision effects. The UGKS has been proved to be a second order unified preserving
scheme that can accurately capture the NS solution with cell size and time step being much larger than the mean free path and mean collision time \cite{liu2020unified}, the same as traditional NS solvers in discretizing the macroscopic equations directly.
To improve the efficiency of UGKS in the simulation of high speed flow, the unified gas-kinetic wave-particle (UGKWP) method has been proposed and
applied in the simulation of multiscale gas dynamics and photon transport \cite{liu2020unified,zhu2019unified,li2020unified}.
The construction of UGKWP method follows the direct modeling methodology of UGKS: the evolution of microscopic simulation particle is coupled with
the evolution macroscopic variables, and the multiscale particle evolution equation is derived from the integral solution of the kinetic equation.
The propose of this work is to extend the UGKWP method to the field of multiscale gas mixture and plasma transport.

The rest of the paper is organized as following. The governing equations for gas mixture and plasma transport are discussed
in Section \ref{governingequations}. In Section \ref{ugkwp-mixture}, the UGKWP methods for gas mixture and plasma transport are proposed.
The unified preserving and asymptotic complexity diminishing properties of UGKWP are discussed in Section \ref{discussion}. The numerical
examples are shown in Section \ref{tests}, and the last Section \ref{conclusion} gives the conclusion.

\section{Governing equations for multi-species gas mixture and plasma transport}\label{governingequations}
This section is to present the governing equations based on which the scheme is constructed. The multi-species Boltzmann equation is first reviewed, and then the kinetic model equation proposed by Andries, et al. \cite{AAP} will be discussed, including its asymptotic behavior in continuum regime. The two fluid kinetic-Maxwell system, as well as the Hall-MHD equations will be presented as well.
\subsection{Multi-species Boltzmann equation}
A gas mixture composed of $m$ species can be modeled by the multi-species Boltzmann equations,
\begin{equation}\label{Boltzmann}
  \partial_t f_\alpha+\vec{v}\cdot\nabla f_\alpha
  =\sum_{k=1}^{m}\mathcal{Q}_{\alpha k}(f_\alpha,f_k),
\end{equation}
where $f_{\alpha}(t,\vec{x},\vec{v})$ is the velocity distribution function of species $\alpha$, and the collision between species $\alpha$ and $k$ follows the integral operator
\begin{equation}\label{Boltzmann-collision}
  \mathcal{Q}_{\alpha k}(f_\alpha,f_k)=
  \int_{\mathcal{R}^3}\int_{\mathcal{S}^2}
  (f_\alpha' f_k^{*'}-f_\alpha f_k^*)
  B^{\alpha,k}(\vec{v}_r\cdot\vec{n},|\vec{v}_r|)d\vec{n}dv^*
\end{equation}
where $f_k^*=f_k(t,\vec{x},\vec{v}^*)$, $f_\alpha'=f_\alpha(t,\vec{x},\vec{v}')$,
$f_k^{*'}=f_k(t,\vec{x},\vec{v}^{*'})$. The post collision velocity $\vec{v}$ and $\vec{v}^{*'}$ follow
\begin{equation*}
\begin{aligned}
&\vec{v}'=\vec{v}-2\frac{\mu_{\alpha k}}{m_\alpha}(\vec{v}_r\cdot\vec{n})\vec{n},\\
&\vec{v}^{*'}=\vec{v}^*+2\frac{\mu_{\alpha k}}{m_k}(\vec{v}_r\cdot\vec{n})\vec{n},
\end{aligned}
\end{equation*}
where $\mu_{\alpha k}=\frac{m_\alpha m_k}{m_\alpha+m_k}$ is the reduced mass, and $\vec{n}$ is the unit vector joining the centers of the two colliding spheres. The collision kernel $B^{\alpha,k}$ depends on relative velocity $\vec{v}_r=\vec{v}-\vec{v}^*$.
The macroscopic density $\rho_\alpha$, velocity $\vec{U}_\alpha$, temperature $T_\alpha$, and energy $E_\alpha$ of species $\alpha$ can be calculated by taking the moments of the velocity distribution $f_\alpha$,
\begin{equation*}
\begin{aligned}
  &\rho_\alpha=m_\alpha n_\alpha=\int_{R^3}f_\alpha d\vec{v},
  \quad \rho_\alpha \vec{U}_\alpha=\int_{R^3}\vec{v}f d\vec{v},\\
  &T_\alpha=\frac{1}{3n_\alpha k_B}\int_{R^3}(\vec{v}-\vec{U}_\alpha)^2 f d\vec{v},
  \quad E_\alpha=\frac12\rho_\alpha|\vec{U}_\alpha|^2+\frac32n_\alpha k_B T_\alpha,
\end{aligned}
\end{equation*}
where $m_\alpha$ and $n_\alpha$ are the molecular mass and number density of species $\alpha$. The total density $\rho$, total number density $n$, total momentum $\rho \vec{U}$, and total energy $E$ satisfy
\begin{equation}\label{total}
\begin{aligned}
  &\rho=\sum^m_{k=1} \rho_k,\quad n=\sum^m_{k=1}n_k,\\
  &\rho \vec{U}=\sum_{k=1}^m \rho_k\vec{U}_k,\quad E=\sum_{k=1}^m E_k.
\end{aligned}
\end{equation}
Boltzmann equation is a fundamental equation that describes the mean free path level flow physics,
however, the five-fold collision operator is costly numerically.
Simplified kinetic model equations are developed in the literature
\cite{sirovich1962kinetic,garzo1989kinetic,asinari2008consistent,groppi2011kinetic},
including a relaxation-type kinetic model proposed by Andries, et al. \cite{AAP}.
Andries' model will be introduced in the next section, based on which the numerical schemes for multi-species gas mixture and plasma are constructed.

\subsection{Kinetic model equation for multi-species gas mixture}
The relaxation-type kinetic model equation that originally proposed by Gross and Krook \cite{gross1956model} has been widely used in the numerical simulation of rarefied gas dynamics due to its simple formulation.
Such BGK-type operator has been extended to model the multi-species collision by Andries, Aoki, and Perthame \cite{AAP}, which can be written as
\begin{equation}\label{AAP}
  \partial_t f_\alpha +\vec{v}\cdot\nabla_xf_\alpha=\frac{g_\alpha-f_\alpha}{\tau_\alpha},
\end{equation}
where the post collision distribution function is a Maxwellian distribuion
\begin{equation}\label{postcollision}
g_\alpha=\rho_\alpha\left(\frac{m_\alpha}{2\pi kT^*_\alpha}\right)^{3/2}\exp
\left(-\frac{m_\alpha}{2k_BT^*_\alpha}(\vec{v}-\vec{U}^*_\alpha)^2\right),
\end{equation}
and the parameters $T_\alpha^*$ and $\vec{U}^*_\alpha$ are chosen to recover the exchanging relations for Maxwell
molecule, which takes the form
\begin{equation}
\begin{aligned}
  \vec{U}_\alpha^*&=\vec{U}_\alpha+\tau_\alpha\sum_{k=1}^N 2\mu_{\alpha k}\chi_{\alpha k}n_k(\vec{U}_k-\vec{U}_\alpha),\\
  T^*_\alpha&=T_\alpha-\frac{m_\alpha}{3k_B}(\vec{U}^*_\alpha-\vec{U}_\alpha)^2
  +\tau_\alpha\sum_{k=1}^N\frac{4\mu_\alpha \chi_{\alpha k}n_k}{m_\alpha+m_k}\left(T_k-T_\alpha+\frac{m_k}{3k_B}(\vec{U}_k-\vec{U}_\alpha)^2\right).
\end{aligned}
\end{equation}
For Maxwell molecules, the interaction coefficient $\chi$ and relaxation parameter $\tau$ satisfy
\begin{equation*}
  \frac{1}{\tau_\alpha}=\sum_{k=1}^{N} \chi_{\alpha k}n_k,
  \quad \chi_{\alpha k}=0.422\pi\left(\frac{a_{\alpha k}(m_\alpha+m_r)}{m_\alpha m_r}\right)^{\frac12},
\end{equation*}
where $a_{\alpha k}$ is the constant of proportionality in the intermolecular force law \cite{morse1963energy}.
The advantage of Andries' kinetic model is that it satisfies the indifferentiability principle, entropy condition, and can recover the exchanging relationship of Maxwell molecules with such a simple relaxation form \cite{AAP}.

Based on Andries' model, Liu et al. proposed a BGK-Maxwell system for fully ionized plasma transport \cite{liu2017unified}, which can be written as
\begin{equation}\label{liu-plasma}
\begin{aligned}
  &\frac{\partial f_\alpha}{\partial t}+\vec{v}\cdot \nabla_x f_\alpha+\vec{a}_\alpha\cdot\nabla_xf_\alpha=\frac{g_\alpha-f_\alpha}{\tau_\alpha}\\
  &\frac{\partial \vec{B}}{\partial t}+\nabla_x\times\vec{E}=0\\
  &\frac{\partial \vec{E}}{\partial t}-\vec{c}^2\nabla_x\times \vec{B}=\frac{1}{\epsilon_0}\vec{j},
\end{aligned}
\end{equation}
where the velocity distribution $f_\alpha(t,\vec{x},\vec{v})$ of species $\alpha$ ($\alpha=i$ for ion and $\alpha=e$ for electron) is governed by a kinetic equation that coupled with the Maxwell equations for electromagnetic wave. In the Maxwell equation, $\vec{E}$ and $\vec{B}$ are the electric and magnetic field, $\vec{c}$ is speed of light, and $\epsilon_0$ is the vacuum permittivity. In the kinetic equation, the Lorenz acceleration $\vec{a}_\alpha$ takes the form
\begin{equation*}
\vec{a}_\alpha=\frac{e(\vec{E}+\vec{v}\times\vec{B})}{m_\alpha},
\end{equation*}
where $e$ is electric charge, and $m_\alpha$ is the particle mass of species $\alpha$.
The post collision distribution $g_\alpha$ takes the same form of the Andries' model as given in Eq.\eqref{postcollision}, however the interspecies interaction coefficient $\chi_{ie}$ is determined by the plasma electrical conductivity $\sigma_p$ \cite{liu2017unified}
\begin{equation}\label{conductivity}
 \chi_{ie}\sigma_p=\frac{n_ie^2(m_i+m_e)}{2m_im_e}.
\end{equation}
The hydrodynamic equations such as the Navier-Stokes equations, the Euler equations, and magnetohydrodynamic equations can be derived in the continuum regime. The asymptotic behavior of above kinetic model Eq.\eqref{AAP} and Eq.\eqref{liu-plasma} will be briefly discussed in the next subsection.

\subsection{Asymptotic behavior of the kinetic system}
In this section, the asymptotic analysis is applied to give the corresponding hydrodynamic limits of the Andries' and BGK-Maxwell equations. Given the reference variables
length $L_\infty$, temperature $T_\infty$, mass $m_\infty$,  number density $n_\infty$, and magnetic field strength $B_\infty$, the following reference variables can be deduced,
\begin{equation*}
\begin{aligned}
&V_\infty=\sqrt{2k_B T_\infty/m_\infty}, \quad t_\infty=L_\infty/V_\infty,\quad \rho_\infty=m_\infty n_\infty,\\ & E_\infty=B_\infty V_\infty, \quad a_\infty=eB_\infty V_\infty/m_\infty,\quad f_\infty=m_\infty n_\infty/V^3_\infty,
\end{aligned}
\end{equation*}
which are the reference velocity, time, density, electric field, acceleration, and velocity distribution respectively. Based on above reference variables, the Andries' kinetic model can be re-scaled as
\begin{equation*}
  \partial_{\tilde{t}} \tilde{f}_\alpha +\tilde{\vec{v}}\cdot\nabla_{\tilde{x}}\tilde{f}_\alpha
  =\frac{\tilde{g}_\alpha-\tilde{f}_\alpha}{\tilde{\tau}_\alpha},
\end{equation*}
and the BGK-Maxwell system can be re-scaled as
\begin{equation*}
\begin{aligned}
  &\frac{\partial \tilde{f}_\alpha}
  {\partial \tilde{t}}+\tilde{\vec{v}}\cdot \nabla_{\tilde{x}} \tilde{f}_\alpha+
  \frac{1}{\tilde{r}}\vec{\tilde{a}}_\alpha\cdot\nabla_{\tilde{x}}
  \tilde{f}_\alpha=\frac{\tilde{g}_\alpha-\tilde{f}_\alpha}
  {\tilde{\tau}_\alpha}\\
  &\frac{\partial \tilde{\vec{B}}}{\partial \tilde{t}}
  +\nabla_{\tilde{x}}\times\tilde{\vec{E}}=0\\
  &\frac{\partial \tilde{\vec{E}}}{\partial \tilde{t}}
  -\tilde{c}^2\nabla_{\tilde{x}}\times \tilde{\vec{B}}=\frac{1}{\tilde{\lambda}_D^2 \tilde{r}}\vec{j},
\end{aligned}
\end{equation*}
where the variables with a tilde stand for the re-scaled variables, and especially $\tilde{r}$ and $\tilde{\lambda}_D$ are the normalized Larmor radius and Debye length,
\begin{equation*}
  \tilde{r}=\frac{eB_\infty L_\infty}{m_\infty V_\infty},\quad
  \tilde{\lambda}_D=\sqrt{\frac{\epsilon_0m_\infty V_\infty^2}{ne^2}}\frac{m_\infty V_\infty}{eB_\infty}.
\end{equation*}
For the sake of simplicity, the tilde is omitted in the following parts of the paper.

In the continuum regime, the Andries' kinetic equation recovers the Navier-Stokes and Euler equations as $\tau\to 0$,
\begin{equation}\nonumber
\text{Andries' equation}
\xrightarrow{\tau,\chi \ll 1}\text{Navier-Stokes equations}
\xrightarrow{\tau,\chi \to 0}\text{Euler equations}.
\end{equation}
According to the Chapman-Enskog theory \cite{chapman1990mathematical}, the distribution of Andries' kinetic model can be expanded as
\begin{equation}\label{ceexpansion}
f_\alpha=g-\tau_\alpha(\partial_t \bar{g}_\alpha+\vec{v}\cdot\nabla_x \bar{g}_\alpha)+O(\tau^2),
\end{equation}
where $\bar{g}$ is the Maxwellian distribution of the averaged quantities of all species that are evaluated from Eq.\eqref{total}.
The zero-th order expansion with respect to $\tau_\alpha$ gives the Euler equations \cite{AAP},
\begin{equation}\label{aap-euler}
\begin{aligned}
  &\partial_t \rho_\alpha+\nabla\cdot(\rho_\alpha \vec{U})=0,\\
  &\partial_t(\rho \vec{U})+\nabla\cdot(p\mathbb{I}+\rho\vec{U}\vec{U})=0,\\
  &\partial_tE+\nabla\cdot(E\vec{U}+p\vec{U})=0,
\end{aligned}
\end{equation}
and the first order expansion gives the Navier-Stokes equations \cite{AAP},
\begin{equation}\label{aap-ns}
\begin{aligned}
  &\partial_t \rho_\alpha+\nabla\cdot(\rho_\alpha \vec{U}+\vec{J_\alpha})=0,\\
  &\partial_t(\rho \vec{U})+\nabla\cdot(p\mathbb{I}+\rho\vec{U}\vec{U}+\sigma)=0,\\
  &\partial_tE+\nabla\cdot(E\vec{U}+p\vec{U}+\sigma\cdot\vec{U}+\vec{q})=0.
\end{aligned}
\end{equation}
The mass diffusion flux $\vec{J}_\alpha$ is
\begin{equation*}
\vec{J}_\alpha=-\sum_{k=1}^m L_{\alpha k}\frac{\nabla_x(n_\alpha k_B T)}{\rho_\alpha},
\end{equation*}
and the shear stress $\sigma$ and heat flux $q$ satisfy
\begin{equation*}
  \begin{aligned}
    &\sigma=-\mu(\nabla_x\vec{U}+(\nabla_x\vec{U})^T-\frac23\nabla\cdot\vec{U}\mathbb{I}),\\
    &\vec{q}=\frac52k_B T\sum^n_{k=1} \frac{J_k}{m_k}-\kappa \nabla_xT,
  \end{aligned}
\end{equation*}
with the viscous coefficient $\mu=k_BT\sum_{k=1}^m \tau_{\alpha k}n_k$, and the heat conduction coefficient $\kappa=\frac{52}k^2_B T\sum_{k=1}^m \frac{\tau_k n_k}{m_k}$.
In the mass flux, the Soret and the Dufour coefficients are equal to zero, which stand only for Maxwell particles.

The BGK-Maxwell system converges to the two-fluid system and magnetohydrodynamics system as $\tau\to 0$ and $r\to 0$,
\begin{equation*}\nonumber
\text{BGK-Maxwell equation}
\xrightarrow[\chi_{ie}\sim 1]{\tau_\alpha \ll 1}\text{two-fluid system}
\xrightarrow{r\ll 1}\text{MHD equations}.
\end{equation*}
In the continuum regime with $\tau_\alpha \ll 1$, and $\chi_{ie}\sim 1$, the distribution of BGK-Maxwell system can be expanded as
\begin{equation*}
f_\alpha=g_\alpha-\tau_\alpha(\partial_t g_\alpha+\vec{v}\cdot\nabla_x g_\alpha)+O(\tau^2),
\end{equation*}
where $g_\alpha$ is the Maxwellian distribution of the macroscopic quantities of species $\alpha$.
The first order expansion gives hydrodynamic two-fluid equations
\begin{equation}\label{two-fluid}
\begin{aligned}
  &\partial_t \rho_\alpha+
  \nabla_x\cdot(\rho_\alpha\vec{U}_\alpha)=0,\\
  &\partial_t(\rho_\alpha \vec{U}_\alpha)
  +\nabla_x\cdot(\rho_\alpha \vec{U}_\alpha\vec{U}_\alpha+p_\alpha \mathbb{Id}-\mu \sigma(\vec{U}_\alpha))
  =\frac{n_\alpha}{r_{L_i}}
  (\vec{E}+\vec{U}_\alpha\times \vec{B})+S_\alpha,\\
  &\partial_t {E}_\alpha
  +\nabla_x\cdot(({E}_\alpha
  +p_\alpha)\vec{U}_\alpha-\mu \sigma(\vec{U}_\alpha) \vec{U}+\kappa\nabla_x T)
  =\frac{n_\alpha}{r_{L_i}}
  \vec{U}_\alpha\cdot \vec{E}+Q_\alpha,
\end{aligned}
\end{equation}
where the strain rate tensor $\sigma(\vec{U})$ is
\begin{equation}\label{strain}
  \sigma(\vec{U}_\alpha)=\left(\nabla_x \vec{U}_\alpha+(\nabla_x \vec{U}_\alpha)^T\right)-\frac23 \text{div}_x \vec{U}_\alpha \mathbb{Id}.
\end{equation}
The viscosity $\mu_\alpha$ and the thermal conductivity $\kappa_\alpha$ can be expressed by the relaxation parameter $\tau_\alpha$ as
\begin{equation*}
\mu_\alpha=\tau_\alpha n_\alpha k_B T_\alpha,\quad
\kappa_\alpha=\tau_\alpha\frac52\frac{k_B}{m} nk_B T.
\end{equation*}
In above two-fluid system, $S_i=-S_e$ and $Q_i=-Q_e$ are the corresponding momentum and energy exchange between electron and ion,
\begin{equation*}
\begin{aligned}
  &S_\alpha=\sum_r\frac{2m_\alpha m_r}{m_\alpha+m_r}
   n_\alpha\chi_{\alpha r}
  (\vec{U}_r-\vec{U}_\alpha),\\
  &Q_\alpha=\sum_r\frac{4m_\alpha m_r}{(m_\alpha+m_r)^2}
  n_r \chi_{\alpha r}
  \left(\frac32k_BT_r-\frac32k_BT_\alpha +\frac{m_r}{2}(\vec{U}_r-\vec{U}_\alpha)^2\right).
\end{aligned}
\end{equation*}
In the magnetohydrodynamic regime with
$m_e\ll m_i$,
the first order with respect to $r$,
the zero-th order with respect of $\tau_\alpha$ and $m_e/m_i$
of the two-fluid system give the Hall-MHD equations,
\begin{equation}\label{Hall-MHD}
  \begin{aligned}
  &\partial_t\rho+\nabla_x\cdot(\rho \vec{U})=0,\\
  &\partial_t(\rho \vec{U})+
  \nabla_x\cdot(\rho \vec{U} \vec{U}+p \mathbb{I})
  =\lambda^2_Dc^2\nabla_x\times\vec{B}\times \vec{B},\\
  &\partial_t E_\alpha+\nabla_x\cdot((E_\alpha+p_\alpha)\vec{U}_\alpha)
  =\lambda^2_Dc^2n\vec{U}\cdot(\nabla_x\times\vec{B}\times \vec{B}),\\
  &\partial_t \vec{B}+\nabla_x \times \vec{E}=0,\\
  &\vec{E}+\vec{U}\times \vec{B}=
  \underbrace{\frac{r}{\sigma} \vec{j}}_{\text{Resistive term}}+\underbrace{\frac{r}{n}\lambda^2_Dc^2\nabla_x\times\vec{B}\times \vec{B}}_{\text{Hall term}},
  \end{aligned}
\end{equation}
where
$\vec{j}=e n_i\vec{U}_i-en_e\vec{U}_e$ is the current density, and $\sigma$ is the electrical conductivity that relates to the interspecies interaction coefficient $\chi_{ie}$ as given in Eq.\eqref{conductivity}.
In the limit where $\lambda_D=c^{-1}$ and $r\to0$, one gets the ideal MHD equations,
\begin{equation}\label{ideal-mhd}
  \begin{aligned}
  &\partial_t\rho+\nabla_x\cdot(\rho \vec{U})=0,\\
  &\partial_t(\rho \vec{U})+\nabla_x\cdot(\rho \vec{U}\vec{U}+pI)=\nabla_x\times \vec{B}\times \vec{B},\\
  &\partial_t E+\nabla_x\cdot
  ((E+p)\vec{U})=\vec{U}\cdot(\nabla_x\times \vec{B}\times \vec{B}),\\
  &\partial_t \vec{B}+\nabla_x \times (\vec{U}\times \vec{B})=0.
  \end{aligned}
\end{equation}
The asymptotic behavior of the Andries' and BGK-Maxwell system is given in above discussion.
In the next section, the unified gas-kinetic wave-particle method for gas mixture and plasma transport will be proposed.

\section{Unified Gas-kinetic Wave-Particle Method}\label{ugkwp-mixture}
\subsection{UGKWP method for multi-species gas mixture}
The unified gas-kinetic wave-particle method is a multiscale numerical method that preserves the asymptotic limits of the Andries' kinetic equations. The UGKWP method couples the evolution of the velocity distribution $f_\alpha$ and the macroscopic quantities $\vec{W}_\alpha$. The evolution of microscopic distribution and macroscopic variables will be given in the following subsections.

\subsubsection{The evolution of microscopic velocity distribution function}
Similar to the UGKWP method for single species gas \cite{liu2020unified}, in current scheme the velocity distribution function is partially represented by an analytical distribution $g^{+,c}_\alpha$ and partially represented by stochastic particles $P_{\alpha k}=(m_{\alpha k}, \vec{x}_{\alpha k}, \vec{v}_{\alpha k})$, which is as shown in Fig. \ref{wpdiagram}. Here $m_{\alpha k}$ is the mass of simulation particle $P_{\alpha k}$, which represents a cluster of real gas particles of species $\alpha$, and $\vec{x}_{\alpha k}$, $\vec{v}_{\alpha k}$ stand for the position and velocity of simulation particle $P_{\alpha k}$.
The evolution of the microscopic velocity distribution function follows the integral solution of the kinetic equation \eqref{AAP}.
With initial condition
$f_\alpha(0,\vec{x},\vec{v})=f_{\alpha,0}(\vec{x},\vec{v})$,
the integral solution at $(\vec{x},t)$ can be written as
\begin{equation}\label{integalsolution}
  f_\alpha(\vec{x},t,\vec{v})=\frac{1}{\tau}\int_0^t e^{-(t-t')/\tau}g_\alpha(\vec{x}',t',\vec{v})dt'
  +e^{-t/\tau}f_{\alpha,0}(\vec{x_0},\vec{v}),
\end{equation}
where the equilibrium distribution is integrated along the characteristics $\vec{x}'=\vec{x}+\vec{v}(t'-t)$.
Substituting the second order Taylor expansion of equilibrium
\begin{equation*}
  g_\alpha(\vec{x}',t',\vec{v})=g_\alpha(\vec{x},t,\vec{v})
  +\nabla_xg_\alpha(\vec{x},t,\vec{v})\cdot(\vec{x}'-\vec{x})
  +\partial_tg_\alpha(\vec{x},t,\vec{v})(t'-t)
  +O\left((\vec{x}'-\vec{x})^2,(t'-t)^2\right),
\end{equation*}
into the integral solution, the numerical multiscale evolution solution for simulation particle can be obtained,
\begin{equation}\label{particlems}
  f_\alpha(\vec{x},t,\vec{v})=(1-e^{-t/\tau})g_\alpha^+(\vec{x},t,\vec{v})
  +e^{-t/\tau}f_{0,\alpha}(\vec{x}_0,\vec{v}),
\end{equation}
where
\begin{equation}\label{g+}
  g_\alpha^+(\vec{x},t,\vec{v})=g_\alpha(\vec{x},t,\vec{v})
  +\left(\frac{te^{-t/\tau}}{1-e^{-t/\tau}}-\tau\right)
  \left(\partial_tg_\alpha(\vec{x},t,\vec{v})
  +\vec{v}\cdot\nabla_xg_\alpha(\vec{x},t,\vec{v})\right).
\end{equation}
A physical interpretation of Eq.\eqref{particlems} is that a particle has a probability $e^{-t/\tau}$
to free stream in a time period $[0,t]$, and has a probability $1-e^{-t/\tau}$
to interact with other particles and reach a velocity distribution $g_\alpha^+$.
The free stream particles are kept and the collisional particles get re-sampled from distribution $g^+_\alpha$.
The cumulative distribution function of the free streaming time $t_f$ is
\begin{equation}\label{tc-distribution}
  F(t_f<t)=\exp(-t/\tau),
\end{equation}
from which $t_f$ can be sampled as $t_f=-\tau\ln(\eta)$ with $\eta$ a uniform distribution $\eta\sim U(0,1)$. For a time step $\Delta t$, the particles with $t_f\ge\Delta t$ will be collisionless particles, and the particles with $t_f<\Delta t$ will be collisional particles.
The procedure of updating particles in UGKWP method is
\begin{description}
  \item[Step 1:] Sample free streaming time $t_{f,\alpha k}$ for each particle $P_{\alpha k}$, and stream particle $P_{\alpha k}$ for a time period of $\min(\Delta t, t_{f,\alpha k})$;
  \item[Step 2:] Keep collisionless particles, and remove collisional particles. Calculate the total conservative quantities of collisional particles $\vec{W}_{i,\alpha}^h$ from the updated conservative quantities $\vec{W}_{i,\alpha}$ as $\vec{W}_{i,\alpha}^h=\vec{W}_{i,\alpha}-\vec{W}_{i,\alpha}^p$;
  \item[Step 3:] Rebuild the microscopic velocity distribution. Calculate the analytical distribution $g_\alpha^{+,c}$ and re-sample collisional particles from distribution $g_\alpha^{+,f}$.
\end{description}
In above particle updating procedure, total conservative quantities of collisionless particles in cell $\Omega_i$ is denoted as $\vec{W}_{i,\alpha}^p$, and the total conservative quantities of collisional particle in cell $\Omega_i$ is denoted as $\vec{W}_{i,\alpha}^h$.
In the distribution rebuilding process, the $\vec{W}_{i,\alpha}^h$ is divided into $g_\alpha^{+,c}=(1-e^{-\Delta t/\tau^{n+1}})g_\alpha^{+}$ and $g_\alpha^{+,f}=e^{-\Delta t/\tau^{n+1}}g_\alpha^{+}$, which coresponding to the collisional and collisionless particles in the next time step from $t^n$ to $t^{n+1}$. The distribution $g_\alpha^{+,c}$ is recorded analytically, and the distribution $g_\alpha^{+,f}$ is re-sampled into stachastic particles.
Above discussion gives the evolution of particles, and in the next subsection we will give the evolution of the conservative variables.

\subsubsection{The evolution of macroscopic quantities}
The evolution of macroscopic quantities is under the framework of finite volume scheme. The cell averaged conservative variables $\vec{W}_{i,\alpha}=(\rho_{i,\alpha},\rho_{i,\alpha}\vec{U}_{i,\alpha},
\rho_{i,\alpha}E_{i,\alpha})$ on a physical cell $\Omega_i$ is defined as
\begin{equation*}
\vec{W}_{i,\alpha}=\frac{1}{|\Omega_i|}\int_{\Omega_i}\vec{W}_\alpha(\vec{x}) d\vec{x}.
\end{equation*}
The finite volume scheme of $\vec{W}_{i,\alpha}$ follows
\begin{align}
&\vec{W}^{n+1}_{i,\alpha}=\vec{W}^n_{i,\alpha}-\sum_s\frac{\Delta t }{|\Omega_{i}|}|l_s|F_{s,\alpha}+\frac{\Delta t}{\tau}(\vec{W}^{* n+1}_{i,\alpha}-\vec{W}^{n+1}_{i,\alpha}),\label{UGKSma}
\end{align}
where $l_s\in \partial \Omega_i$ is the cell interface with center $\vec{x}_s$ and outer unit normal vector $\vec{n}_s$.
The numerical flux of conservative variables $F_{s,\alpha}$  at $\vec{x}_s$ can be written as
\begin{equation*}
  F_{s,\alpha}=\frac{1}{\Delta t}\int_{t^n}^{t_{n+1}}\int \vec{v}\cdot\vec{n}_s f_\alpha(\vec{x}_s,t,\vec{v}) \vec{\Psi} d\Xi dt,
\end{equation*}
where $\vec{\Psi}=\left(1,\vec{v},\frac12(\vec{v}^2+\xi^2)\right)$ is the conservative moments of distribution function with $\xi$ the internal degree of freedom.
The time dependent distribution function $f_\alpha(\vec{x}_s,t,\vec{v})$ at cell interface is constructed from the integral solution of kinetic equation as given in Eq.\eqref{integalsolution}.
The above UGKWP flux for conservative variables can be split into the equilibrium flux
\begin{equation}\label{fluxwg}
  F_{s,\alpha}^{g}=\frac{1}{\Delta t}\int_{t^n}^{t_{n+1}}\int \vec{v}\cdot\vec{n}_s \left\{\frac{1}{\tau}\int_0^t e^{-(t-t')/\tau}g_\alpha(\vec{x}',t',\vec{v})dt'\right\} \vec{\Psi} d\Xi dt,
\end{equation}
 and the free streaming flux
\begin{equation}\label{fluxwf}
  F_{s,\alpha}^{f}=\frac{1}{\Delta t}\int_{t^n}^{t_{n+1}}\int \vec{v}\cdot\vec{n}_s e^{-t/\tau}f_{\alpha,0}(\vec{x_0},\vec{v}) \vec{\Psi} d\Xi dt,
\end{equation}
First, we consider the equilibrium flux $F_s^{eq}$ which can be calculated directly form the macroscopic flow field.
Assume $\vec{x}_s=0$ and $t^n=0$, the equilibrium $g$ can be expanded as
\begin{equation}\label{gtaylor}
  g_\alpha(\vec{x},t,\vec{v})=g_{0,\alpha}+\nabla_{x}g_{0,\alpha}\cdot\vec{x}+\partial_tg_{0,\alpha}t,
\end{equation}
where $g_{0,\alpha}=g_\alpha(0,0,\vec{v})$.
The initial equilibrium $g_{0,\alpha}$ and its spatial and time derivatives can be obtained from the micro-macro consistency
\begin{equation}\label{mimacon}
    \int g_\alpha \vec{\Psi} d\Xi=\int_{\vec{v}\cdot\vec{n}>0} g^l_\alpha \vec{\Psi} d\Xi+\int_{\vec{v}\cdot\vec{n}<0} g^r_\alpha \vec{\Psi} d\Xi,
    \quad \int \nabla_x g_\alpha\vec{\Psi}d\Xi=\nabla_x \vec{W}_\alpha,
\end{equation}
and compatible condition
\begin{equation}\label{compatible}
\int \partial_t g_\alpha \vec{\Psi}d\Xi=-\int\vec{v}\cdot\nabla_x g_\alpha\vec{\Psi} d\Xi,
\end{equation}
where $g^l_\alpha$ and $g^r_\alpha$ are the equilibrium distributions according to the reconstructed left and right side conservative variables at cell interface $\vec{W}^l_\alpha$, $\vec{W}^r_\alpha$, and $\nabla_x \vec{W}_\alpha$ is the reconstructed spatial derivative of conservative variables at cell interface. In this paper, the van Leer limiter is used to achieve a second order accurate space reconstruction. Substitute the reconstructed equilibrium distribution Eq.\eqref{gtaylor} into the equilibrium flux Eq.\eqref{fluxwg}, and we have
\begin{equation*}
  F_{s,\alpha}^g=\int \vec{v}\cdot\vec{n}_s\left(C_1g_{0,\alpha}+C_2\vec{v}\cdot \nabla_x g_{0,\alpha}+C_3\partial_t g_{0,\alpha}\right)\vec{\Psi}d\Xi,
\end{equation*}
where the time integration coefficients are
 \begin{equation*}
   \begin{aligned}
     &C_1=1-\frac{\tau_\alpha}{\Delta t}\left(1-e^{-\Delta t/\tau_\alpha}\right),\\
     &C_2=-\tau_\alpha+\frac{2\tau_\alpha^2}{\Delta t}-e^{-\Delta t/\tau_\alpha}\left(\frac{2\tau_\alpha^2}{\Delta t}+\tau_\alpha\right),\\
     &C_3=\frac12\Delta t-\tau_\alpha+\frac{\tau_\alpha^2}{\Delta t}\left(1-e^{-\Delta t/\tau_\alpha}\right).
   \end{aligned}
 \end{equation*}
Next we consider the free stream flux $F_{s,\alpha}^f$. As stated in the last subsection, the initial distribution is represented partially by an analytical distribution $g_\alpha^{+,c}$, and partially by particles, and therefore the free stream flux $F_{s,\alpha}^f$ is also calculated partially from the reconstructed analytical distribution as $F_{s,\alpha}^{f,w}$, and partially from particles as $F_{s,\alpha}^{f,p}$.
The initial analytical distribution $g_{\alpha}^{+,c}$ is reconstructed as
\begin{equation}\label{f-reconst}
  g_{0,\alpha}^{+,c}(\vec{x},\vec{v})=g_{0,\alpha}^{+,c}+\nabla_x g_{0,\alpha}^{+,c}\cdot\vec{x},
\end{equation}
which gives
\begin{equation*}
  F_{s,\alpha}^{f,w}=\int \vec{v}\cdot\vec{n}\left(C_4 g^+_{0,\alpha}+C_5\vec{v}\cdot\nabla_xg^+_{0,\alpha}\right)\vec{\Psi}d\Xi,
\end{equation*}
where the time integration coefficients are
 \begin{equation*}
   \begin{aligned}
     &C_4=\frac{\tau_\alpha}{\Delta t}\left(1-e^{-\Delta t/\tau_\alpha}\right)-\Delta t\left(1-e^{-\Delta t/\tau_\alpha}\right),\\
     &C_5=\tau e^{-\Delta t/\tau_\alpha}-\frac{\tau_\alpha^2}{\Delta t}\left(1-e^{-\Delta t/\tau_\alpha}\right)
     -\frac{\Delta t^2}{2}\left(1-e^{-\Delta t/\tau_\alpha}\right).
   \end{aligned}
 \end{equation*}
The net particle flux $F_{s,\alpha}^{f,p}$ is calculated as
\begin{equation*}
  F_{s,\alpha}^{f,p}=\frac{1}{\Delta t}\left(\sum_{k\in P_{\partial \Omega_i^+},\alpha}\vec{W}_{P_{k,\alpha}}-\sum_{k\in P_{\partial \Omega_i^-},\alpha}\vec{W}_{P_{k,\alpha}}\right),
\end{equation*}
where $\vec{W}_{P_{k,\alpha}}=\left(m_{k,\alpha},m_{k,\alpha}\vec{v}_{k,\alpha},\frac12m_{k,\alpha}\vec{v}^2_{k,\alpha}\right)$, $P_{\partial \Omega_i^-,\alpha}$ is the index set of the particles stream out cell $\Omega_i$ during a time step, and
$P_{\partial \Omega_i^+,\alpha}$ is the index set of the particles stream in cell $\Omega_i$.
Finally, the finite volume scheme for conservative variables is
\begin{equation}\label{UGKWP-w}
  \vec{W}^{n+1}_{i,\alpha}=\vec{W}^{n}_{i,\alpha}-\sum_s \frac{\Delta t}{|\Omega_i|}|l_s|F^{eq}_{s,\alpha}-\sum_s \frac{\Delta t}{|\Omega_i|}|l_s|F^{fr,w}_{s,\alpha}
  +\frac{\Delta t}{|\Omega_i|}F_{s,\alpha}^{f,p}
  +\frac{\Delta t}{\tau}(\vec{W}^{* n+1}_{i,\alpha}-\vec{W}^{n+1}_{i,\alpha})
\end{equation}
To solve $\vec{W}^{n+1}_{i,\alpha}$ from Eq.\eqref{UGKWP-w}, the following two linear system needs to be solved. The first is the $m\times m$ linear system for $m$ species velocity vector
$\vec{V}^{n+1}_{i}=(\vec{U}_{i,1}^{n+1},\vec{U}_{i,2}^{n+1},...,\vec{U}_{i,m}^{n+1})$,
\begin{equation*}
A_i\vec{V}^{n+1}_i=B_i,
\end{equation*}
where
$B_{\alpha,i}=\rho_{i,\alpha}^n\vec{U}^n_{i,\alpha}-\sum_s\frac{\Delta t }{|\Omega_{i}|}|l_s|F_{s,\alpha}^{\rho u},$
and the matrix $A_i$ read
\begin{equation*}
\begin{aligned}
  &(A_i)_{\alpha \alpha}=\rho_{i,\alpha}^{n+1}+2\Delta tn^{n+1}_{i,\alpha}\sum_{\beta=1 \atop \beta\neq\alpha}^m \mu_{\alpha \beta}\kappa_{\alpha \beta}n_{i,\beta}^{n+1}\\
  &(A_i)_{\alpha \beta}=-2\Delta tn_{i,\alpha}^{n+1}\mu_{\alpha \beta}\kappa_{\alpha \beta}n_{i,\beta}^{n+1}.
\end{aligned}
\end{equation*}
The second $m\times m$ linear system is for $m$ species internal energies $\vec{e}^{n+1}_i=(e^{n+1}_{i,1},e^{n+1}_{i,2},...e^{n+1}_{i,m})$
\begin{equation*}
C_i\vec{e}^{n+1}_i=D_i,
\end{equation*}
where
\begin{equation*}
\begin{aligned}
  D_{i,\alpha}=&E_{i,\alpha}^n-\sum_s\frac{\Delta t }{|\Omega_{i}|}|l_s|F_{s,\alpha}^{E}
  -\frac12\rho_{i,\alpha}^{n+1}(\vec{U}_{i,\alpha}^{n+1})^2
  +\frac{\Delta t \rho_{i,\alpha}^{n+1}}{2\tau_{i,\alpha}^{n+1}}
  \left((\vec{U}_{i,\alpha}^{* n+1})^2-(\vec{U}_{i,\alpha}^{n+1})^2\right)\\
  &-\frac{\Delta t \rho^{n+1}_{i,\alpha}}{2\tau_{i,\alpha}^{n+1}}
  (\vec{U}_{i,\alpha}^{* n+1}-\vec{U}_{i,\alpha}^{n+1})^2
  +\Delta t n_{i,\alpha}^{n+1} \sum_{\beta=1 \atop \beta\neq\alpha}^{m}\mu_{\alpha \beta}\kappa_{\alpha \beta}
  \frac{2\rho_{i,\beta}^{n+1}}{m_\alpha+m_\beta}(\vec{U}_{i,\alpha}^{ n+1}-\vec{U}_{i,\alpha}^{n+1})^2,
\end{aligned}
\end{equation*}
and
\begin{equation*}
\begin{aligned}
  &(C_i)_{\alpha \alpha}=n_{i,\alpha}^{n+1}+\Delta t n_{i,\alpha}^{n+1} \sum_{\beta=1 \atop \beta\neq\alpha}^{m}\mu_{\alpha \beta}\kappa_{\alpha \beta}\frac{4n_{i,\beta}^{n+1}}{m_\alpha+m_\beta},\\
  &(C_i)_{\alpha \beta}=-\Delta t n_{i,\alpha}^{n+1}\mu_{\alpha \beta} \kappa_{\alpha \beta}\frac{4n_{i,\beta}^{n+1}}{m_\alpha+m_\beta}.
\end{aligned}
\end{equation*}
Under the assumption of non-vacuum solutions ($\rho_{i,\alpha}^n>0$), each system admits a unique solution.
The evolution of the microscopic velocity distribution and macroscopic quantities compose the UGKWP method for multi-species gas mixture.

\subsection{UGKWP method for plasma transport}
In this subsection, the UGKWP method for plasma transport will be proposed, which is the UGKWP method for multi-species coupled with the electromagnetic field.
We split the BGK-Maxwell equations into the transport equations and the interaction equations.
The transport equations including the electron ion transport and electromagnetic wave transport can be written as
\begin{equation*}
\begin{aligned}
  &\partial_t f_\alpha
  +\vec{v}\cdot \nabla_x f_{\alpha}
 =\frac{g_\alpha-f_\alpha}{\tau_\alpha},\\
  &\frac{\partial \vec{B}}{\partial t}+\nabla_x \times \vec{E}=0,\\
  &\frac{\partial \vec{E}}{\partial t}-c^2\nabla_x \times \vec{B}=0,
\end{aligned}
\end{equation*}
and the interaction equations are
\begin{equation}\label{interaction}
\begin{aligned}
  &\partial_t f_\alpha
  +\frac{e_\alpha}{ m_\alpha r}(\vec{E}+\vec{v}\times \vec{B})
  \cdot\nabla_v f_{\alpha}=0,\\
  &\frac{\partial \vec{E}}{\partial t}=
  -\frac{1}{\hat{\lambda}_D^2 r} \vec{j}.
\end{aligned}
\end{equation}
In the next two subsections, the numerical evolution equations for the transport equations and interaction equations will be presented respectively.

\subsubsection{Evolution equations for the transport equations}
In the transport equations, the electron and ion transport is decoupled from the electromagnetic wave transport. The numerical  evolution equation for the electron and ion transport is the UGKWP method presented in Section \ref{ugkwp-mixture}. The Yee-grid based Crank-Nicolson scheme proposed by Yang el al. is used as the evolution equation for the electromagnetic wave transport \cite{yang2006unconditionally}.
The semi-implicit discretization of transverse electric wave equation on Yee mesh can be written as
\begin{equation}\label{TE1}
\begin{aligned}
  E_x^{n+1}\left(x_{i+\frac12},y_j\right)=
  &E^n_x\left(x_{i+\frac12},y_j\right)+
  \frac{\Delta t c^2}{2\Delta y}
  \left(B_z^{n+1}(x_{i+\frac12},y_{j+\frac12})-
  B_z^{n+1}(x_{i+\frac12},y_{j-\frac12})\right)\\
  &+\frac{\Delta t c^2}{2\Delta y}\left(B_z^{n}(x_{i+\frac12},y_{j+\frac12})-
  B_z^{n}(x_{i+\frac12},y_{j-\frac12})\right),
\end{aligned}
\end{equation}
\begin{equation}\label{TE2}
\begin{aligned}
  E_y^{n+1}\left(x_i,y_{i+\frac12}\right)=
  &E^n_x\left(x_i,y_{j+\frac12}\right)-
  \frac{\Delta t c^2}{2\Delta x}
  \left(B_z^{n+1}(x_{i+\frac12},y_{j+\frac12})-
  B_z^{n+1}(x_{i-\frac12},y_{j+\frac12})\right)\\
  &-\frac{\Delta t c^2}{2\Delta x}
  \left(B_z^{n}(x_{i+\frac12},y_{j+\frac12})-
  B_z^{n}(x_{i-\frac12},y_{j+\frac12})\right).
\end{aligned}
\end{equation}
And the semi-implicit discretization of magnetic wave equation is
\begin{equation}\label{TE3}
\begin{aligned}
  B_z^{n+1}\left(x_{i+\frac12},y_{i+\frac12}\right)=
  &B_z^{n+1}\left(x_{i+\frac12},y_{i+\frac12}\right)+
  \frac{\Delta t}{2\Delta y}
  \left(E_x^{n+1}(x_{i+\frac12},y_{j+1})-
  E_x^{n+1}(x_{i+\frac12},y_{j})\right)\\
  &+\frac{\Delta t}{2\Delta y}
  \left(E_x^{n}(x_{i+\frac12},y_{j+1})-
  E_x^{n}(x_{i+\frac12},y_{j})\right)\\
  &-\frac{\Delta t}{2\Delta x}
  \left(E_y^{n+1}(x_{i+1},y_{j+\frac12})-
  E_y^{n+1}(x_{i},y_{j+\frac12})\right)\\
  &+\frac{\Delta t}{2\Delta x}
  \left(E_y^{n}(x_{i+1},y_{j+\frac12})-
  E_y^{n}(x_{i},y_{j+\frac12})\right)
\end{aligned}
\end{equation}
Substituting Eq.\eqref{TE1} and Eq.\eqref{TE2} into Eq.\eqref{TE3}, an implicit equation for $B_z$ can be derived as
\begin{equation*}
\left[1-\frac{c^2\Delta t^2}{4}(D_{2x}+D_{2y})\right]B^{n+1}_z(x_{i+\frac12},y_{j+\frac12})=
\left[1+\frac{c^2\Delta t^2}{4}(D_{2x}+D_{2y})\right]B^{n}_z(x_{i+\frac12},y_{j+\frac12})+f(E_x^n,E_y^n),
\end{equation*}
which can be effectively solved by Douglas-Gunn algorithm.
The advantage of the Yee-grid based Crank-Nicolson scheme is that the divergence constraint of the Maxwell equation is numerically preserved; the dispersion and dissipation error is lower than the FDTD method; and the scheme is unconditionally stable, which removes the CFL constraint on time step.
\subsubsection{Evolution equations for the interaction equations}
Taking conservative moments on Eq.\eqref{interaction}, one gets the macroscopic interaction equations
\begin{equation*}
\left\{
\begin{aligned}
  &\frac{\rho_\alpha\vec{U}}{\partial t}=\frac{e_\alpha n_\alpha}{r}(\vec{E}+\vec{U}\times\vec{B}),\\
  &\frac{\partial \vec{E}}{\partial t}=
  -\frac{1}{\hat{\lambda}_D^2 r} \vec{j}.
\end{aligned}
\right.
\end{equation*}
The implicit discretization of the macroscopic interaction equations gives the following linear system,
\begin{equation}\label{implicit1}
\left\{
\begin{aligned}
&\rho^{n+1}_i \vec{U}_i^{n+1}-\rho^{n+1}_i \vec{U}_i^{n}=\frac{\Delta t}{r}n_i^{n+1}( \vec{E}^{n+1}+ \vec{U}^{n+1}_i \times \vec{B}^{n+1}),\\
&\rho^{n+1}_e \vec{U}_e^{n+1}-\rho^{n+1}_e \vec{U}_e^{n}=-\frac{\Delta t}{r}n_e^{n+1}(\vec{E}^{n+1}+\vec{U}^{n+1}_e \times \vec{B}^{n+1}),\\
&\vec{E}^{n+1}-\vec{E}^n=-\frac{\Delta t}{\lambda^2_D r}\left(\vec{j}_i^{n+1}+\vec{j}_e^{n+1}\right),
\end{aligned}
\right.
\end{equation}
from which the electromagnetic field and macroscopic flow variables are updated to $t^{n+1}$, and the velocity of the simulation particles is updated by
\begin{equation*}
\vec{v}^{n+1}_{k,\alpha}=\vec{v}^{n}_{k,\alpha}+\frac{\Delta te_\alpha}{ m_\alpha r}(\vec{E}^{n+1}+\vec{v}_{k,\alpha}\times \vec{B}^{n+1}).
\end{equation*}
The evolutions of the transport equations and interaction equations compose of the UGKWP method for plasma transport.

\section{Analysis and discussion}\label{discussion}
\subsection{Unified preserving and asymptotic complexity diminishing properties of UGKWP method}
In this section, the multiscale property of UGKWP method will be discussed, and the computational complexity will be estimated. Guo et al. proposes the unified preserving property which assesses the accuracy of a kinetic scheme in continuum regime \cite{guo2019unified}. Crestetto et al. proposes the asymptotic complexity diminishing property of a kinetic scheme which assesses the computational complexity of a kinetic scheme in continuum regime \cite{crestetto2019asymptotically}. In the following proposition, we show that the UGKWP method is a second order UP scheme and an asymptotic complexity diminishing scheme.
\begin{proposition}\label{up}
Holding the mesh size and time step, the UGKWP method satisfies:
\begin{enumerate}
  \item The scheme degenerates to collisionless Boltzmann equation as the local relaxation parameter $\tau\to\infty$.
  \item The scheme becomes a second order scheme for Navier-Stokes equations $\tau\to0$.
  \item The total degree of freedom of the scheme $N_f\to N_f^h$ as $\tau\to0$, where $N_f^h$ is the freedom of the hydrodynamic equations.
\end{enumerate}
\end{proposition}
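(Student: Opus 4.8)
The plan is to read off the three claims from the $\tau$-asymptotics of the two coupled ingredients of the scheme: the stochastic particle evolution governed by Eqs.\eqref{particlems}--\eqref{g+} together with the free-streaming-time law Eq.\eqref{tc-distribution}, and the finite-volume update Eq.\eqref{UGKWP-w} with its three flux pieces $F^{eq}_{s,\alpha}$, $F^{fr,w}_{s,\alpha}$ (built from the analytically stored distribution $g^{+,c}_\alpha$), and $F^{f,p}_{s,\alpha}$ (carried by the simulation particles). With $\Delta x$ and $\Delta t$ fixed, everything is controlled by the single exponential factor $e^{-\Delta t/\tau}$ and by the time-integration coefficients $C_1,\dots,C_5$; for the plasma version the Yee/Crank--Nicolson Maxwell update and the implicit interaction step Eq.\eqref{implicit1} are untouched by $\tau$, so it suffices to treat the transport/collision part species by species.

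\textbf{Claim 1 ($\tau\to\infty$).} First I would note $e^{-\Delta t/\tau}\to1$, so by Eq.\eqref{tc-distribution} the sampled time $t_f=-\tau\ln\eta\to\infty$ a.s. and $\min(\Delta t,t_f)=\Delta t$: every particle free-streams across the whole step and none is re-sampled; moreover the analytically stored fraction $g^{+,c}_\alpha=(1-e^{-\Delta t/\tau})g^{+}_\alpha\to0$, so $F^{fr,w}_{s,\alpha}\to0$. Expanding $C_1=1-\tfrac{\tau}{\Delta t}(1-e^{-\Delta t/\tau})$ and likewise $C_2,C_3$ for large $\tau$ gives $C_1,C_2,C_3=O(\Delta t/\tau)\to0$, hence $F^{eq}_{s,\alpha}\to0$, and the relaxation source $\tfrac{\Delta t}{\tau}(\vec{W}^{*\,n+1}_{i,\alpha}-\vec{W}^{n+1}_{i,\alpha})\to0$. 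Thus Eq.\eqref{UGKWP-w} collapses to $\vec{W}^{n+1}_{i,\alpha}=\vec{W}^{n}_{i,\alpha}+\tfrac{\Delta t}{|\Omega_i|}F^{f,p}_{s,\alpha}$, which is exactly the conservative bookkeeping of the freely transported particles, while Eq.\eqref{particlems} gives $f_\alpha\to f_{\alpha,0}(\vec{x}_0,\vec{v})$. Hence the scheme is a consistent particle discretization of the collisionless Boltzmann (free-transport) equation.

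\textbf{Claims 2 and 3 ($\tau\to0$).} Here $\Delta t/\tau\to\infty$, so $e^{-\Delta t/\tau}$ and each of its polynomial multiples vanish faster than any power of $\tau$. The collisional probability $1-e^{-\Delta t/\tau}\to1$, so with probability one every particle is labelled collisional and deleted, while the re-sampled fraction $g^{+,f}_\alpha=e^{-\Delta t/\tau}g^{+}_\alpha\to0$: no particles are created, the particle count $N_p\to0$, and the total degrees of freedom $N_f$ reduce to the $N_f^h$ of the stored macroscopic field, which is Claim 3. For Claim 2, the interface distribution Eq.\eqref{particlems} reduces to $g^{+}_\alpha=g_\alpha+\big(\tfrac{te^{-t/\tau}}{1-e^{-t/\tau}}-\tau\big)(\partial_tg_\alpha+\vec v\cdot\nabla_xg_\alpha)=g_\alpha-\tau(\partial_tg_\alpha+\vec v\cdot\nabla_xg_\alpha)+O(\tau^2)$, i.e. the Chapman--Enskog distribution Eq.\eqref{ceexpansion} (up to the mixture averaging in $\bar g_\alpha$). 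Expanding for small $\tau$ one finds $C_1=1-\tau/\Delta t+o(\tau)$, $C_2=-\tau+o(\tau)$, $C_3=\tfrac12\Delta t-\tau+o(\tau)$, with $C_4,C_5$ tending to finite limits; adding $F^{eq}_{s,\alpha}$ and $F^{fr,w}_{s,\alpha}$ (now with $F^{f,p}_{s,\alpha}=0$) reproduces, term by term, the standard second-order gas-kinetic Navier--Stokes interface flux written for each species. Together with the exchange terms encoded in the linear systems $A_i\vec V^{n+1}_i=B_i$ and $C_i\vec e^{n+1}_i=D_i$ --- which in the $\tau\to0$ limit reduce to the momentum/energy relaxations $S_\alpha,Q_\alpha$ and hence to the mass-diffusion flux $\vec J_\alpha$, stress $\sigma$ and heat flux $\vec q$ of Eq.\eqref{aap-ns} --- this identifies Eq.\eqref{UGKWP-w} with a second-order finite-volume discretization of the mixture Navier--Stokes system Eq.\eqref{aap-ns}.

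\textbf{Main obstacle.} Claims 1 and 3 follow almost directly from the exponential factor; the real work is Claim 2. One must verify that the \emph{combination} $F^{eq}_{s,\alpha}+F^{fr,w}_{s,\alpha}$, with $g^{+,c}_\alpha$ reconstructed from the collisional moments $\vec W^h_{i,\alpha}$ rather than from the full field, still matches a known second-order Navier--Stokes flux, and that the remainders (the $O(\tau^2)$ part of $g^+_\alpha$, the exponentially small $e^{-\Delta t/\tau}$ contributions, and the implicitly defined $\vec W^{*\,n+1}_{i,\alpha}$ in the source) do not spoil second-order accuracy uniformly as $\tau\to0$. The multispecies coupling adds the task of checking that $A_i,C_i$ --- invertible for non-vacuum data --- converge to the operators producing the correct cross-diffusion coefficients. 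I would borrow the single-species flux-matching analysis of \cite{liu2020unified,guo2019unified} and treat the species coupling as a bounded perturbation of it.
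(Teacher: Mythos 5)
Your proposal follows essentially the same route as the paper's proof: claim 1 via the sampled free-streaming time $t_f=-\tau\ln\eta\to\infty$ so every particle streams for the full $\Delta t$, claim 2 via the small-$\tau$ expansion of $g^+_\alpha$ and the recombination of $F^{eq}_{s,\alpha}+F^{f,w}_{s,\alpha}$ into the Chapman--Enskog (gas-kinetic NS) interface flux while $F^{f,p}_{s,\alpha}=O(e^{-\Delta t/\tau})$ vanishes, and claim 3 via the exponentially small re-sampled particle mass $e^{-\Delta t/\tau}\rho^h_\alpha|\Omega_i|$ giving $N_p\to0$. The extra checks you add (vanishing of $C_1,C_2,C_3$ and of the source term in the collisionless limit, and the limit of the coupling matrices $A_i$, $C_i$) are refinements of, not departures from, the paper's argument, so the proposal is correct and essentially matches the published proof.
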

\begin{proof}
\begin{enumerate}
  \item In the collisionless limit, we have
  \begin{equation}
    \lim_{\tau \to \infty} t_{f,\alpha}=\lim_{\tau \to \infty} (-\tau \ln(\eta))\to \infty.
  \end{equation}
  Therefore, all particles will be streamed for
  $\min(\Delta t,t_{f,\alpha})=\Delta t$.
    And the UGKWP method solves collisionless Boltzmann equation in collisionless regime.
\item In the continuum regime when $\tau\to0$, we have
 \begin{equation*}
\begin{aligned}
  g^+_\alpha(\vec{x},t,\vec{v})&=g_\alpha(\vec{x},t,\vec{v})
  +\left(\frac{te^{-t/\tau}}{1-e^{-t/\tau}}-\tau\right)
  \left(\partial_tg_\alpha(\vec{x},t,\vec{v})
  +\vec{v}\cdot\nabla_xg_\alpha(\vec{x},t,\vec{v})\right)\\
&=g_\alpha(\vec{x},t,\vec{v})
  -\tau\left(\partial_tg_\alpha(\vec{x},t,\vec{v})
  +\vec{v}\cdot\nabla_xg_\alpha(\vec{x},t,\vec{v})\right)+O(e^{-\Delta t/\tau})\\
&=g_\alpha(\vec{x},t,\vec{v})
  -\tau\left(\partial_t\bar{g}(\vec{x},t,\vec{v})
  +\vec{v}\cdot\nabla_x\bar{g}(\vec{x},t,\vec{v})\right)+O(\tau^2)
\end{aligned}
\end{equation*}
The analytic flux $F^{an}_\alpha$ of macroscopic variables, namely the equilibrium flux and free streaming flux by analytic distribution function satisfies
\begin{equation*}
  \begin{aligned}
   F^{an}_\alpha=&F^{eq}_\alpha+F^{f,w}_\alpha\\
   =&\int \vec{v}\cdot\vec{n}\left(C_1g_{0,\alpha}+C_2\vec{v}\cdot \nabla_x g_{0,\alpha}+C_3\partial_t g_{0,\alpha}\right)\vec{\Psi}d\Xi
   +\int \vec{v}\cdot\vec{n}\left(C_4 g^+_\alpha+C_5\vec{v}\cdot\nabla_xg^+_\alpha\right)\vec{\Psi}d\Xi\\
  =&\int \vec{v}\cdot\vec{n}\left((C_1+C_4)g_{0,\alpha}+(C_2-\tau C_4+C_5)\vec{v}\cdot \nabla_x g_{0,\alpha}+(C_3-\tau C_4)\partial_t g_{0,\alpha}\right) \vec{\Psi}d\Xi\\
  =&\int \vec{v}\cdot\vec{n}\left(g_{0,\alpha}-\tau\vec{v}\cdot \nabla_x g_{0,\alpha}+\left(\frac12\Delta t-\tau\right)\partial_t g_{0,\alpha}\right) \vec{\Psi}d\Xi+O(e^{-\Delta t/\tau})\\
  =&\int \vec{v}\cdot\vec{n}\left(g_{0,\alpha}-\tau\vec{v}\cdot \nabla_x \bar{g}_{0}+\left(\frac12\Delta t-\tau\right)\partial_t \bar{g}_{0}\right) \vec{\Psi}d\Xi+O(\tau^2)
  \end{aligned}
 \end{equation*}
 The sampled particle mass in UGKWP method is
   $e^{-\Delta t/\tau}\rho^h_\alpha \Omega_x$
 and therefore the net free streaming flow contributed by
particles passing through the cell interface, $F_{s,\alpha}^{f,p}\sim O(e^{-\Delta t/\tau})$, diminishes.
As $\tau\to0$, Eq.\eqref{UGKWP-w} exponentially converges to
\begin{equation}\label{ugkwp-ns}
\begin{aligned}
  \vec{W}^{n+1}_{i,\alpha}=&\vec{W}^{n}_{i,\alpha}-\sum_s \frac{\Delta t}{|\Omega_i|}|l_s| \int \vec{v}\cdot\vec{n}\left(g_{0s,\alpha}-\tau(\partial_t \bar{g}_{0,s}+\vec{v}\cdot \nabla_x \bar{g}_{0,s})+\frac12\Delta t\partial_t \bar{g}_{0,s}\right) \vec{\Psi}d\Xi\\
&+\frac{\Delta t}{\tau}(\vec{W}^{* n+1}_{i,\alpha}-\vec{W}^{n+1}_{i,\alpha})
\end{aligned}
  \end{equation}
It can be observed that the numerical flux of conservative variables is consistent with the Navier-Stokes flux given by first order Chapman-Enskog expansion Eq.\eqref{ceexpansion}. Therefore in the continuum regime, the UGKWP method converges to Eq.\eqref{ugkwp-ns}, which is a second order gas-kinetic Navier-Stokes solver \cite{gks-2001}, i.e., the same as the direct macroscopic NS solver in smooth flow region.
\item As $\tau\to0$, the total mass of simulation particle $ e^{-\Delta t/\tau}\rho^h\Omega_x\to0$, and therefore the number of simulation particles $N_p\to 0$ in continuum regime. As $\tau\to0$, the total degree of freedom $N_f=N_f^h+N_p\to N_f^h$, and the UGKWP method is an asymptotic complexity diminishing scheme.
\end{enumerate}
\end{proof}

\subsection{Asymptotic preserving property of UGKWP method for plasma transport}
Property \ref{up} states that the UGKWP method for plasma transport preserves the two fluid model in the hydrodynamic regime. In this subsection, the behavior of the UGKWP method in the highly magnetized regime is discussed.
\begin{proposition}
In the highly magnetized regime as $r \to 0,\lambda_D=c^{-1}$, the linear system Eq.\eqref{implicit1} is consistent to the magnetohydrodynamic equations.
\end{proposition}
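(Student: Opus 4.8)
The plan is to pass to the formal limit $r\to0$ in the implicit interaction system \eqref{implicit1}, read it together with the transport‑step updates produced by the UGKWP method (the discrete continuity and momentum fluxes, the discrete Faraday law $\vec{B}^{n+1}-\vec{B}^n=-\Delta t\,\nabla_x\times\vec{E}$, and the transport‑step Amp\`ere law $\vec{E}^{n+1}-\vec{E}^n=\Delta t\,c^2\nabla_x\times\vec{B}$), and show that the combined scheme is a consistent discretization of \eqref{ideal-mhd} once $\lambda_D=c^{-1}$. By Proposition \ref{up} the transport part is already a consistent discretization of the two–fluid system \eqref{two-fluid}, and the excerpt's asymptotic analysis shows \eqref{Hall-MHD} reduces to \eqref{ideal-mhd} as $r\to0$ with $\lambda_D=c^{-1}$; hence it suffices to verify that \eqref{implicit1} reproduces the interaction (source) terms appearing in \eqref{Hall-MHD}, and then in \eqref{ideal-mhd}, in this limit. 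Throughout I work with the formal expansions $\vec{U}_\alpha^{n+1}=\vec{U}_\alpha^{(0)}+r\,\vec{U}_\alpha^{(1)}+\cdots$ and $\vec{E}^{n+1}=\vec{E}^{(0)}+r\,\vec{E}^{(1)}+\cdots$, and use quasineutrality $n_i^{n+1}=n_e^{n+1}=:n^{n+1}$ at leading order, inherited from Gauss' law exactly as in the continuum passage from \eqref{two-fluid} to \eqref{Hall-MHD}.

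\textbf{Step 1 (Ohm's law and the Lorentz force from the two momentum equations).} In the first two lines of \eqref{implicit1} the left‑hand sides are $O(1)$, so the brackets on the right must be $O(r)$; at leading order this forces $\vec{E}^{(0)}+\vec{U}_i^{(0)}\times\vec{B}^{n+1}=0$ and $\vec{E}^{(0)}+\vec{U}_e^{(0)}\times\vec{B}^{n+1}=0$. Subtracting gives $(\vec{U}_i^{(0)}-\vec{U}_e^{(0)})\times\vec{B}^{n+1}=0$, so both species share a common bulk velocity $\vec{U}^{(0)}$ (perpendicular to $\vec{B}$), and $\vec{E}^{(0)}=-\vec{U}^{(0)}\times\vec{B}^{n+1}$, the ideal Ohm's law. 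Adding the ion and electron momentum equations of \eqref{implicit1}, the $\vec{E}^{n+1}$ contributions cancel by quasineutrality and $n_i^{n+1}\vec{U}_i^{n+1}-n_e^{n+1}\vec{U}_e^{n+1}$ is proportional to $\vec{j}^{n+1}$, leaving
\[
(\rho\vec{U})^{n+1}-\bigl[\rho_i^{n+1}\vec{U}_i^{n}+\rho_e^{n+1}\vec{U}_e^{n}\bigr]=\frac{\Delta t}{r}\,\vec{j}^{n+1}\times\vec{B}^{n+1}\,(\text{up to the normalizing constants}),
\]
so boundedness of the limit requires $\vec{j}^{n+1}=O(r)$, with its $O(r)$ coefficient fixed in Step 2. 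Substituting that coefficient and combining with the momentum flux supplied by the UGKWP transport step (which contributes $\nabla_x\cdot(\rho\vec{U}\vec{U}+p\mathbb{I})$ by Proposition \ref{up}) yields a consistent discretization of the MHD momentum equation $\partial_t(\rho\vec{U})+\nabla_x\cdot(\rho\vec{U}\vec{U}+p\mathbb{I})=\nabla_x\times\vec{B}\times\vec{B}$, and the same substitution turns the $\vec{U}\cdot\vec{E}$ source of \eqref{two-fluid} into the work term $\vec{U}\cdot(\nabla_x\times\vec{B}\times\vec{B})$ of \eqref{ideal-mhd}.

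\textbf{Step 2 (the electric‑field update is asymptotically inert; induction equation).} Over one full step the transport‑step Amp\`ere law and the third line of \eqref{implicit1} combine to
\[
\vec{E}^{n+1}-\vec{E}^n=\Delta t\Bigl(c^2\nabla_x\times\vec{B}^{n+1}-\frac{1}{\lambda_D^2 r}\,\vec{j}^{n+1}\Bigr),
\]
which, read at order $r^0$ with $\lambda_D=c^{-1}$, pins $\vec{j}^{n+1}=\lambda_D^2 r\,c^2\,\nabla_x\times\vec{B}^{n+1}+O(r^2)=r\,\nabla_x\times\vec{B}^{n+1}+O(r^2)$ — exactly the coefficient used in Step 1, recovering the Lorentz force $\nabla_x\times\vec{B}\times\vec{B}$ of \eqref{Hall-MHD}–\eqref{ideal-mhd}. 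Inserting this back shows the bracket above is $O(r)$, so the displacement‑current update of $\vec{E}$ vanishes as $r\to0$ and $\vec{E}$ is pinned by the Ohm's‑law constraint of Step 1. Feeding $\vec{E}^{(0)}=-\vec{U}^{(0)}\times\vec{B}$ into the discrete Faraday law of the transport step gives a consistent discretization of the induction equation of \eqref{ideal-mhd}; together with the UGKWP continuity and energy updates (the Euler part of \eqref{two-fluid} with $\tau\to0$), the combined scheme is precisely \eqref{ideal-mhd}.

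\textbf{Main obstacle.} The one genuinely delicate point is the \emph{self‑consistent} determination of the order of $\vec{j}^{n+1}$: the $1/r$ prefactors in \eqref{implicit1} make the naive limit singular, and one has to use the momentum equations and the $\vec{E}$‑update \emph{simultaneously} — the former to conclude $\vec{j}^{n+1}=O(r)$ and to extract Ohm's law, the latter (with $\lambda_D=c^{-1}$) to fix the $O(r)$ coefficient as $\nabla_x\times\vec{B}$ — rather than sequentially. A secondary technical point is checking that \eqref{implicit1} is uniquely solvable uniformly in $r$: after eliminating $\vec{E}^{n+1}$ it is linear in $(\vec{U}_i^{n+1},\vec{U}_e^{n+1})$ with a matrix of the form $I+\tfrac{\Delta t}{r}M(\vec{B}^{n+1})$ whose skew‑symmetric (rotation‑type) block structure precludes a vanishing determinant, so the expansion above is legitimate; the quasineutrality hypothesis should be stated explicitly, as in the continuum derivation of \eqref{Hall-MHD}.
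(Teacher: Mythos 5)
Your proposal is correct and its essential step is exactly the paper's proof: combine the Crank--Nicolson Amp\`ere update $\vec{E}^{n+1}-\vec{E}^n=\Delta t\,c^2\nabla\times\vec{B}+O(\Delta t^2,\Delta x^2)$ with the third line of Eq.~\eqref{implicit1} to slave the total current to $r\,\nabla\times\vec{B}$ when $\lambda_D=c^{-1}$, then sum the ion and electron momentum equations so the interaction source becomes $\Delta t\,\nabla\times\vec{B}\times\vec{B}$, i.e.\ the MHD Lorentz force. The paper stops there, while your extra material (ideal Ohm's law from the $O(1/r)$ balance, the induction equation, explicit quasineutrality, and uniform-in-$r$ solvability of the implicit system) goes beyond the published proof but is consistent with it and at the same level of formal asymptotics.
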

\begin{proof}
The Crank-Nicolson scheme for electromagnetic wave propagation gives
\begin{equation*}
\vec{E}^{n+1}-\vec{E}^n=\Delta t c^2\nabla\times\vec{B}+O(\Delta t^2,\Delta x^2).
\end{equation*}
The implicit discretization of the macroscopic interaction equations Eq.\eqref{implicit1} gives
\begin{equation*}
\begin{aligned}
\vec{j}^{n+1}_i+\vec{j}^{n+1}_e&=\frac{\Delta t}{r}\lambda_{D}^2c^2(\vec{E}^{n+1}-\vec{E}^n)\\
&= r\nabla\times\vec{B}+O(\Delta t^2,\Delta x^2),
\end{aligned}
\end{equation*}
and therefore the total momentum equation gives
\begin{equation*}
\begin{aligned}
\rho^{n+1} \vec{U}^{n+1}-\rho^{n+1} \vec{U}^{n}&=\frac{\Delta t}{r}\left[ (\vec{j}^{n+1}_i+\vec{j}^{n+1}_e) \times \vec{B}^{n+1}\right]\\
&=\Delta t \nabla\times\vec{B}\times\vec{B}+O(\Delta t^2,\Delta x^2),
\end{aligned}
\end{equation*}
which converges to a consistent MHD scheme.
\end{proof}
\section{Numerical tests}\label{tests}
Five numerical tests are carried out in this section to verify the performance of the UGKWP method in various flow regime, including
three 1D and two 2D tests. Firstly, the shock structure of binary gas mixture is calculated to show the capability of the UGKWP method in capturing the flow non-equilibrium in the rarefied regime. The second test is the Landau damping and two steam instability, showing that the scheme can accurately capture the interaction between plasma and electromagnetic field. The Brio-Wu and Orszag-Tang tests verifies the performance of the UGKWP method in different flow regimes. In the last, the scheme is applied to the magnetic reconnection problem to study how the electron ion collision affects the reconnection rate.
\subsection{Shock structure of binary gas mixture}
Normal shock structure is a standard test that verifies the ability of the scheme in capturing the non-equilibrium effect in rarefied regime. In this test, the Mach number is set as $\text{M}=1.5$, the mass ratio of gas mixture is  $m_B/m_A=0.5$, diameter ratio $d_B/d_A=1$, and the component concentration of B is $\chi_B=0.1$. The hard sphere model is used and the reference mean free path is defined by
\begin{equation*}
\lambda_\infty=\frac{1}{\sqrt{2}\pi d_A^2n_1},
\end{equation*}
For each component, the upstream and downstream conditions are related through Rankine-Hugoniot condition.
The cell size is chosen to be $\Delta x=0.5\lambda_\infty$, and CFL number is 0.95.
The mass of simulation particle is $m_{p,\alpha}=10^{-2}$, which corresponds to around a hundred of simulation particles per cell.
The normalized density, velocity and temperature are compared to the reference UGKS solution \cite{wang2015unified}, as shown in Fig.\ref{shock}. The UGKWP results well agree with the UGKS solution, which shows the capability of UGKWP in capturing the non-equilibrium flow physics.

\subsection{Landau damping and two steam instability}
The Landau damping and two steam instability are two classical phenomenons that have been well studied theoretically, and therefore we choose these two cases to test the accuracy of UGKWP method in capturing the interaction between plasma and electromagnetic field.
First we consider the Landau damping.
Consider a Vlasov-Poisson system that perturbed by a weak signal.
The linear theory of Landau damping can be applied to predict the linear decay of electric energy with time \cite{chen1984introduction}.
The initial condition of linear Landau damping is
\begin{equation}\label{landaudamping-initial}
  f_0(x,u)=\frac{1}{\sqrt{2\pi}}\left(1+\alpha\cos(kx)\right)\text{e}^{-\frac{u^2}{2}},
\end{equation}
with $\alpha=0.01$.
The length of the domain in the x direction is $L=2\pi/k$.
The background ion distribution function is fixed,
uniformly chosen so that the total net charge density for the system is zero.
When perturbation parameter $\alpha=0.01$ is small enough,
the Vlasov-Poisson system can be approximated by linearization around the Maxwellian equilibrium.
The analytical damping rate of electric field can be derived accordingly. Numerical cell number in physical space is $N_x=128$, and the particle number in each cell is $N_p=1000$.
We test our scheme with different wave numbers and compare the numerical damping rates with theoretical values.
For wave numbers $k=0.3$ and $k=0.4$, the evolution of the $L^2$ norm electric field is plotted in Fig. \ref{Landau1}.
It can be observed that the decay rates and oscillating frequencies $\omega=1.16, 1.29$ agree well with theoretical data.

Once a larger perturbation $\alpha=0.5$ and $k=0.5$ is applied, the linear theory breaks down, and the nonlinear phenomenon occurs. The evolution of the electric energy calculated by UGKWP method is shown in Fig. \ref{Landau2} (a),
The linear decay rate of electric energy is approximately equal to $\gamma_1=-0.287$,
which agrees well to the values obtained by Heath et al. \cite{heath}.
The growth rate predicted by UGKWP method is approximately $\gamma_2=0.078$, which is between
the value of $0.0815$ computed by Rossmanith and Seal and $0.0746$ by Heath et al. \cite{rossmanith}.

Next we consider the linear two stream instability problem with initial distribution function:
\begin{equation}\label{instability-initial1}
  f(x,u,t=0)=\frac{2}{7\sqrt{2\pi}}(1+5v^2)(1+\alpha((\cos(2kx)+\cos(3kx))/1.2+\cos(kx)))\text{e}^{-\frac{u^2}{2}},
\end{equation}
with $\alpha=0.001$ and $k=0.2$.
The length of the domain in the x direction is $L=\frac{2\pi}{k}$.
The background ion distribution function is
fixed, uniformly to balance the charge density of electron.
After an initial transition, a linear growth rate of electric field can be theoretically predicted \cite{chen1984introduction}.
We apply the UGKWP method to calculate this two stream instability problem with physical cell number $N_x=512$ and simulation particle number $N_p=1000$ per cell. The electric energy result is shown in Fig. \ref{Landau2} (b), and good agreement between UGKWP solution and theoretical value can be observed. The phase space contour at $t=70$ is shown in Fig. \ref{Landau3}. Compared to the UGKS solution, the UGKWP solution provides more detailed flow structure.

\subsection{Brio-Wu shock tube}
The Brio-Wu shock tube is originally designed for MHD solvers in continuum regime. Here we calculate the Brio-Wu problem in rarefied (Kn=1), transitional (Kn=$10^{-2}$), and continuum (Kn=$10^{-4}$) regimes.
The same initial condition as the Brio-Wu one is shown in Fig. \ref{Brio-initial}.
The ion to electron mass ratio is set to be 1836, and the ionic charge state is set to be unity.
The normalized Debye length is $\lambda_D=0.001$, the normalized Larmor radius is $r=0.001$, and the normalized speed of light is $1000$.
The grid points in physical space are $N_x=1000$.
And the simulation particle mass is set as $m_p=10^{-5}$.
The UGKWP solutions in rarefied and transitional flow regimes are shown in Fig. \ref{brio1} and \ref{brio2}, and compared to the reference UGKS solution.
In Fig. \ref{brio3}, the UGKWP solution in continuum regime is compared to the reference UGKS solution.
The UGKWP solutions have good agreement with reference solutions in different flow regime.
Especially, it can be observed that the statistical noise significantly reduces as Knudsen number decreases thanks to the asymptotic complexity diminishing property of UGKWP.
In the MHD regime, the UGKWP solution is shown in Fig. \ref{brio4} and compared to the reference ideal-MHD solution.

\subsection{Orszag-Tang Vortex}
The Orszag-Tang Vortex problem was originally designed to study the MHD turbulence \cite{orszag1979small}.
In this work, the problem is calculated in rarefied (Kn=1) and continuum (Kn=$10^{-4}$) regimes to verify the multiscale and asymptotic complexity diminishing property of UGKWP.
The initial data for the current study is
\begin{equation}\nonumber
\begin{aligned}
&m_i/m_e=25, n_i=n_e=\gamma^2, P_i=P_e=\gamma, \ B_y=\sin(2x),\\ &u_{i,x}=u_{e,x}=-\sin(y), \ u_{i,y}=u_{e,y}=\sin(x),
\end{aligned}
\end{equation}
where $\gamma=5/3$ and $r=0.001$.
The computation domain is $[0,2\pi]\times[0,2\pi]$ with a uniform mesh of $200\times200$ cells.
The mass of simulation particle is $m_p=10^{-5}$.
The UGKWP result in rarefied regime is shown in Fig. \ref{orzag1} compared to the reference UGKS solution,
and the UGKWP result in continuum regime is shown in Fig. \ref{orzag2}.
A better agreement and lower noise can be observed in the continuum regime due to the asymptotic preserving and the asymptotic complexity diminishing property of UGKWP.
In the MHD regime, the UGKWP solution with Kn=$10^{-4}$ and $r=0$ are shown in Fig. \ref{orzag3}-\ref{orzag5},
where in Fig. \ref{orzag5}(b) the UGKWP pressure distribution along $y=0.625\pi$ is compared to the MHD solution \cite{tang}.

\subsection{Magnetic reconnection}
Magnetic reconnection is an important phenomenon that transfers magnetic energy into flow energy by topological change of magnetic lines. In this test case, the UGKWP method is used to study the reconnection phenomenon in different flow regime, and study how the particle collision affects the collision rate as well as the topology of magnetic line.
The simulation uses the same initial conditions as the GEM challenge problem \cite{birn2001geospace}.
The initial magnetic field is given by
\begin{equation}\nonumber
  \vec{B}(y)=B_0\tanh(y/\lambda)\vec{e}_x,
\end{equation}
and a corresponding current sheet is carried by the electrons
\begin{equation}\nonumber
  \vec{J}_e=-\frac{B_0}{\lambda} \text{sech} ^2(y/\lambda)\vec{e_z}.
\end{equation}
The initial number densities of electron and ion are
\begin{equation}\nonumber
  n_e=n_i=1/5+\text{sech}^2(y/\lambda).
\end{equation}
The electron and ion pressures are set to be
\begin{equation}\nonumber
  P_i=5P_e=\frac{5B_0}{12}n(y),
\end{equation}
where $B_0=0.1$, $m_i=25m_e$ and $\lambda=0.5$.
The computational domain is $[-L_x/2, L_x/2]\times[-L_y/2,L_y/2]$ with $L_x=8\pi$, $L_y=4\pi$, which is divided into $200\times 100$ cells.
Periodic boundaries are applied at $x=\pm L_x/2$ and conducting wall boundaries at $y=\pm L_y/2$.
To initiate reconnection, the magnetic field is perturbed with $\delta \vec{B}=\vec{e}_z\times\nabla_x\psi$, where
\begin{equation}\nonumber
  \psi(x,y)=0.1B_0\cos(2\pi x/L_x)\cos(\pi y/L_y).
\end{equation}
Two Knudsen numbers are considered, Kn=$10^{-3}$ in the transitional regime and Kn=$10^{-4}$ in the continuum regime.
The magnetic field topology as well as the distribution of flow variables in transitional regime are shown
in Fig. \ref{reconnection1}-\ref{reconnection2} at $\omega_{pi}t=15$ and $\omega_{pi}t=30$,
and the magnetic field topology in continuum regime as well as the magnetic reconnection rate are shown in Fig. \ref{reconnection3}.
In the continuum regime, the topology of the magnetic field is symmetric,
while in the transitional regime a magnetic island appears in the middle region at $\omega_p t=15$
and merges into the big right island at $\omega_p t=30$. Due to the magnetic island,
two x-shape reconnection points form and the reconnection rate in the transitional regime is significantly increased $15<\omega_p t<30$.
After $\omega_p t=30$ when the middle magnetic island merges with the right one,
the reconnection rate slows down to the same reconnection intensity as in the continuum regime, which is shown in Fig. \ref{reconnection3}(b).

\section{Conclusion}\label{conclusion}
In this work, we extend the unified gas-kinetic wave-particle method to the field
of multi-species gas mixture and multiscale plasma transport. The construction of numerical scheme for multiscale transport is
based on the direct modeling methodology \cite{xu2015direct}, where the flow physics is modeled according to the cell size and time
step scales. In the unified framework, the evolution of
 microscopic velocity distribution function is coupled with the evolution of macroscopic quantities in a discretized space.
The evolution solution of microscopic distribution function is modeled from the accumulating effect of particle transport and collision
within a time step, from which numerical fluxes for both macroscopic flow variables and particle distribution function are obtained.
The intrinsic governing equation underlying the unified scheme depends on the local cell's Knudsen number.
A smooth transition from the kinetic particle transport to the continuum hydrodynamic flow evolution can be recovered with the
variation of the cell's Knudsen number.
For the multispecies and plasma transport, the UGKWP has the properties of  second order unified preserving as well as the computational complexity asymptotic  dimensioning. In plasma transport, the UGKWP method provides a smooth transition from PIC method in the kinetic scale to the
MHD flow solver in the continuum regime, all all kinds of MHD equations, such as the two fluid models, become subsets in the UGKWP modeling.
Compared to the discrete velocity method (DVM), the UGKWP is much efficient in the numerical
 simulation of highly non-equilibrium and high-dimensional flow problems.
 In conclusion, the UGKWP method has great potential to solve  multiscale transport problems in rarefied flow \cite{liu2020unified,zhu2019unified}, radiative transfer \cite{li2020unified,shi}, and plasma physics.

\section*{Acknowledgment}
The current research is supported by National Numerical Windtunnel project and  National Science Foundation of China 11772281, 91852114.

\section*{References}
\bibliography{ugkwpplasma}

\begin{thebibliography}{10}
\expandafter\ifx\csname url\endcsname\relax
  \def\url#1{\texttt{#1}}\fi
\expandafter\ifx\csname urlprefix\endcsname\relax\def\urlprefix{URL }\fi
\expandafter\ifx\csname href\endcsname\relax
  \def\href#1#2{#2} \def\path#1{#1}\fi

\bibitem{chapman1990mathematical}
S.~Chapman, T.~G. Cowling, D.~Burnett, The mathematical theory of non-uniform
  gases: an account of the kinetic theory of viscosity, thermal conduction and
  diffusion in gases, Cambridge university press, 1990.

\bibitem{mccormack1973construction}
F.~J. McCormack, Construction of linearized kinetic models for gaseous mixtures
  and molecular gases, The Physics of Fluids 16~(12) (1973) 2095--2105.

\bibitem{AAP}
P.~Andries, K.~Aoki, B.~Perthame, A consistent {BGK}-type model for gas
  mixtures, Journal of Statistical Physics 106~(5-6) (2002) 993--1018.

\bibitem{brull2015ellipsoidal}
S.~Brull, An ellipsoidal statistical model for gas mixtures, Communications in
  Mathematical Sciences 13~(1) (2015) 1--13.

\bibitem{liu2016asymptotic}
S.~Liu, Y.~Liang, Asymptotic-preserving boltzmann model equations for binary
  gas mixture, Physical Review E 93~(2) (2016) 023102.

\bibitem{chen1984introduction}
F.~F. Chen, et~al., Introduction to plasma physics and controlled fusion,
  Vol.~1, Springer, 1984.

\bibitem{hakim2006high}
A.~Hakim, J.~Loverich, U.~Shumlak, A high resolution wave propagation scheme
  for ideal two-fluid plasma equations, Journal of Computational Physics
  219~(1) (2006) 418--442.

\bibitem{xu1999gas}
K.~Xu, Gas-kinetic theory-based flux splitting method for ideal
  magnetohydrodynamics, Journal of Computational Physics 153~(2) (1999)
  334--352.

\bibitem{liu2020aia}
C.~Liu, K.~Xu, A unified gas-kinetic scheme for micro flow simulation based on
  linearized kinetic equation, Advances in Aerodynamics 2:21 (2020).
\newblock \href {http://dx.doi.org/https://doi.org/10.1186/s42774-020-00045-8}
  {\path{doi:https://doi.org/10.1186/s42774-020-00045-8}}.

\bibitem{jin2010micro}
S.~Jin, Y.~Shi, A micro-macro decomposition-based asymptotic-preserving scheme
  for the multispecies boltzmann equation, SIAM Journal on Scientific Computing
  31~(6) (2010) 4580--4606.

\bibitem{zhang2018discrete}
Y.~Zhang, L.~Zhu, R.~Wang, Z.~Guo, Discrete unified gas kinetic scheme for all
  knudsen number flows. {III}. binary gas mixtures of maxwell molecules,
  Physical Review E 97~(5) (2018) 053306.

\bibitem{zhang2019discrete}
Y.~Zhang, L.~Zhu, P.~Wang, Z.~Guo, Discrete unified gas kinetic scheme for
  flows of binary gas mixture based on the mccormack model, Physics of Fluids
  31~(1) (2019) 017101.

\bibitem{brull2020local}
S.~Brull, C.~Prigent, Local discrete velocity grids for multi-species rarefied
  flow simulations.

\bibitem{qiu2010conservative}
J.-M. Qiu, A.~Christlieb, A conservative high order semi-lagrangian weno method
  for the vlasov equation, Journal of Computational Physics 229~(4) (2010)
  1130--1149.

\bibitem{liu2017unified}
C.~Liu, K.~Xu, A unified gas kinetic scheme for continuum and rarefied flows
  {V}: multiscale and multi-component plasma transport, Communications in
  Computational Physics 22~(5) (2017) 1175--1223.

\bibitem{thomasopen}
J.~Thomas, W.~Craig, K.~Matthew, C.~Rodrigo, F.~Erin, D.~Iain, M.~Jason,
  E.~Richard, Open source dsmc chemistry modelling for hypersonic flows, AIAA
  Journal. ISSN  0001--1452.

\bibitem{tskhakaya2007particle}
D.~Tskhakaya, K.~Matyash, R.~Schneider, F.~Taccogna, The particle-in-cell
  method, Contributions to Plasma Physics 47~(8-9) (2007) 563--594.

\bibitem{degond2010asymptotic}
P.~Degond, F.~Deluzet, L.~Navoret, A.-B. Sun, M.-H. Vignal,
  Asymptotic-preserving particle-in-cell method for the vlasov--poisson system
  near quasineutrality, Journal of Computational Physics 229~(16) (2010)
  5630--5652.

\bibitem{xu2010unified}
K.~Xu, J.-C. Huang, A unified gas-kinetic scheme for continuum and rarefied
  flows, Journal of Computational Physics 229~(20) (2010) 7747--7764.

\bibitem{xu2015direct}
K.~Xu, Direct modeling for computational fluid dynamics: construction and
  application of unified gas-kinetic schemes, World Scientific, 2015.

\bibitem{sun2015asymptotic}
W.~Sun, S.~Jiang, K.~Xu, S.~Li, An asymptotic preserving unified gas kinetic
  scheme for frequency-dependent radiative transfer equations, Journal of
  Computational Physics 302 (2015) 222--238.

\bibitem{liu2019unified}
C.~Liu, Z.~Wang, K.~Xu, A unified gas-kinetic scheme for continuum and rarefied
  flows vi: Dilute disperse gas-particle multiphase system, Journal of
  Computational Physics 386 (2019) 264--295.

\bibitem{shuang2019parallel}
T.~Shuang, S.~Wenjun, W.~Junxia, N.~Guoxi, A parallel unified gas kinetic
  scheme for three-dimensional multi-group neutron transport, Journal of
  Computational Physics 391 (2019) 37--58.

\bibitem{liu2020unified}
C.~Liu, Y.~Zhu, K.~Xu, Unified gas-kinetic wave-particle methods {I}: Continuum
  and rarefied gas flow, Journal of Computational Physics 401 (2020) 108977.

\bibitem{zhu2019unified}
Y.~Zhu, C.~Liu, C.~Zhong, K.~Xu, Unified gas-kinetic wave-particle methods.
  {II}. multiscale simulation on unstructured mesh, Physics of Fluids 31~(6)
  (2019) 067105.

\bibitem{li2020unified}
W.~Li, C.~Liu, Y.~Zhu, J.~Zhang, K.~Xu, Unified gas-kinetic wave-particle
  methods {III}: Multiscale photon transport, Journal of Computational Physics
  408 (2020) 109280.

\bibitem{sirovich1962kinetic}
L.~Sirovich, Kinetic modeling of gas mixtures, The Physics of Fluids 5~(8)
  (1962) 908--918.

\bibitem{garzo1989kinetic}
V.~Garz{\'o}, A.~Santos, J.~Brey, A kinetic model for a multicomponent gas,
  Physics of Fluids A: Fluid Dynamics 1~(2) (1989) 380--383.

\bibitem{asinari2008consistent}
P.~Asinari, L.-S. Luo, A consistent lattice boltzmann equation with baroclinic
  coupling for mixtures, Journal of Computational Physics 227~(8) (2008)
  3878--3895.

\bibitem{groppi2011kinetic}
M.~Groppi, S.~Monica, G.~Spiga, A kinetic ellipsoidal {BGK} model for a binary
  gas mixture, EPL (Europhysics Letters) 96~(6) (2011) 64002.

\bibitem{gross1956model}
E.~P. Gross, M.~Krook, Model for collision processes in gases: Small-amplitude
  oscillations of charged two-component systems, Physical Review 102~(3) (1956)
  593.

\bibitem{morse1963energy}
T.~Morse, Energy and momentum exchange between nonequipartition gases, The
  Physics of Fluids 6~(10) (1963) 1420--1427.

\bibitem{yang2006unconditionally}
Y.~Yang, R.~Chen, E.~K. Yung, The unconditionally stable crank nicolson fdtd
  method for three-dimensional maxwell's equations, Microwave and Optical
  Technology Letters 48~(8) (2006) 1619--1622.

\bibitem{guo2019unified}
Z.~Guo, J.~Li, K.~Xu, On unified preserving properties of kinetic schemes,
  arXiv preprint arXiv:1909.04923.

\bibitem{crestetto2019asymptotically}
A.~Crestetto, N.~Crouseilles, G.~Dimarco, M.~Lemou, Asymptotically complexity
  diminishing schemes {(ACDS)} for kinetic equations in the diffusive scaling,
  Journal of Computational Physics 394 (2019) 243--262.

\bibitem{gks-2001}
K.~Xu, A gas-kinetic {BGK} scheme for the {Navier--Stokes} equations and its
  connection with artificial dissipation and {Godunov} method, Journal of
  Computational Physics 171~(1) (2001) 289--335.

\bibitem{wang2015unified}
R.~Wang, Unified gas-kinetic scheme for the study of non-equilibrium flows,
  Ph.D. thesis, Hong Kong University of Science and Technology (2015).

\bibitem{heath}
R.~Heath, I.~Gamba, P.~Morrison, C.~Michler, A discontinuous galerkin method
  for the vlasov–poisson system, Journal of Computational Physics 231 (2012)
  1140–1174.

\bibitem{rossmanith}
J.~A. Rossmanith, D.~C. Seal, A positivity-preserving high-order
  semi-lagrangian discontinuous galerkin scheme for the vlasov–poisson
  equations, Journal of Computational Physics 230 (2011) 6203--6232.

\bibitem{orszag1979small}
S.~A. Orszag, C.-M. Tang, Small-scale structure of two-dimensional
  magnetohydrodynamic turbulence, Journal of Fluid Mechanics 90~(1) (1979)
  129--143.

\bibitem{tang}
H.-Z. Tang, K.~Xu, A high-order gas-kinetic method for multidimensional ideal
  magnetohydrodynamics, Journal of Computational Physics 165 (2000) 69–88.

\bibitem{birn2001geospace}
J.~Birn, J.~Drake, M.~Shay, B.~Rogers, R.~Denton, M.~Hesse, M.~Kuznetsova,
  Z.~Ma, A.~Bhattacharjee, A.~Otto, et~al., Geospace environmental modeling
  {(GEM)} magnetic reconnection challenge, Journal of Geophysical Research:
  Space Physics 106~(A3) (2001) 3715--3719.

\bibitem{shi}
Y.~Shi, P.~Song, W.~Sun, An asymptotic preserving unified gas kinetic particle
  method for radiative transfer equations, Journal of Computational Physics 420
  (2020) 109687.

\end{thebibliography}
\bibliographystyle{elsarticle-num}
\biboptions{numbers,sort&compress}

\clearpage

\begin{figure}
\centering
\includegraphics[width=0.9\textwidth]{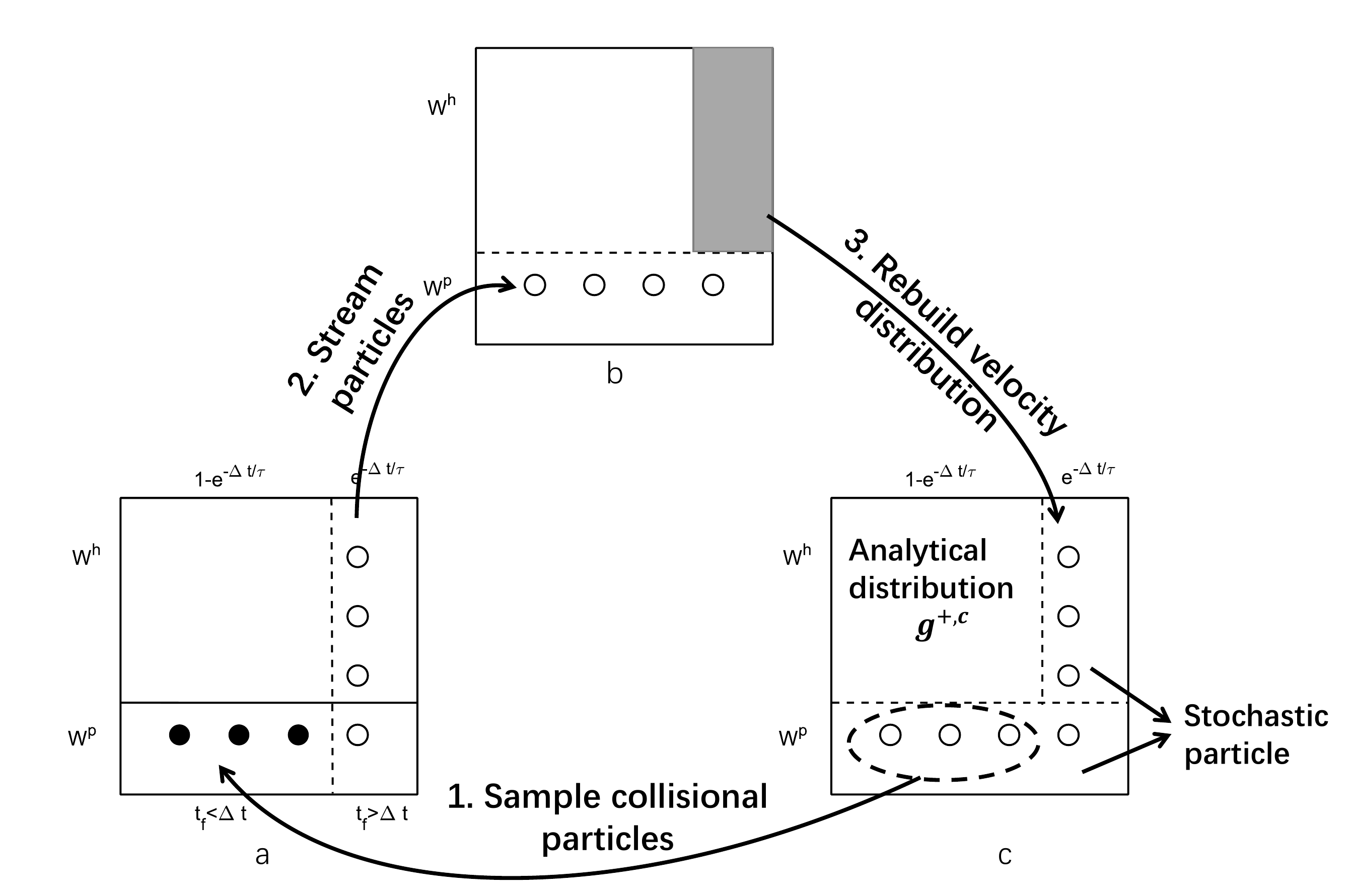}
\caption{The diagram of particle updating procedure: 1. Sample particle free stream time $t_{c,\alpha}$; 2. Stream particles and update macroscopic quantities; 3. Rebuild velocity distribution.}
\label{wpdiagram}
\end{figure}

\begin{figure}
\centering
\includegraphics[width=0.48\textwidth]{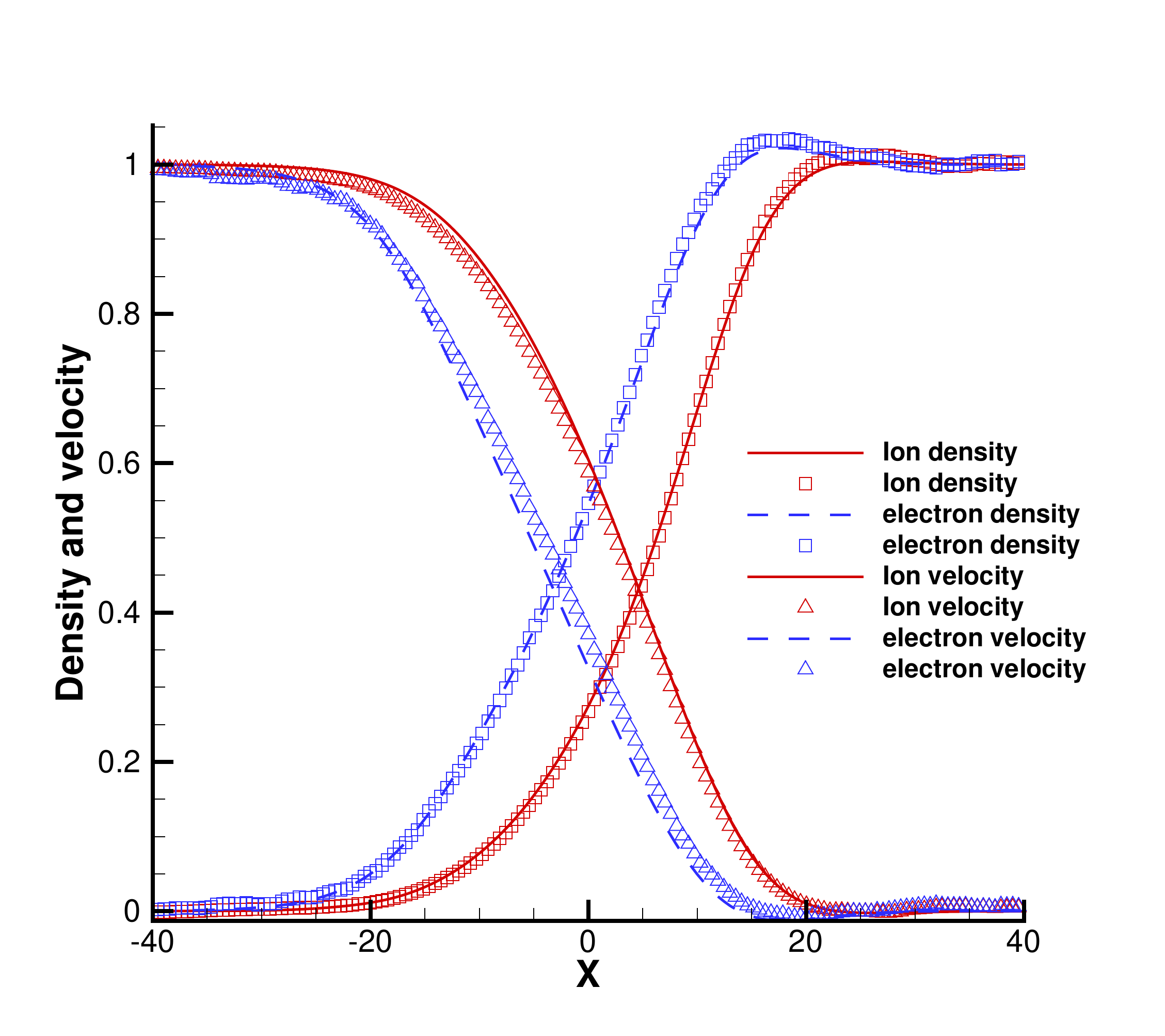}
\includegraphics[width=0.48\textwidth]{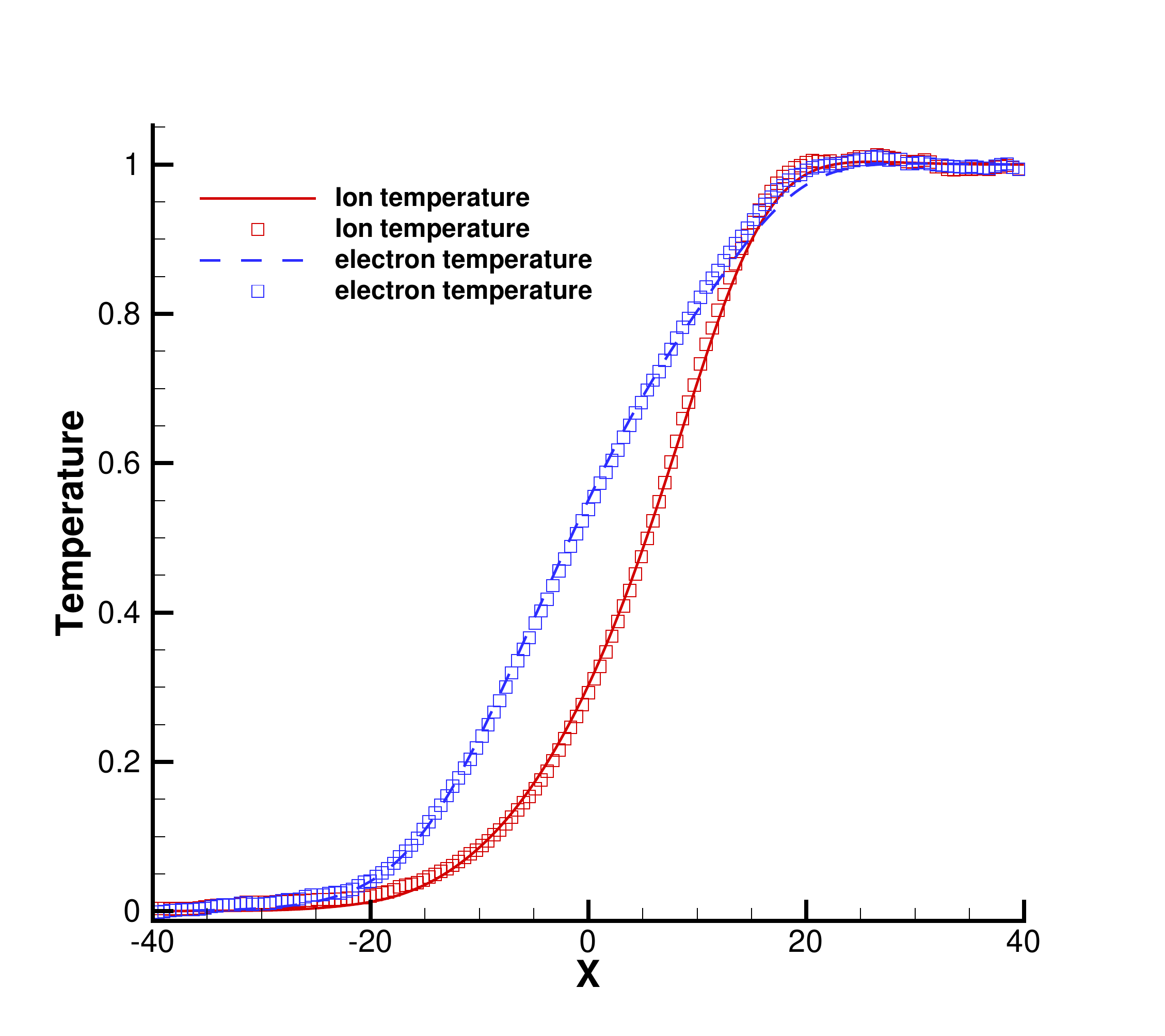}
\caption{Results of the shock wave in a binary gas mixture. Left figure shows the normalized density and velocity. Right figure shows the normalized temperature. Symbols shows the UGKWP solution and lines shows the reference UGKS solution.}
\label{shock}
\end{figure}

\begin{figure}
\centering
\includegraphics[width=0.48\textwidth]{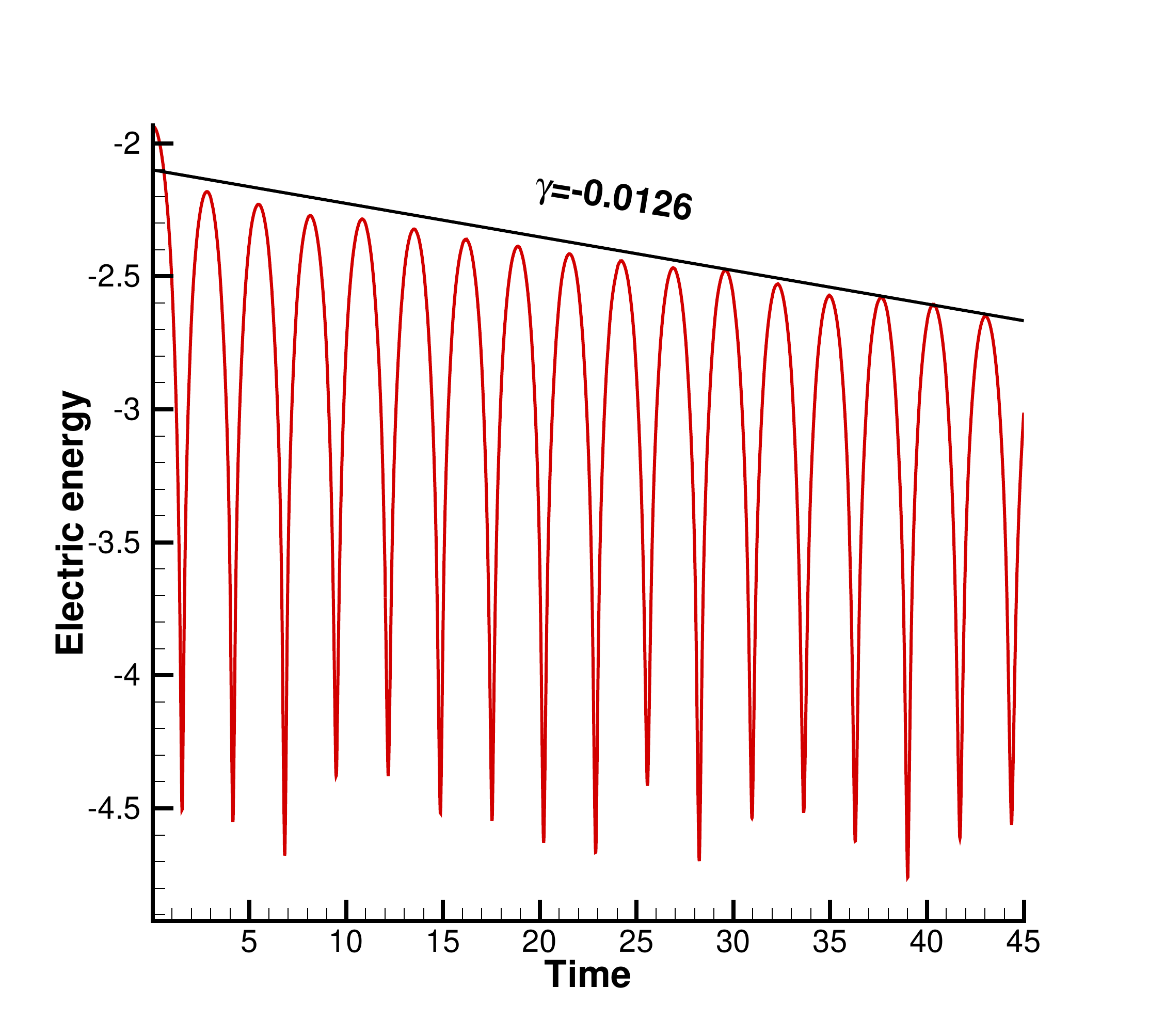}
\includegraphics[width=0.48\textwidth]{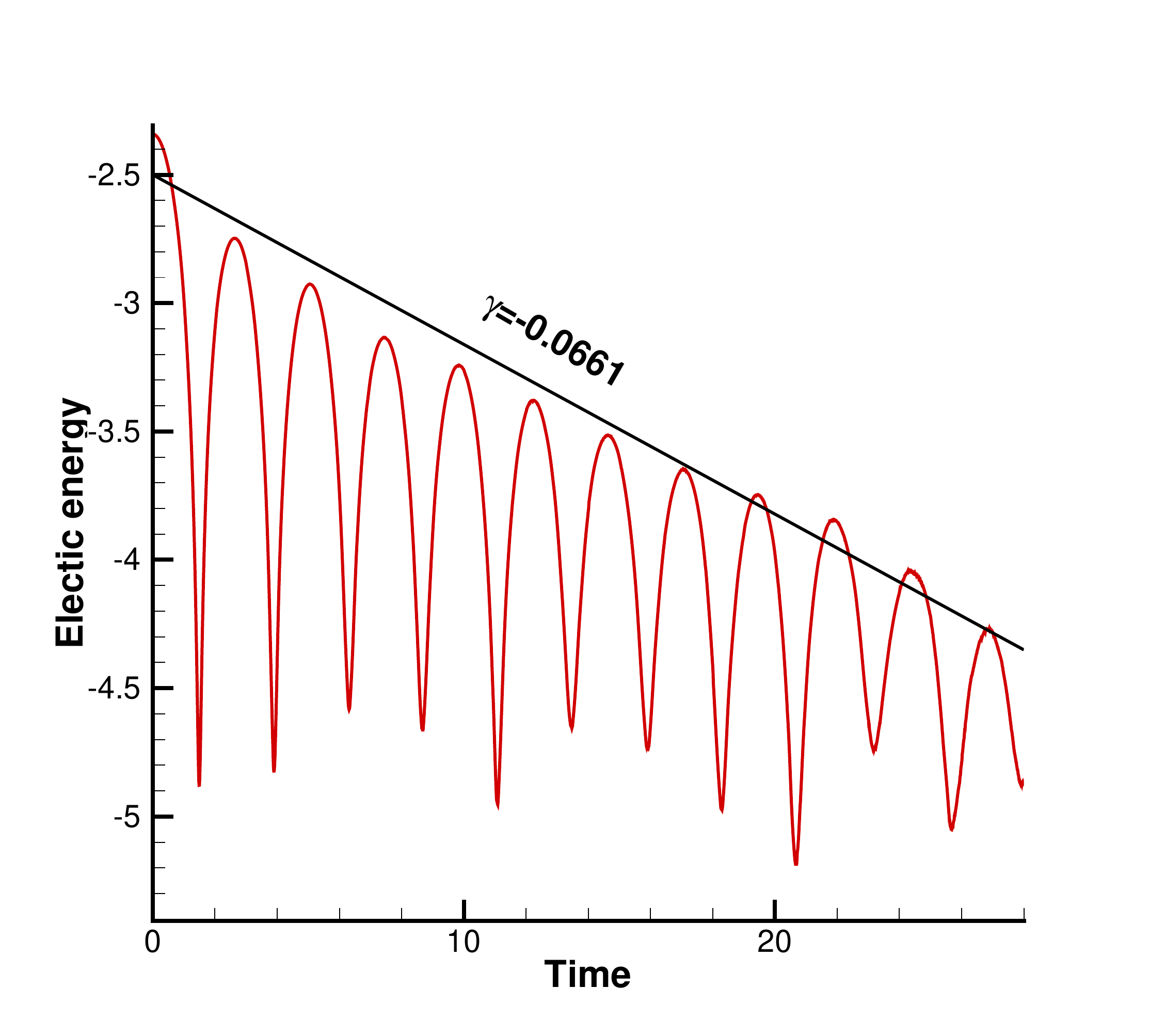}
\caption{Results of Landau damping. Red line shows the time evolution of the electric energy calculated by UGKWP method, and the black line gives the theoretical prediction of the electric energy decay rate.}
\label{Landau1}
\end{figure}

\begin{figure}
\centering
\includegraphics[width=0.48\textwidth]{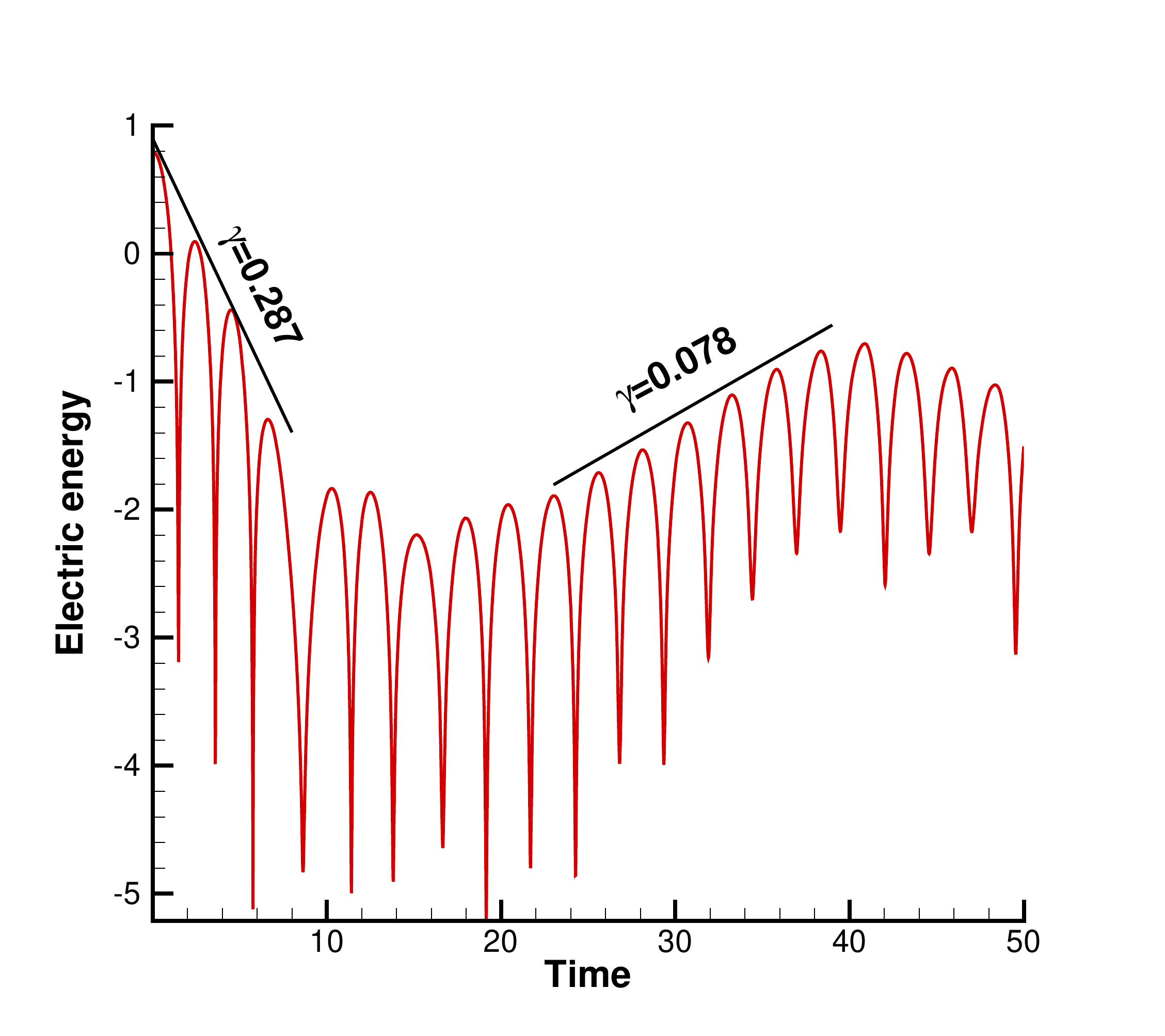}
\includegraphics[width=0.48\textwidth]{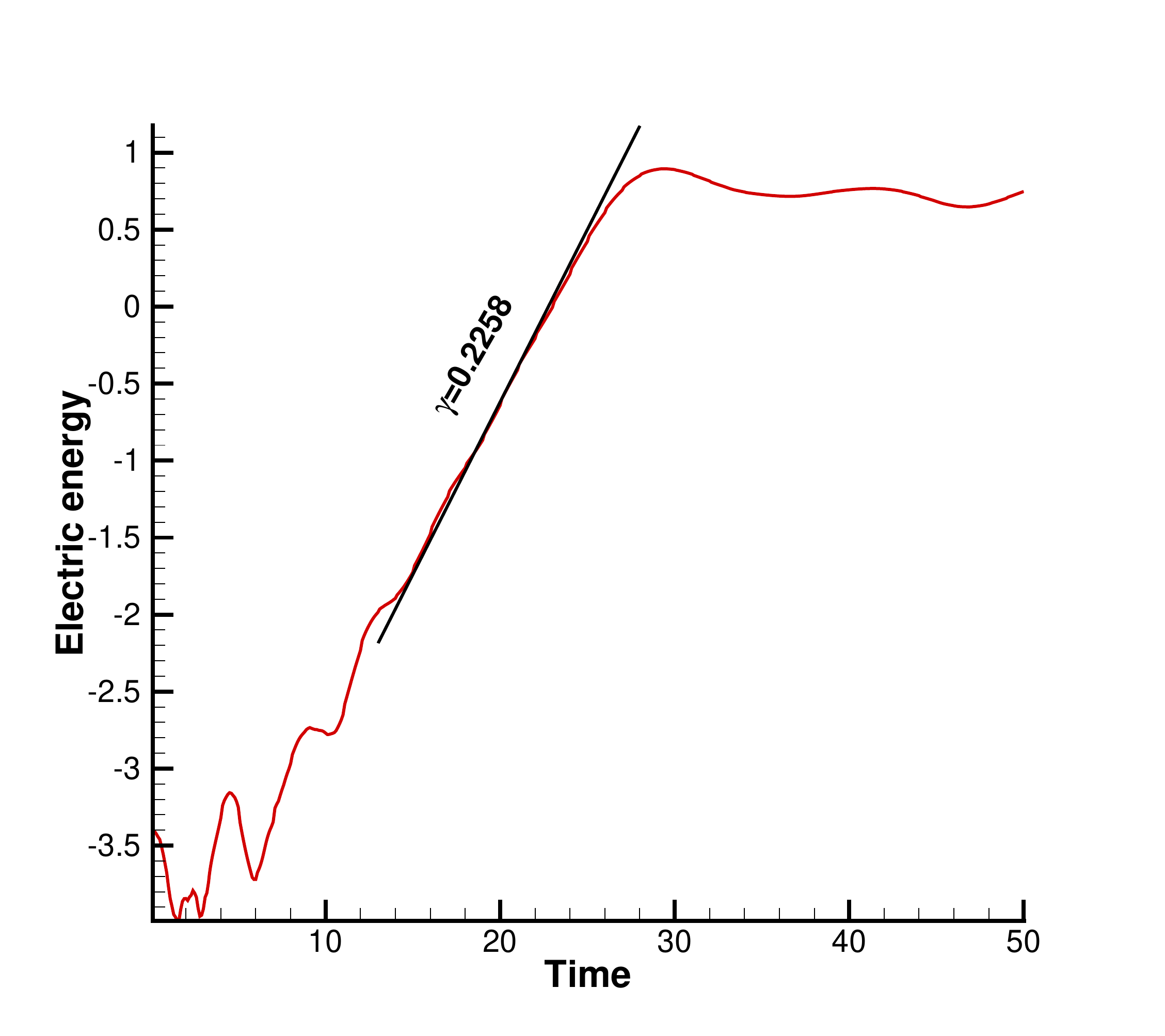}
\caption{Left figure shows the result of nonlinear Landau damping, Red line shows the time evolution of the electric energy calculated by UGKWP method, and the black line shows the reference solutions. Right figure shows the result of two stream instability. Red line shows the time evolution of the electric energy calculated by UGKWP method, and the black line gives the theoretical prediction of the electric energy increase rate.}
\label{Landau2}
\end{figure}

\begin{figure}
\centering
\includegraphics[width=0.49\textwidth]{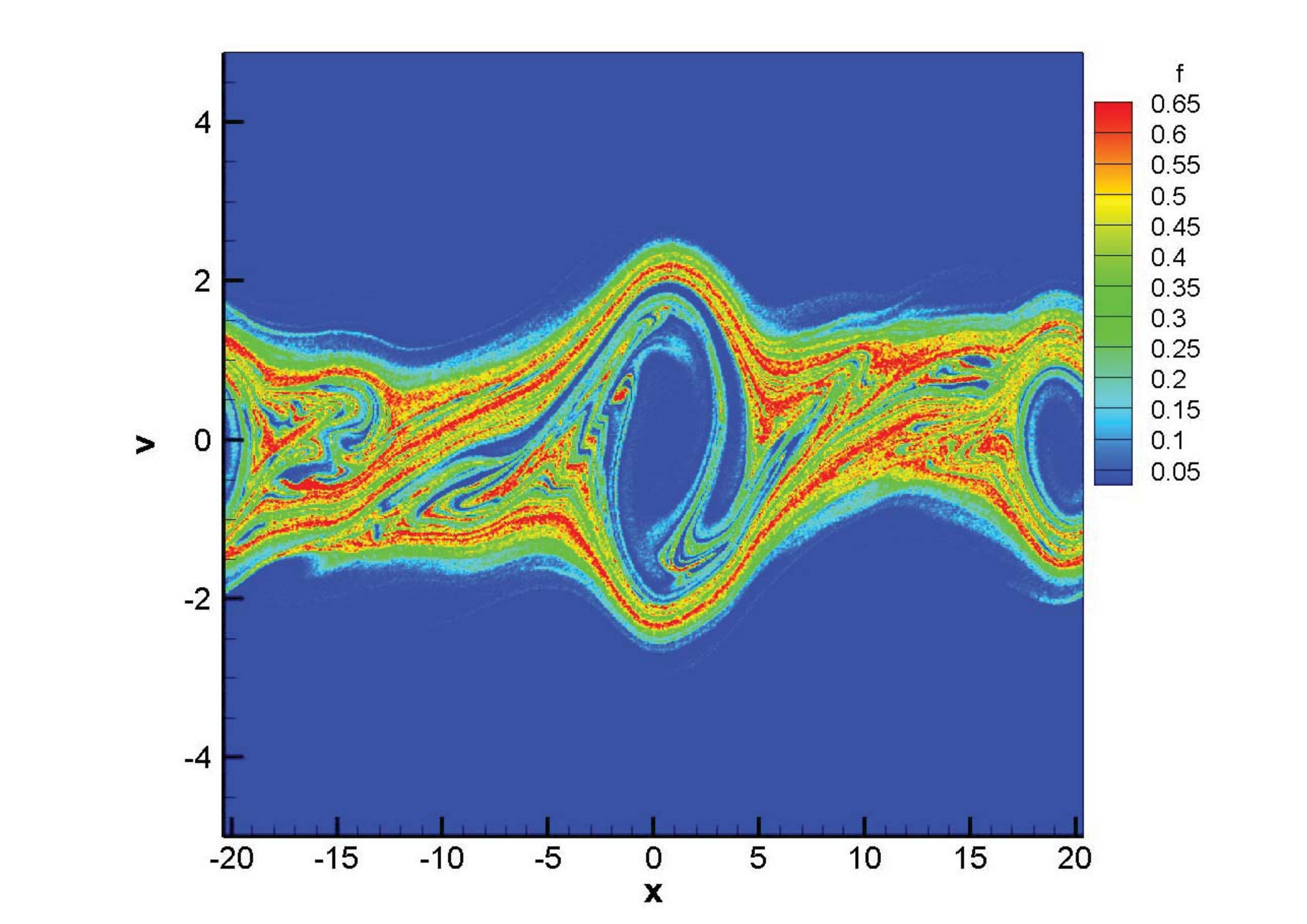}
\includegraphics[width=0.49\textwidth]{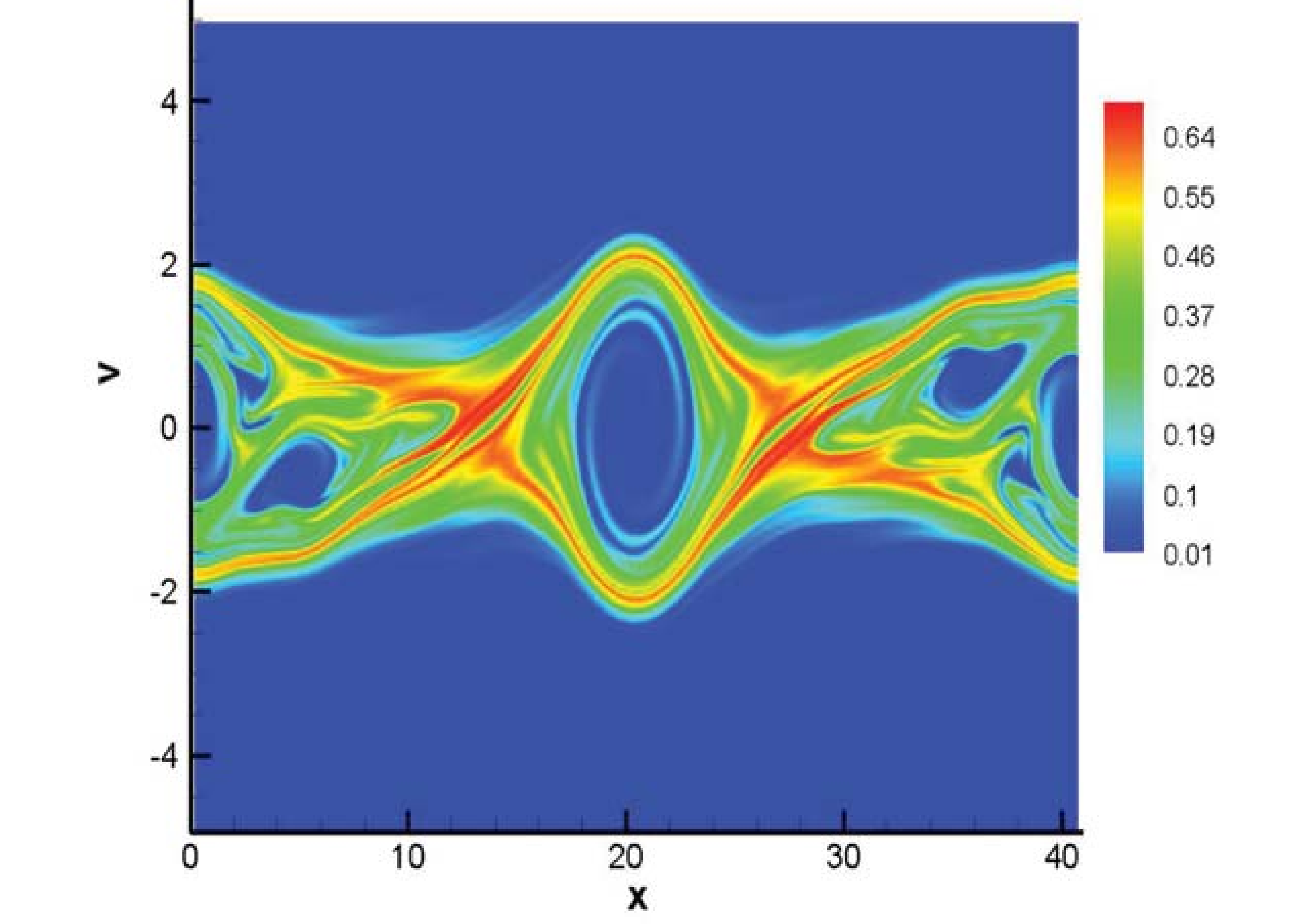}
\caption{Phase space contour results of the two stream instability at $t=70$. Left figure shows the UGKWP result and right figure shows the UGKS solution.}
\label{Landau3}
\end{figure}

\begin{figure}
  \centering
  \includegraphics[width=0.8\textwidth]{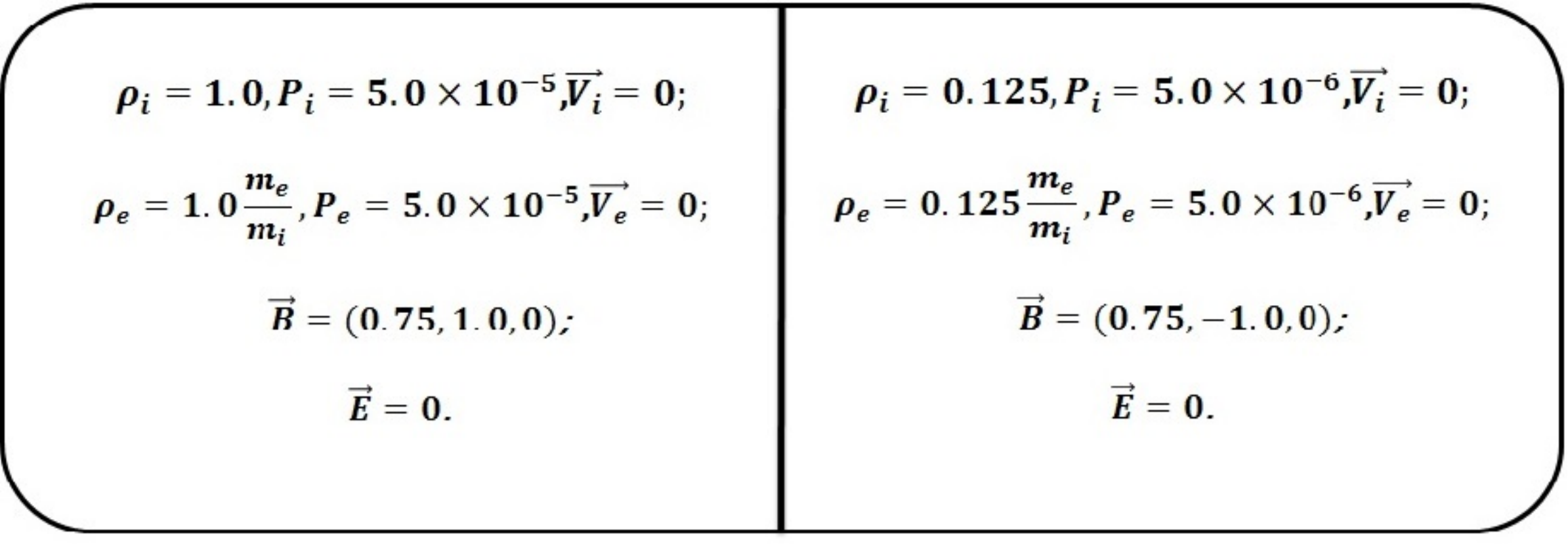}\\
  \caption{Initial condition for the Brio-Wu shock tube problem.}\label{Brio-initial}
\end{figure}

\begin{figure}
     \centering
     \begin{subfigure}[b]{0.48\textwidth}
         \centering
         \includegraphics[width=\textwidth]{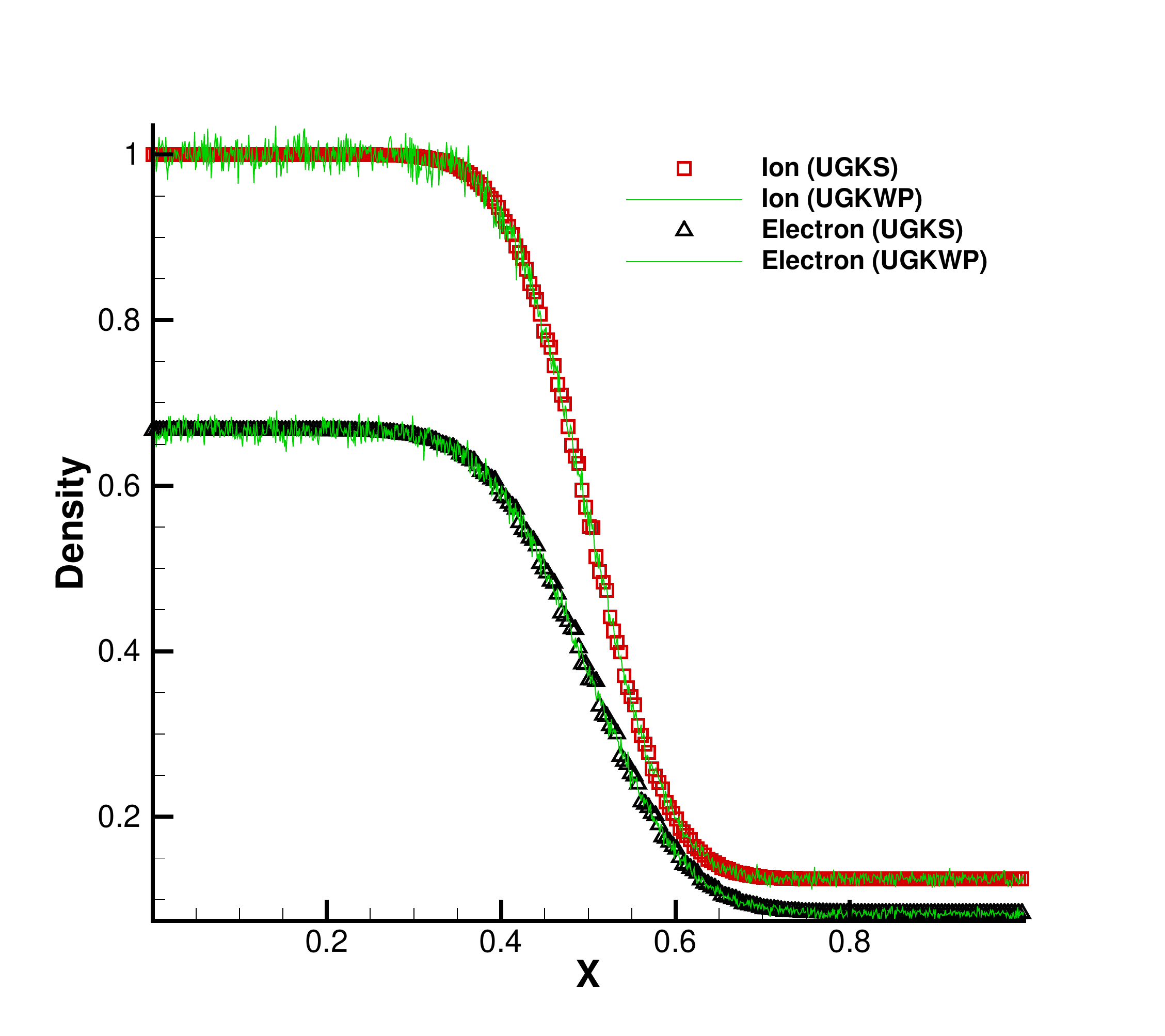}
         \caption{Density distribution}
     \end{subfigure}
     \hfill
     \begin{subfigure}[b]{0.48\textwidth}
         \centering
         \includegraphics[width=\textwidth]{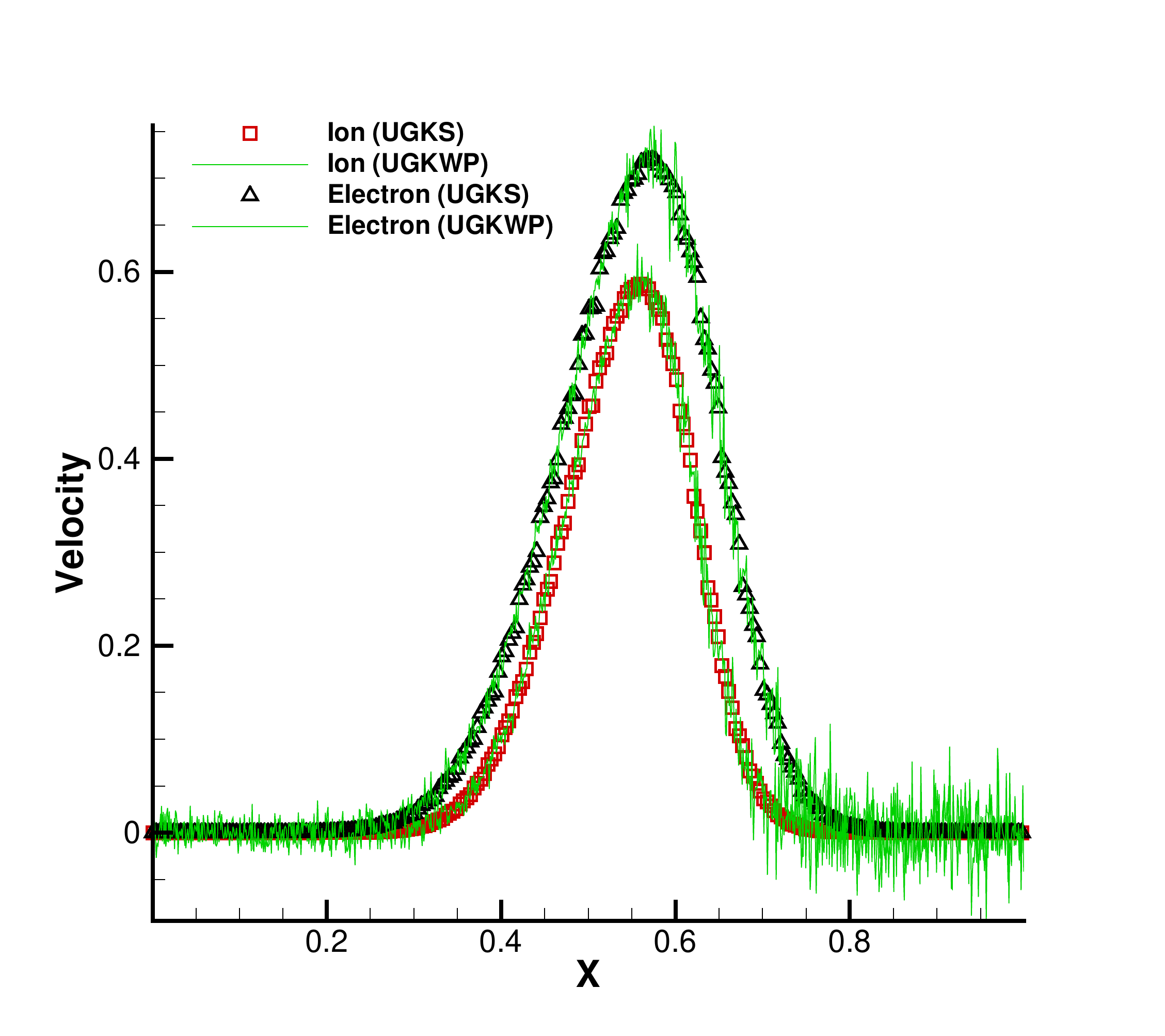}
         \caption{X-velocity distribution}
     \end{subfigure}
     \vfill
     \begin{subfigure}[b]{0.48\textwidth}
         \centering
         \includegraphics[width=\textwidth]{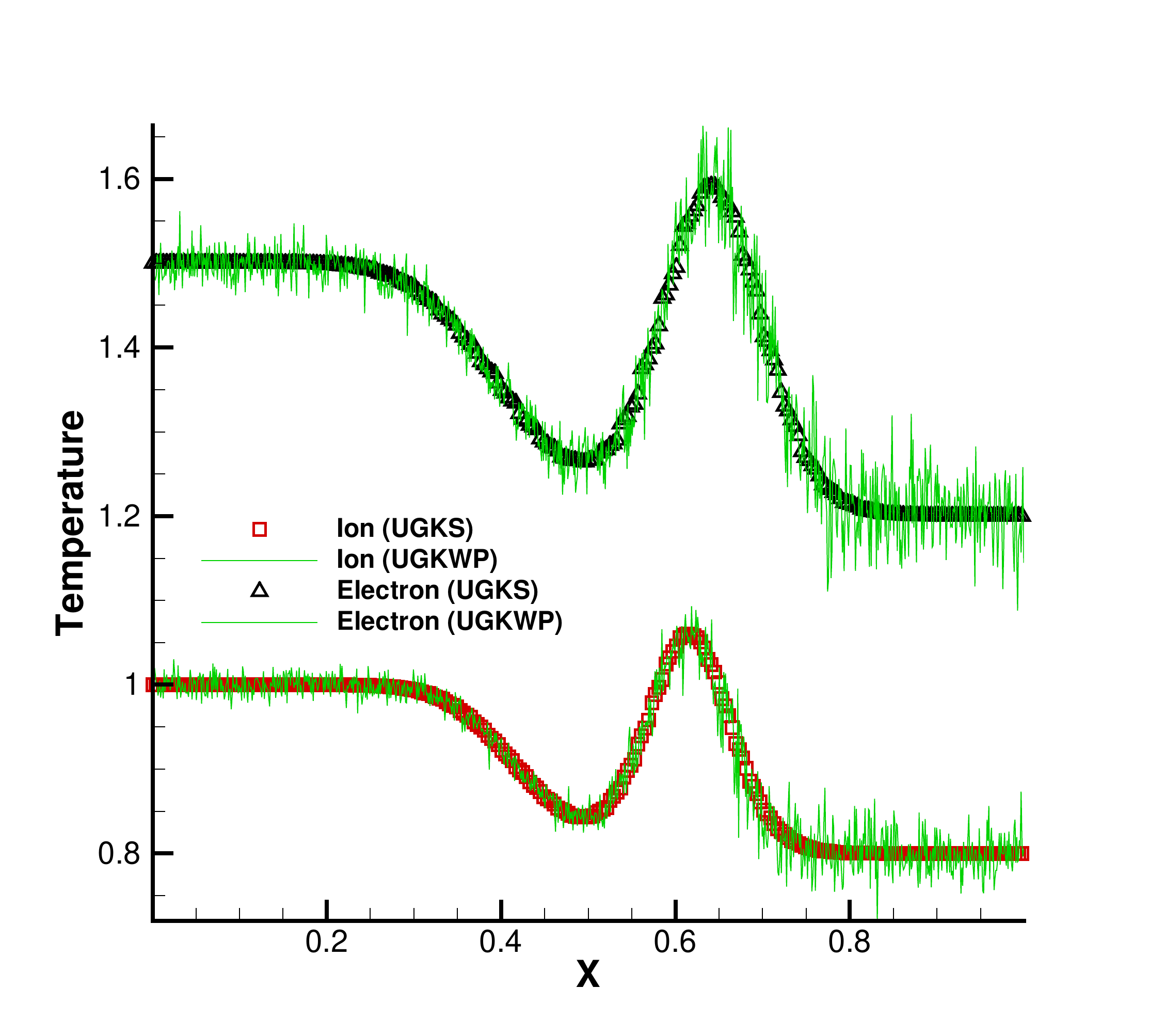}
         \caption{Pressure distribution}
     \end{subfigure}
     \hfill
     \begin{subfigure}[b]{0.48\textwidth}
         \centering
         \includegraphics[width=\textwidth]{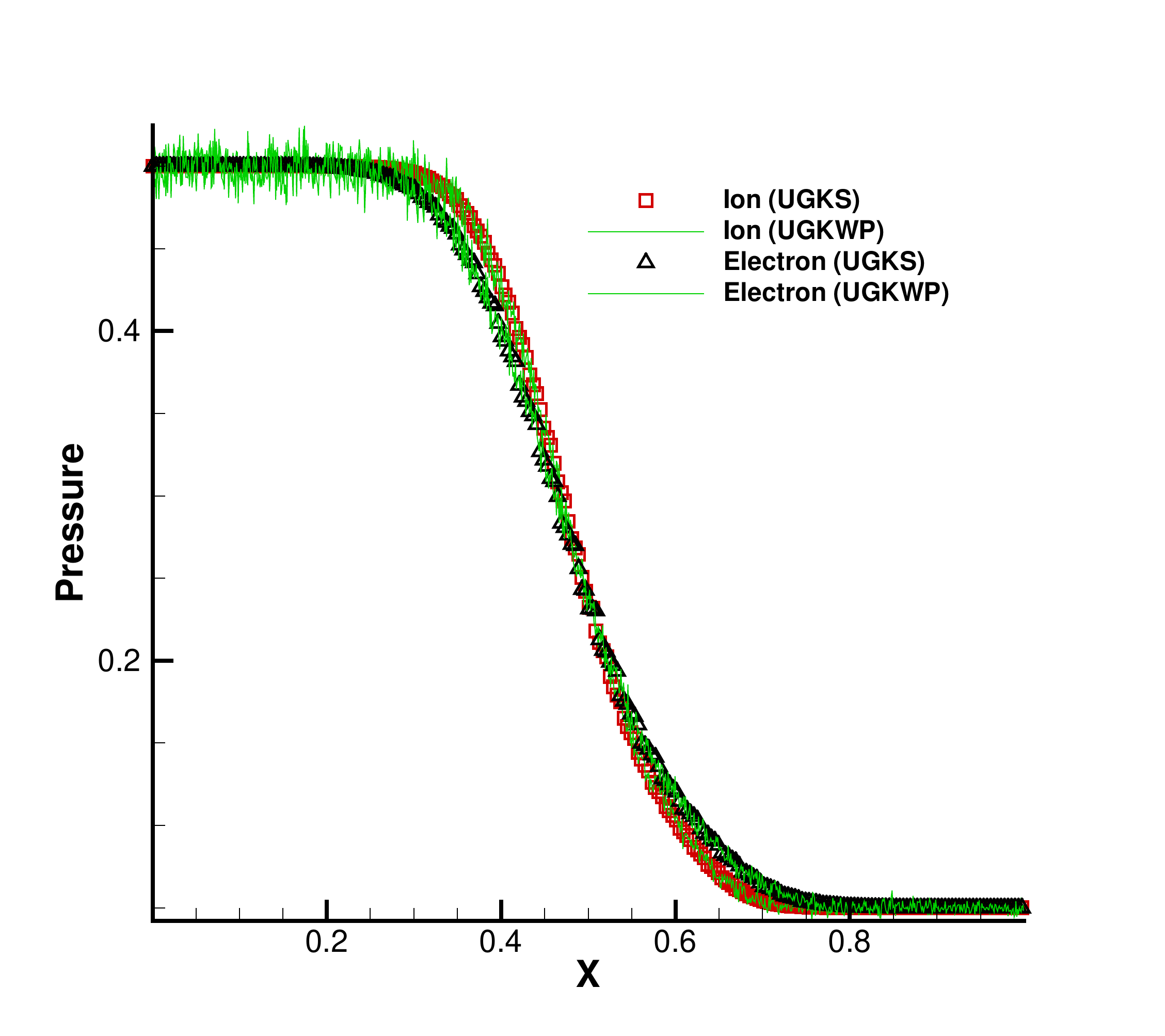}
         \caption{Y-magnetic field distribution}
     \end{subfigure}
\caption{Multiscale Brio-Wu shock tube problem with Kn=1 and $r=10^{-3}$.
Lines show the UGKWP solution and symbols show the reference UGKS solution.}
\label{brio1}
\end{figure}

\begin{figure}
     \centering
     \begin{subfigure}[b]{0.48\textwidth}
         \centering
         \includegraphics[width=\textwidth]{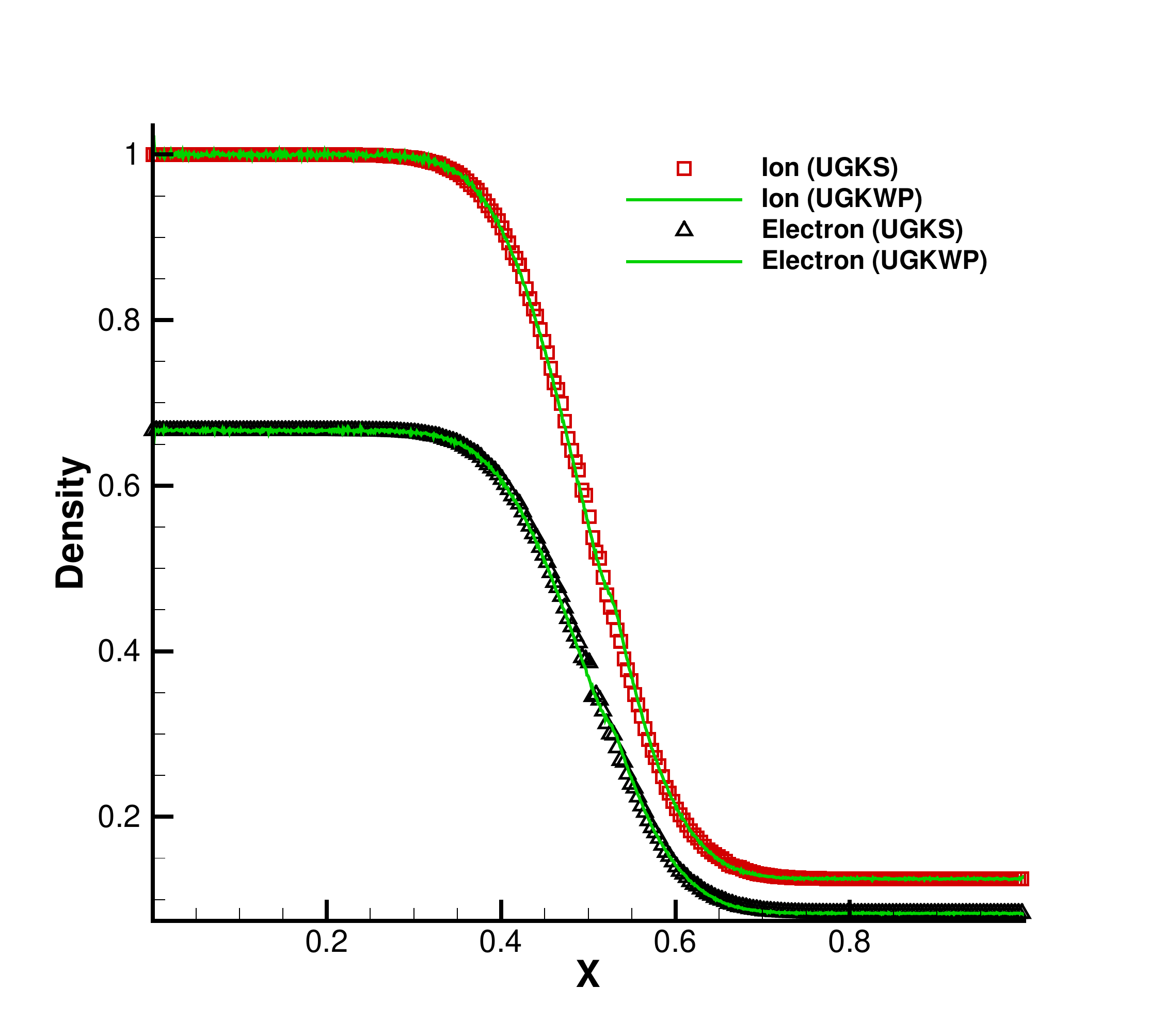}
         \caption{Density distribution}
     \end{subfigure}
     \hfill
     \begin{subfigure}[b]{0.48\textwidth}
         \centering
         \includegraphics[width=\textwidth]{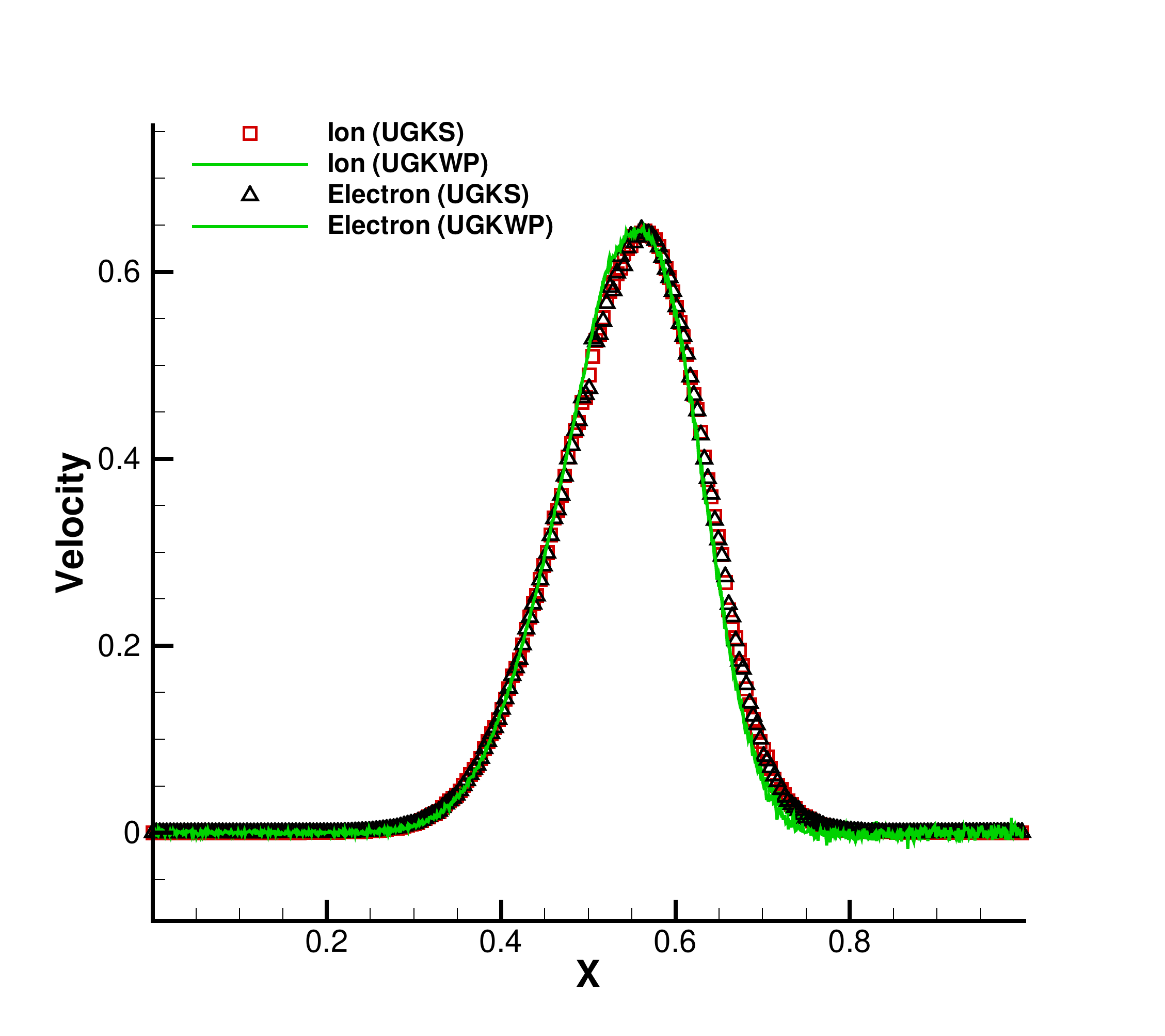}
         \caption{X-velocity distribution}
     \end{subfigure}
     \vfill
     \begin{subfigure}[b]{0.48\textwidth}
         \centering
         \includegraphics[width=\textwidth]{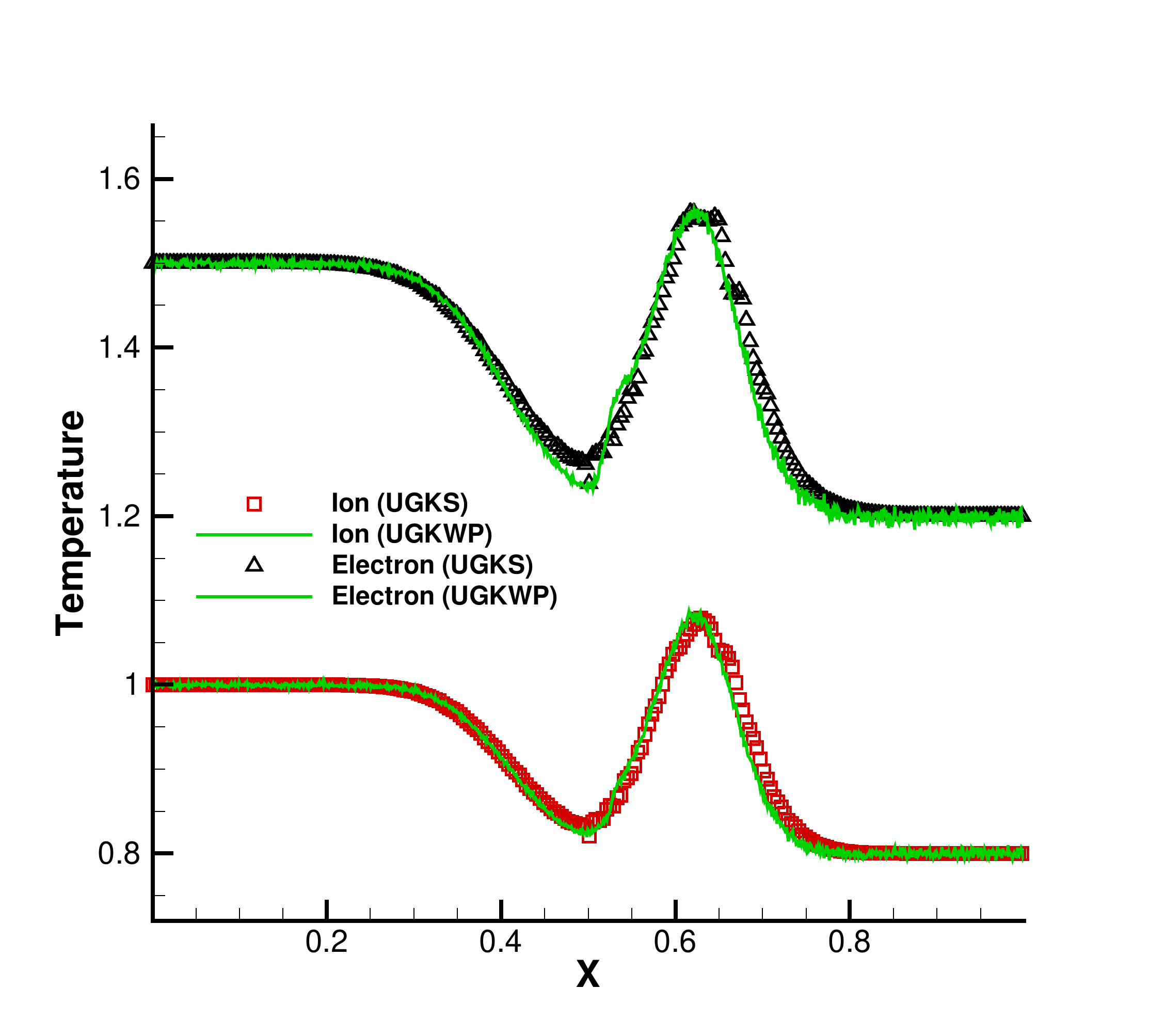}
         \caption{Pressure distribution}
     \end{subfigure}
     \hfill
     \begin{subfigure}[b]{0.48\textwidth}
         \centering
         \includegraphics[width=\textwidth]{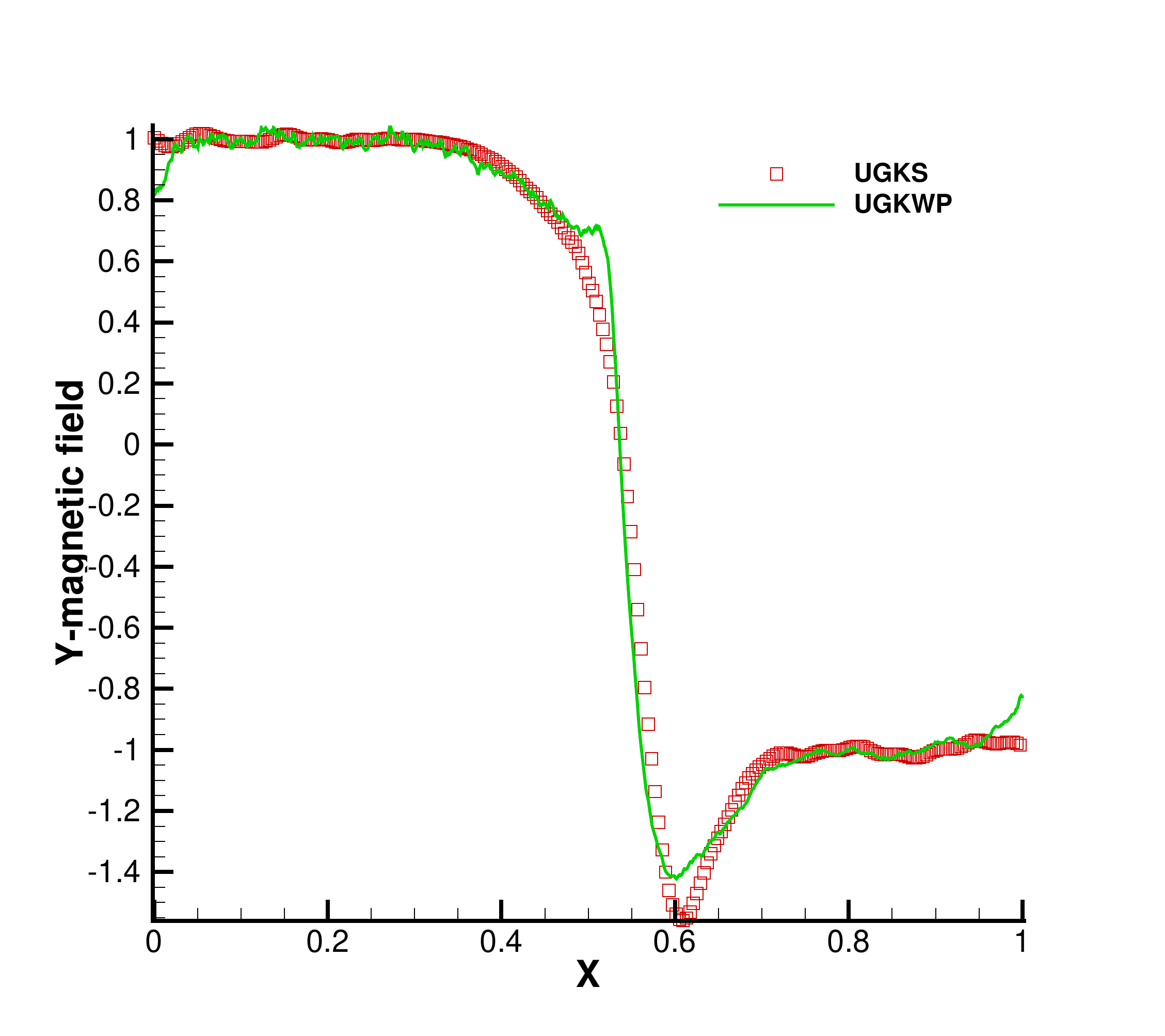}
         \caption{Y-magnetic field distribution}
     \end{subfigure}
\caption{Multiscale Brio-Wu shock tube problem with Kn=$10^{-2}$ and $r=10^{-3}$.
Lines show the UGKWP solution and symbols show the reference UGKS solution.}
\label{brio2}
\end{figure}

\begin{figure}
     \centering
     \begin{subfigure}[b]{0.48\textwidth}
         \centering
         \includegraphics[width=\textwidth]{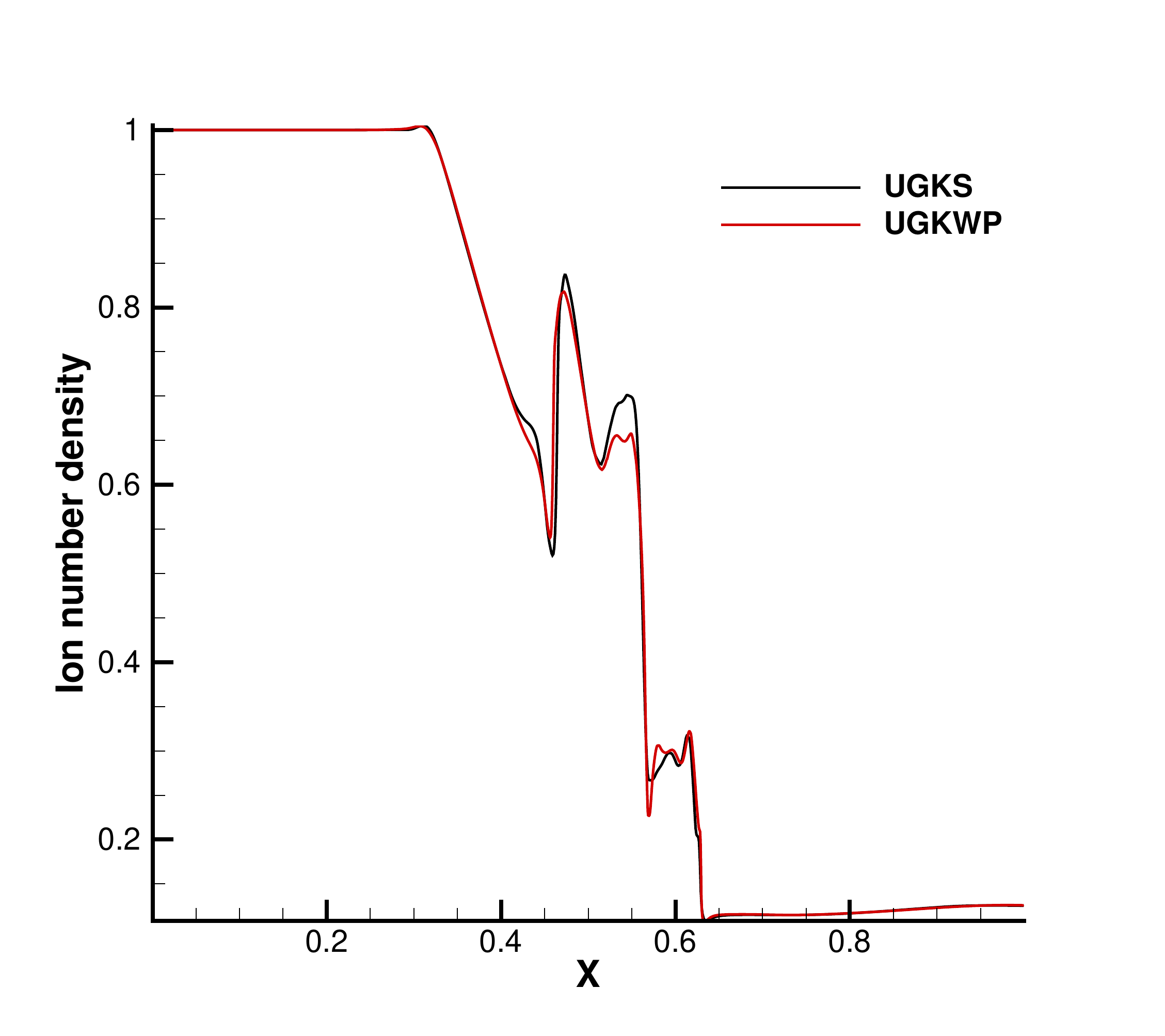}
         \caption{Density distribution}
     \end{subfigure}
     \hfill
     \begin{subfigure}[b]{0.48\textwidth}
         \centering
         \includegraphics[width=\textwidth]{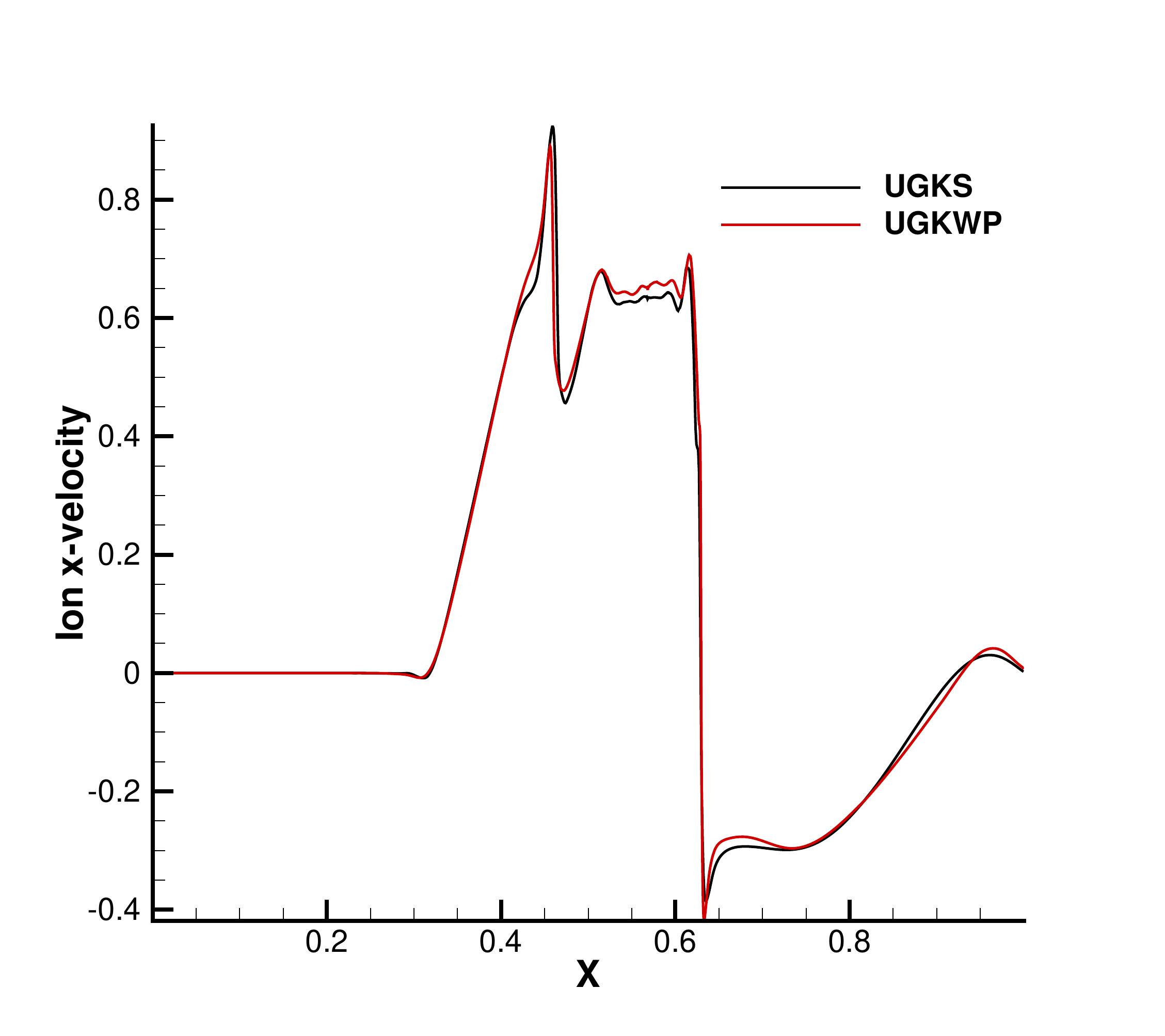}
         \caption{X-velocity distribution}
     \end{subfigure}
     \vfill
     \begin{subfigure}[b]{0.48\textwidth}
         \centering
         \includegraphics[width=\textwidth]{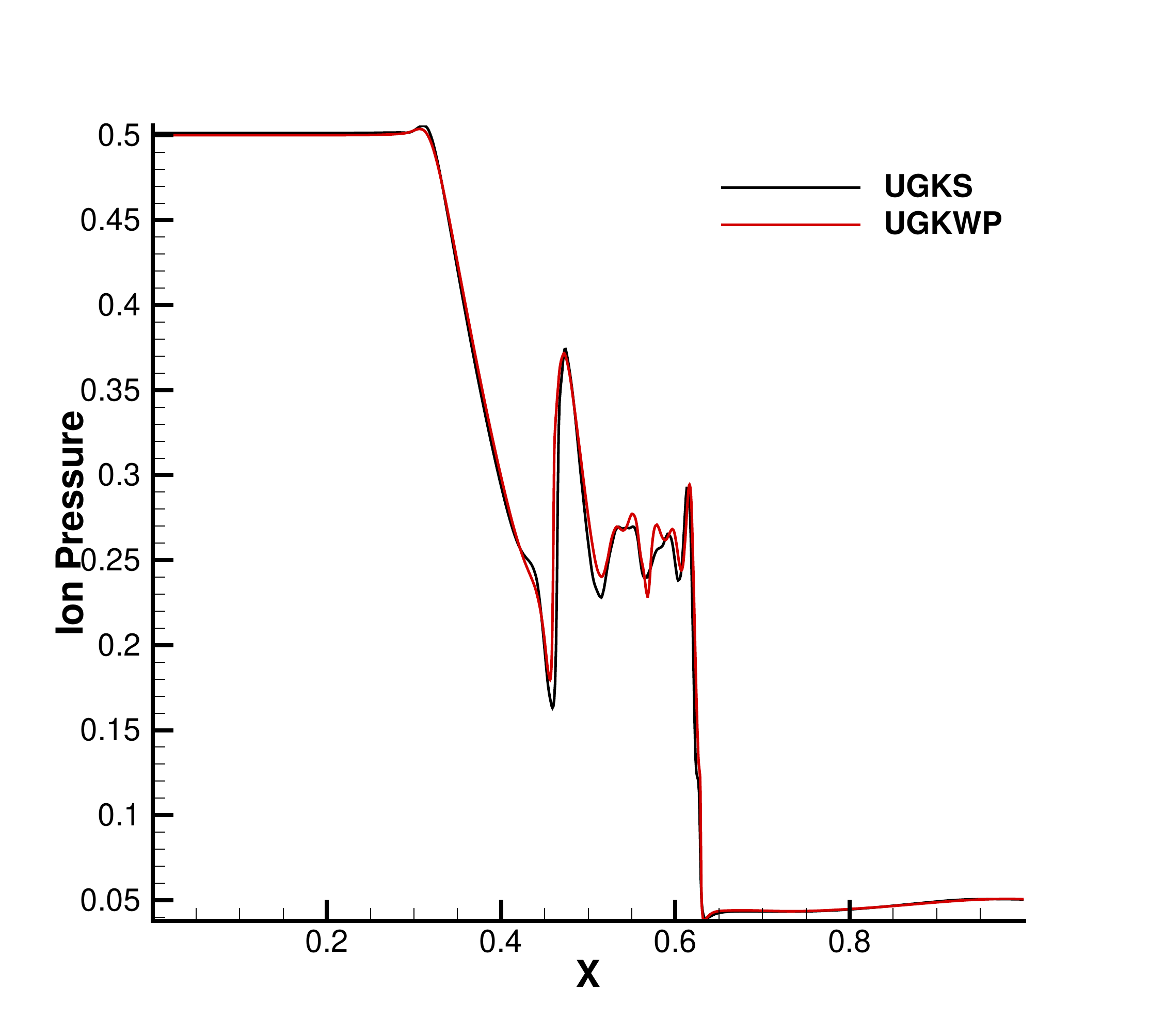}
         \caption{Y-velocity distribution}
     \end{subfigure}
     \hfill
     \begin{subfigure}[b]{0.48\textwidth}
         \centering
         \includegraphics[width=\textwidth]{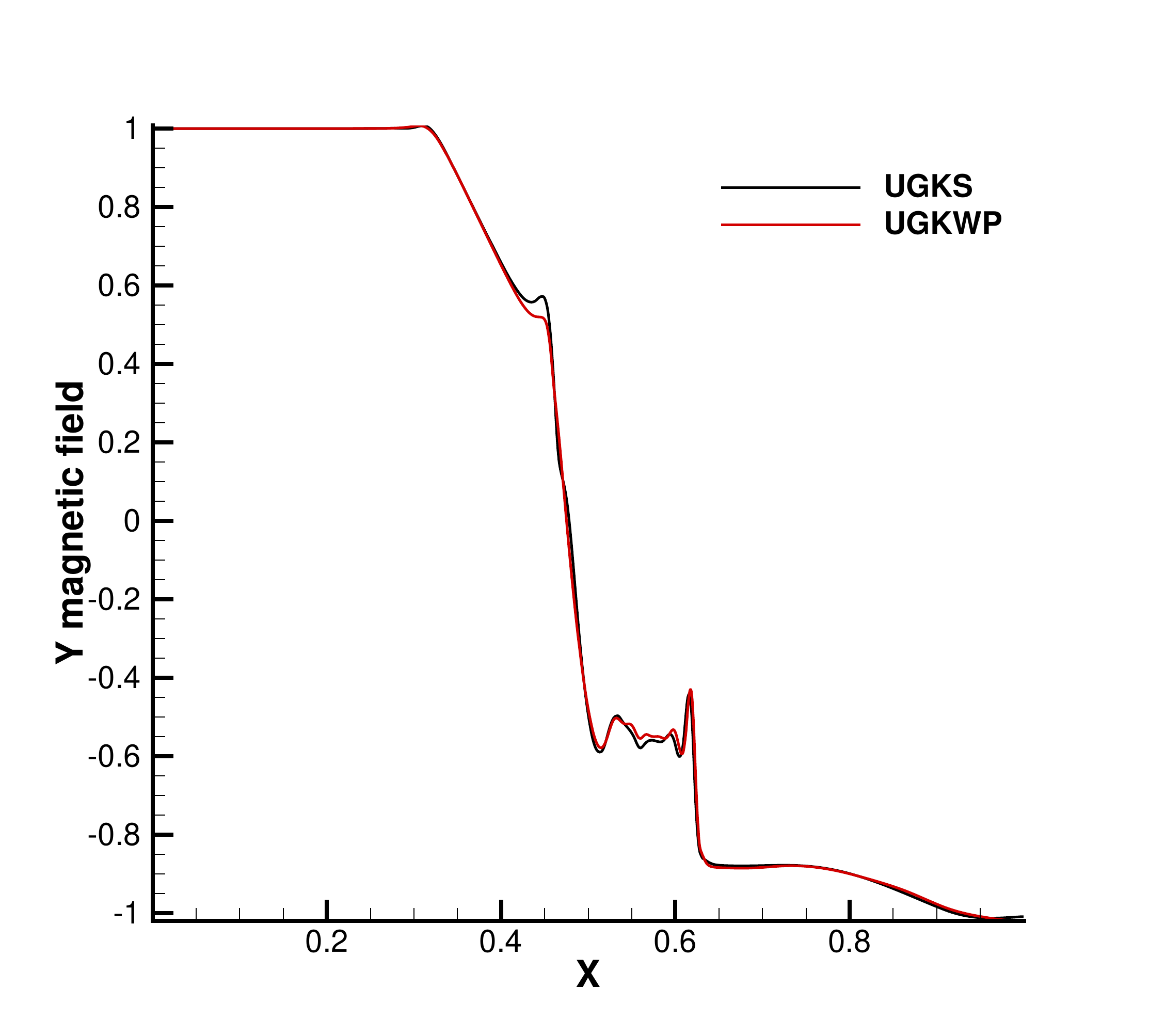}
         \caption{Y-magnetic field distribution}
     \end{subfigure}
\caption{Multiscale Brio-Wu shock tube problem with Kn=$10^{-4}$ and $r=10^{-3}$.
Lines show the UGKWP solution and symbols show the reference UGKS solution.}
\label{brio3}
\end{figure}

\begin{figure}
     \centering
     \begin{subfigure}[b]{0.48\textwidth}
         \centering
         \includegraphics[width=\textwidth]{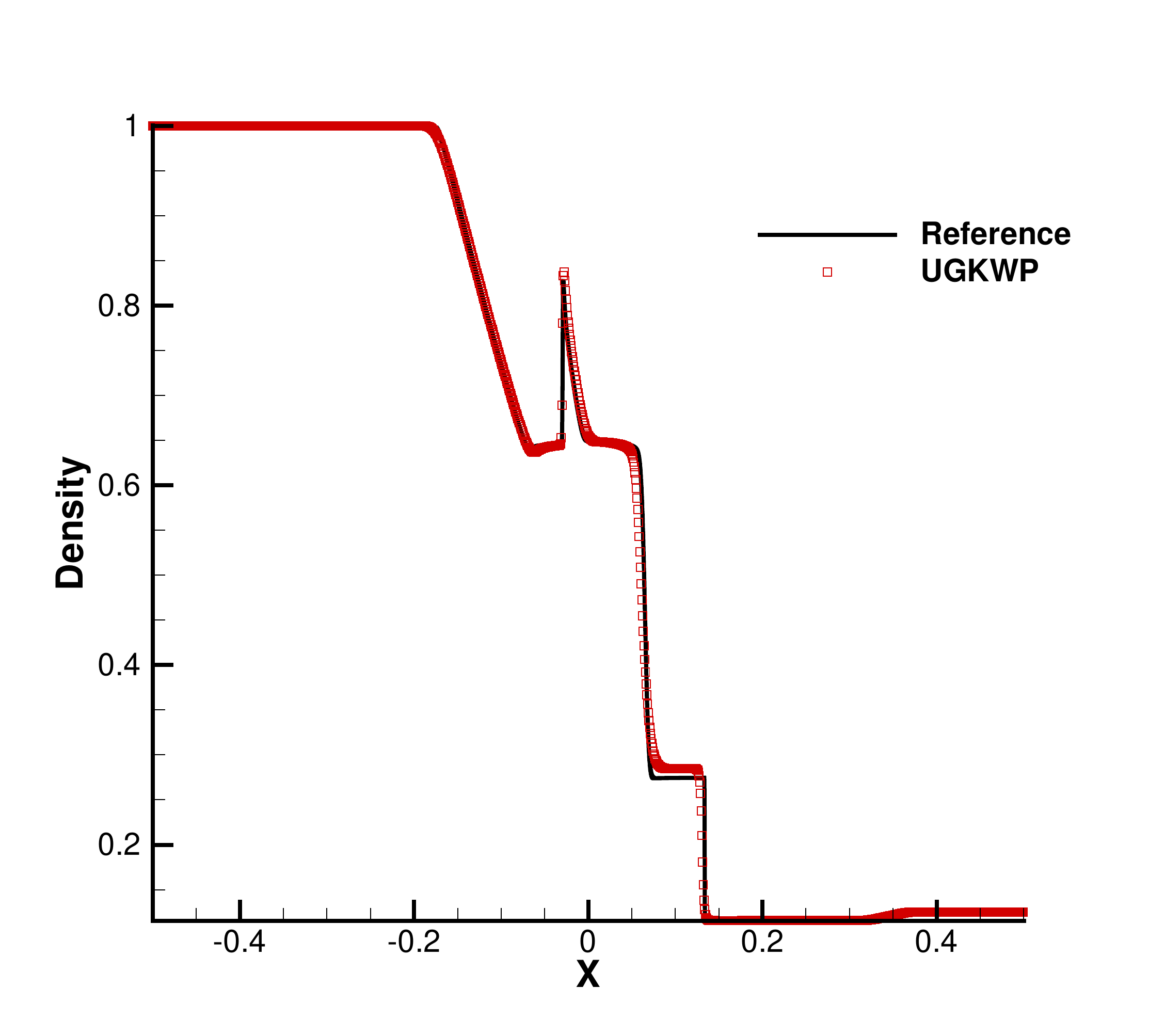}
         \caption{Density distribution}
     \end{subfigure}
     \hfill
     \begin{subfigure}[b]{0.48\textwidth}
         \centering
         \includegraphics[width=\textwidth]{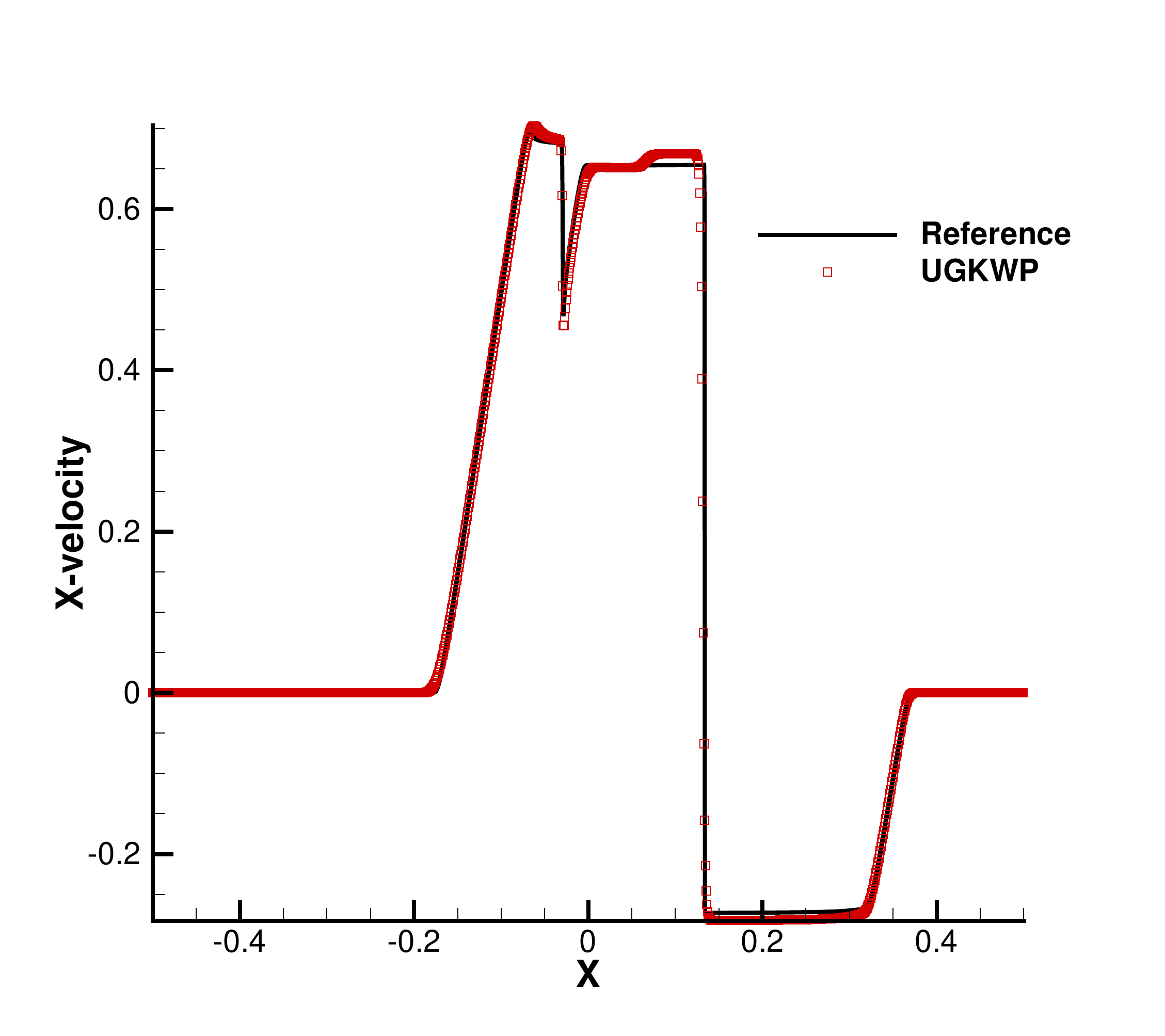}
         \caption{X-velocity distribution}
     \end{subfigure}
     \vfill
     \begin{subfigure}[b]{0.48\textwidth}
         \centering
         \includegraphics[width=\textwidth]{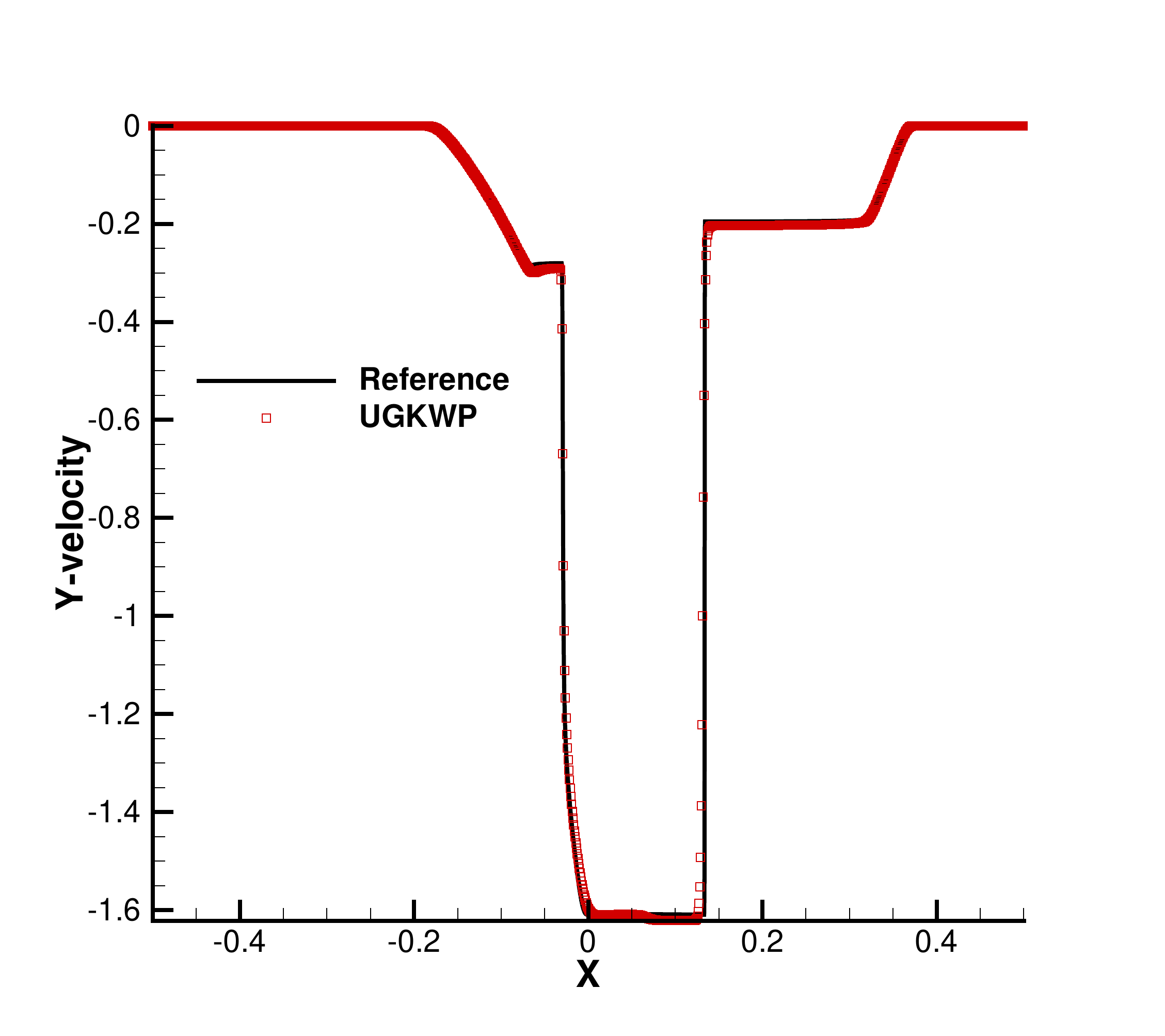}
         \caption{Y-velocity distribution}
     \end{subfigure}
     \hfill
     \begin{subfigure}[b]{0.48\textwidth}
         \centering
         \includegraphics[width=\textwidth]{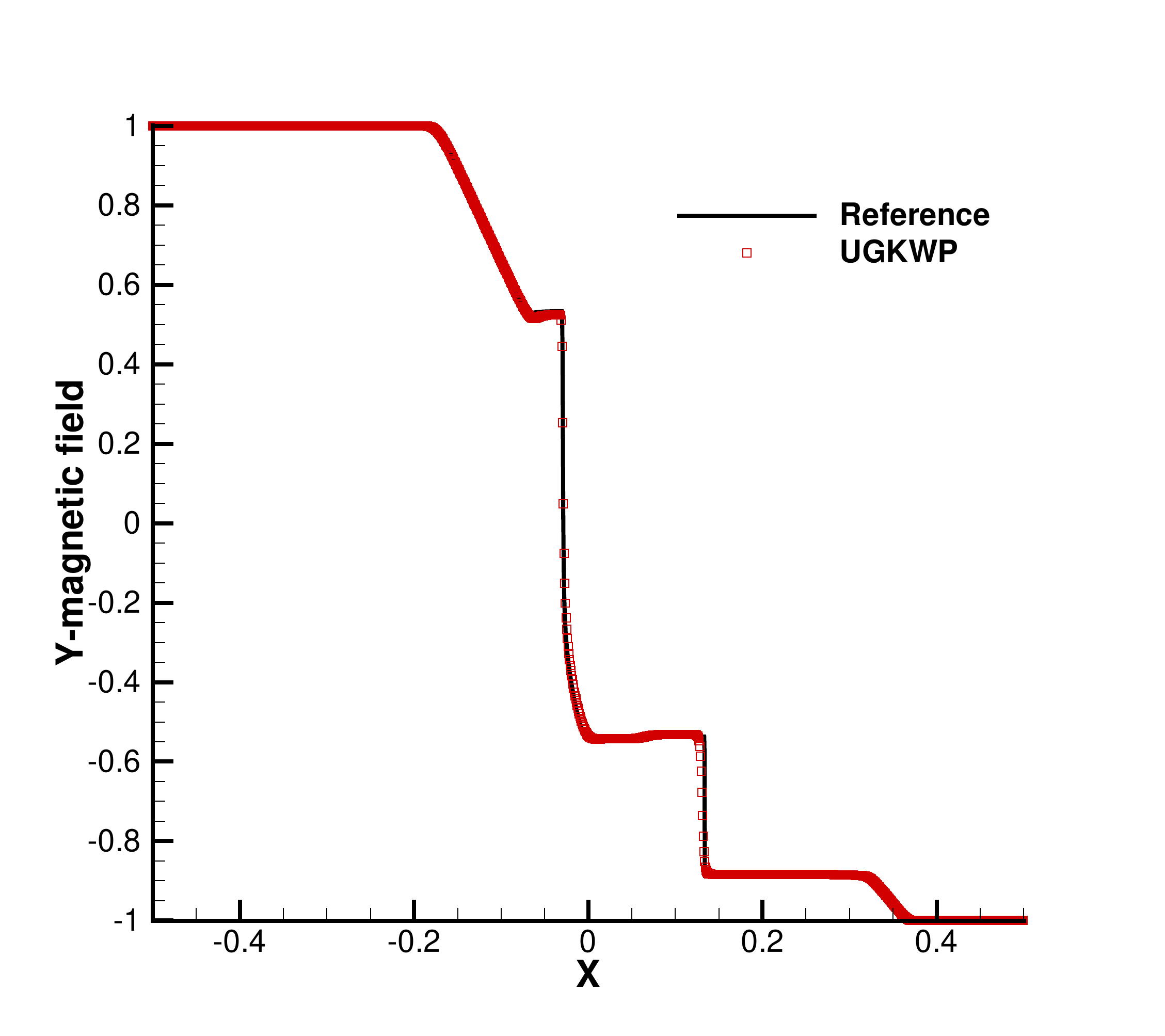}
         \caption{Y-magnetic field distribution}
     \end{subfigure}
\caption{Multiscale Brio-Wu shock tube problem with Kn=$10^{-4}$ and $r=0$.
Lines show the UGKWP solution and symbols show the reference MHD solution.}
\label{brio4}
\end{figure}

\begin{figure}
     \centering
     \begin{subfigure}[b]{0.48\textwidth}
         \centering
         \includegraphics[width=\textwidth]{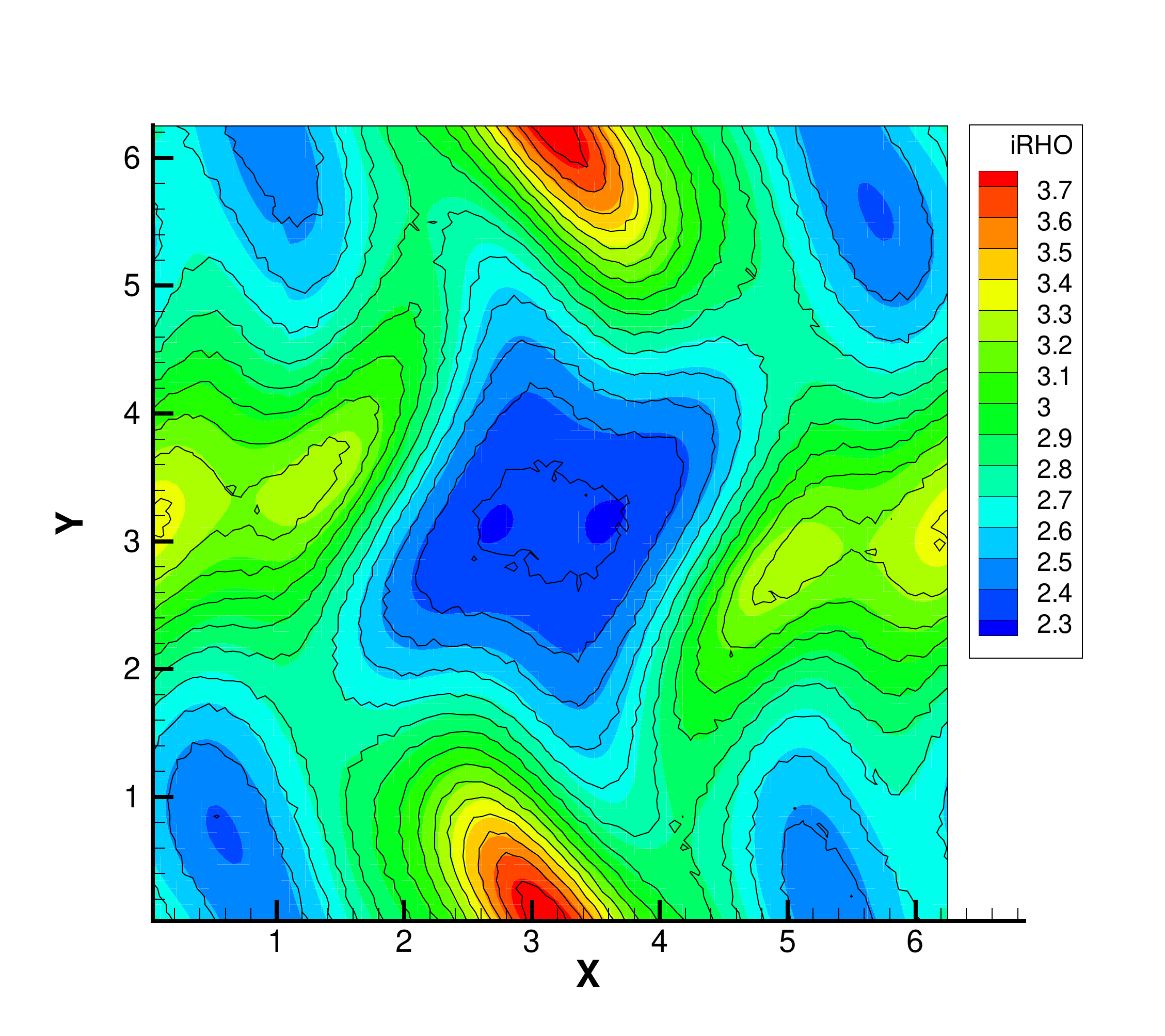}
         \caption{Ion density contour}
     \end{subfigure}
     \hfill
     \begin{subfigure}[b]{0.48\textwidth}
         \centering
         \includegraphics[width=\textwidth]{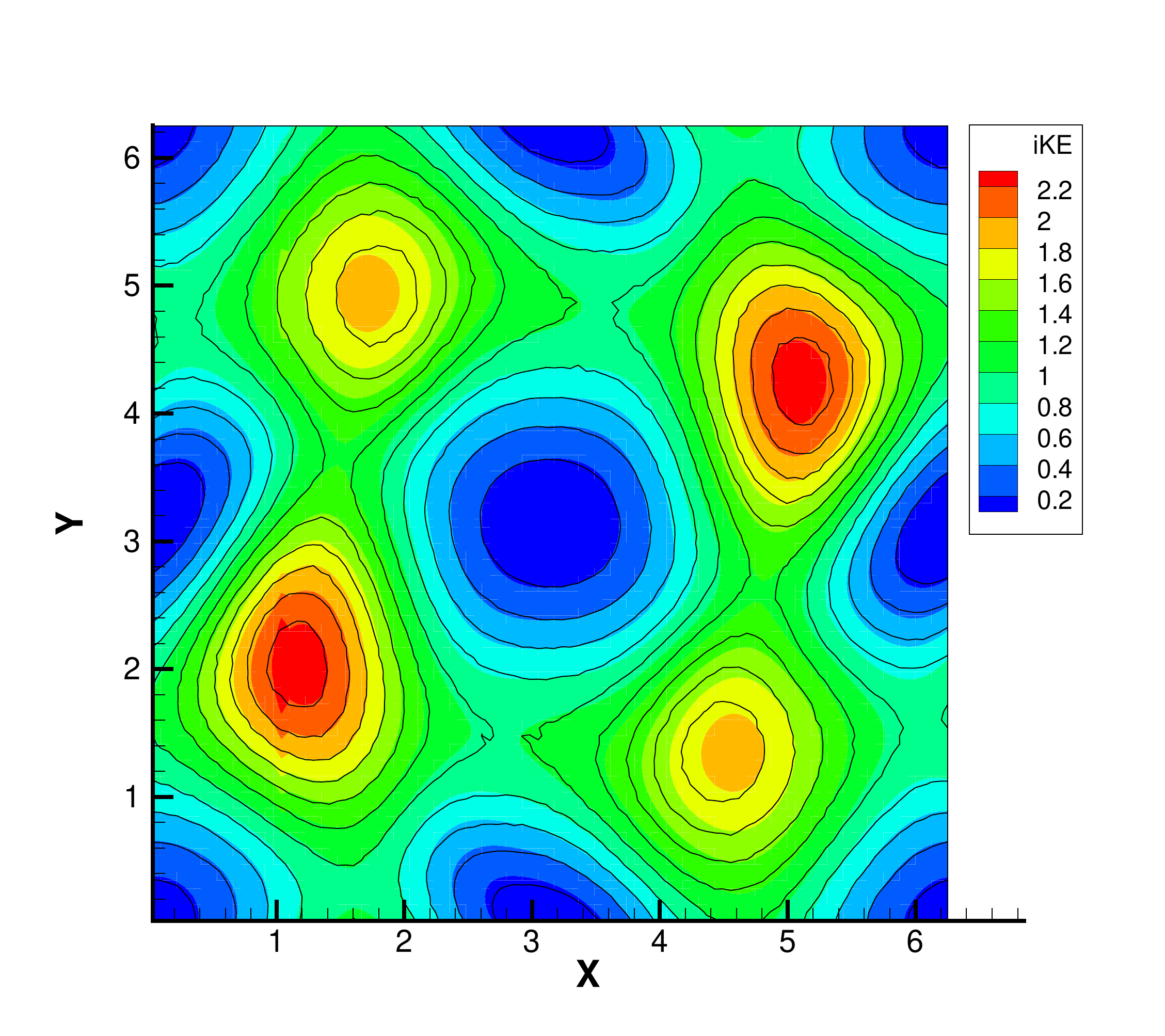}
         \caption{Ion kinetic energy contour}
     \end{subfigure}
     \vfill
     \begin{subfigure}[b]{0.48\textwidth}
         \centering
         \includegraphics[width=\textwidth]{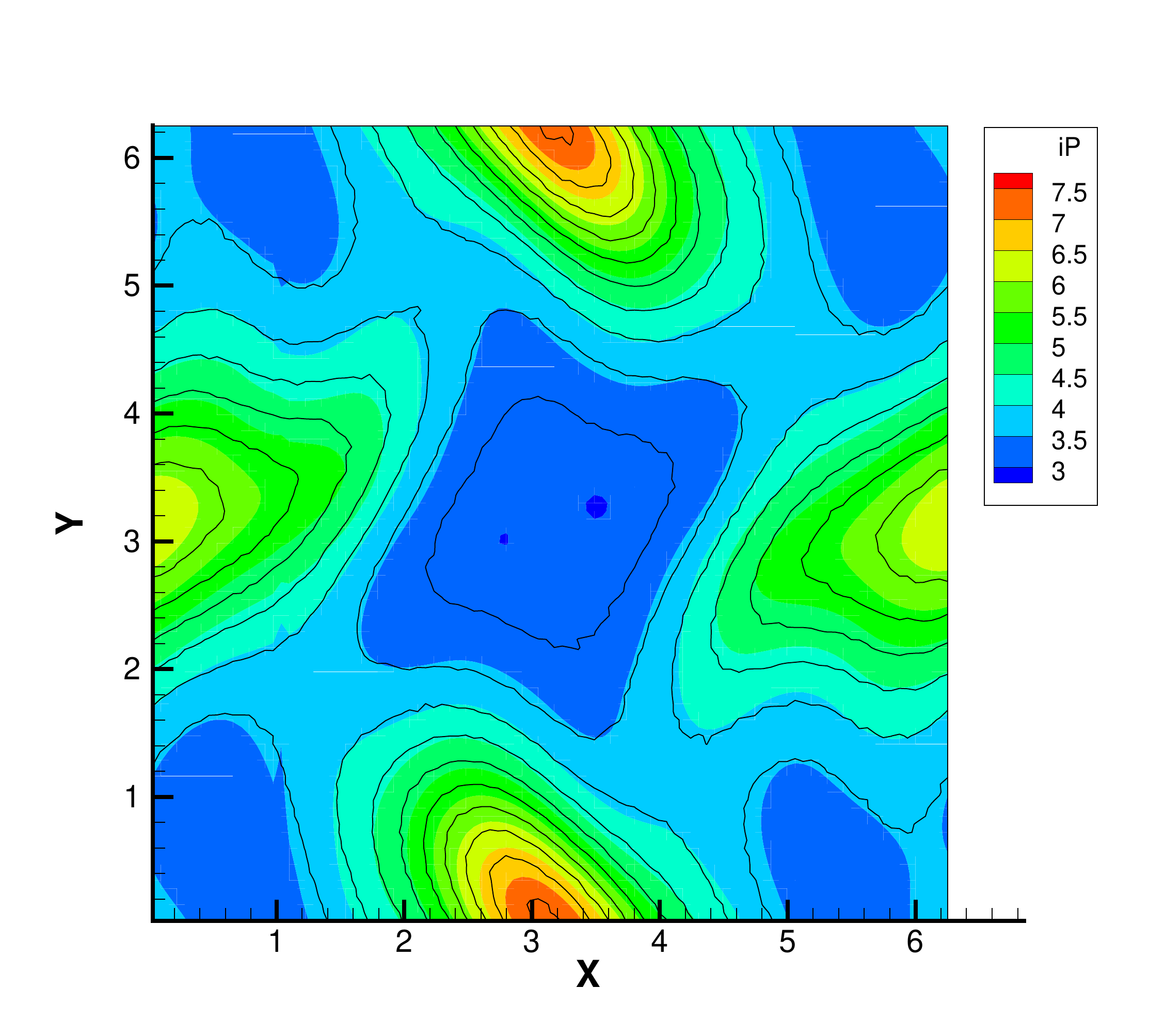}
         \caption{Ion pressure contour}
     \end{subfigure}
     \hfill
     \begin{subfigure}[b]{0.48\textwidth}
         \centering
         \includegraphics[width=\textwidth]{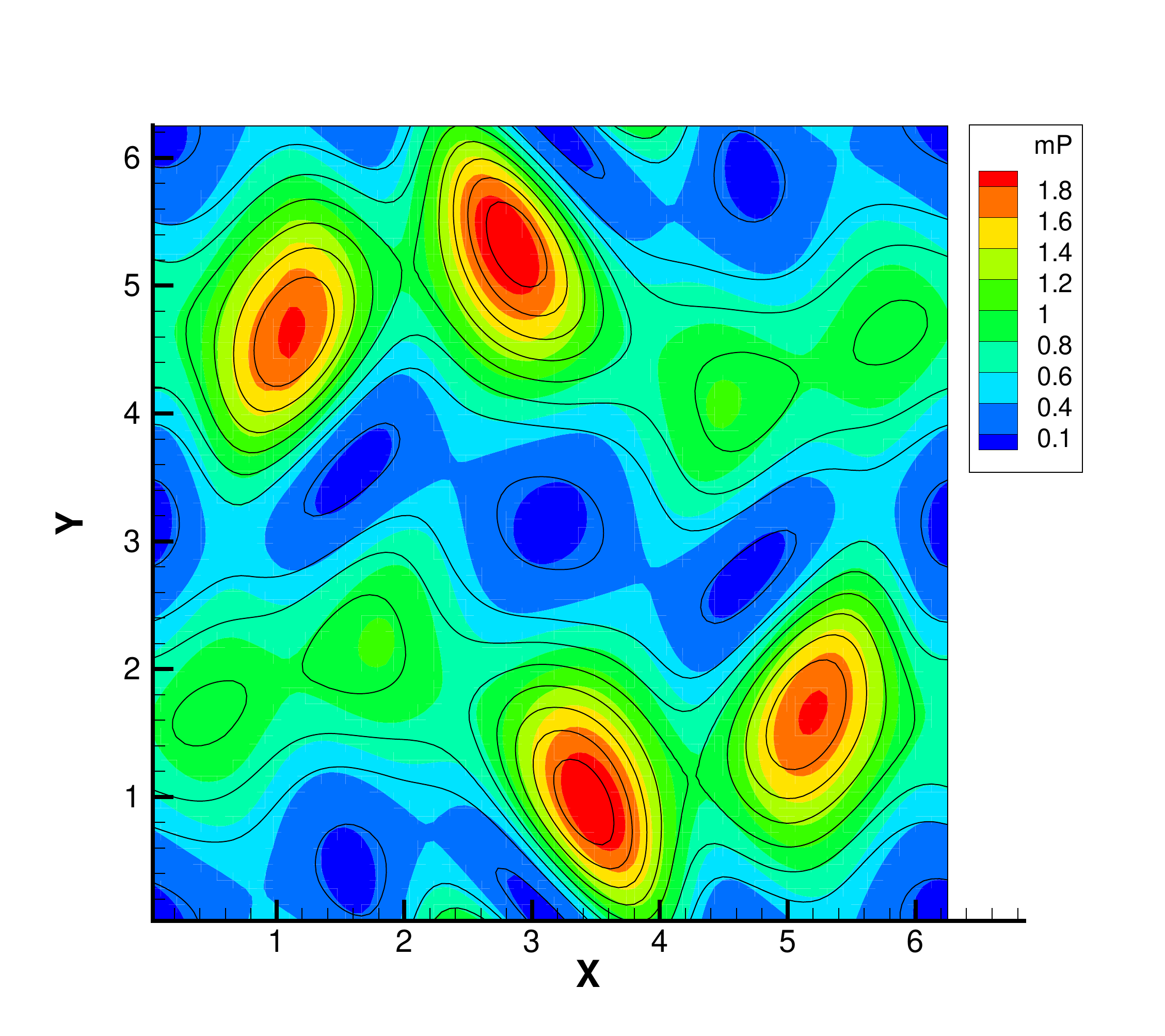}
         \caption{Magnetic pressure contour}
     \end{subfigure}
\caption{The results of the multiscale Orszag-Tang vortex problem with $\text{Kn}=1$ and $r=10^{-3}$ at $t=1$.
Contour line shows the UGKWP solution and contour flood shows the UGKS solution.}
\label{orzag1}
\end{figure}

\begin{figure}
     \centering
     \begin{subfigure}[b]{0.48\textwidth}
         \centering
         \includegraphics[width=\textwidth]{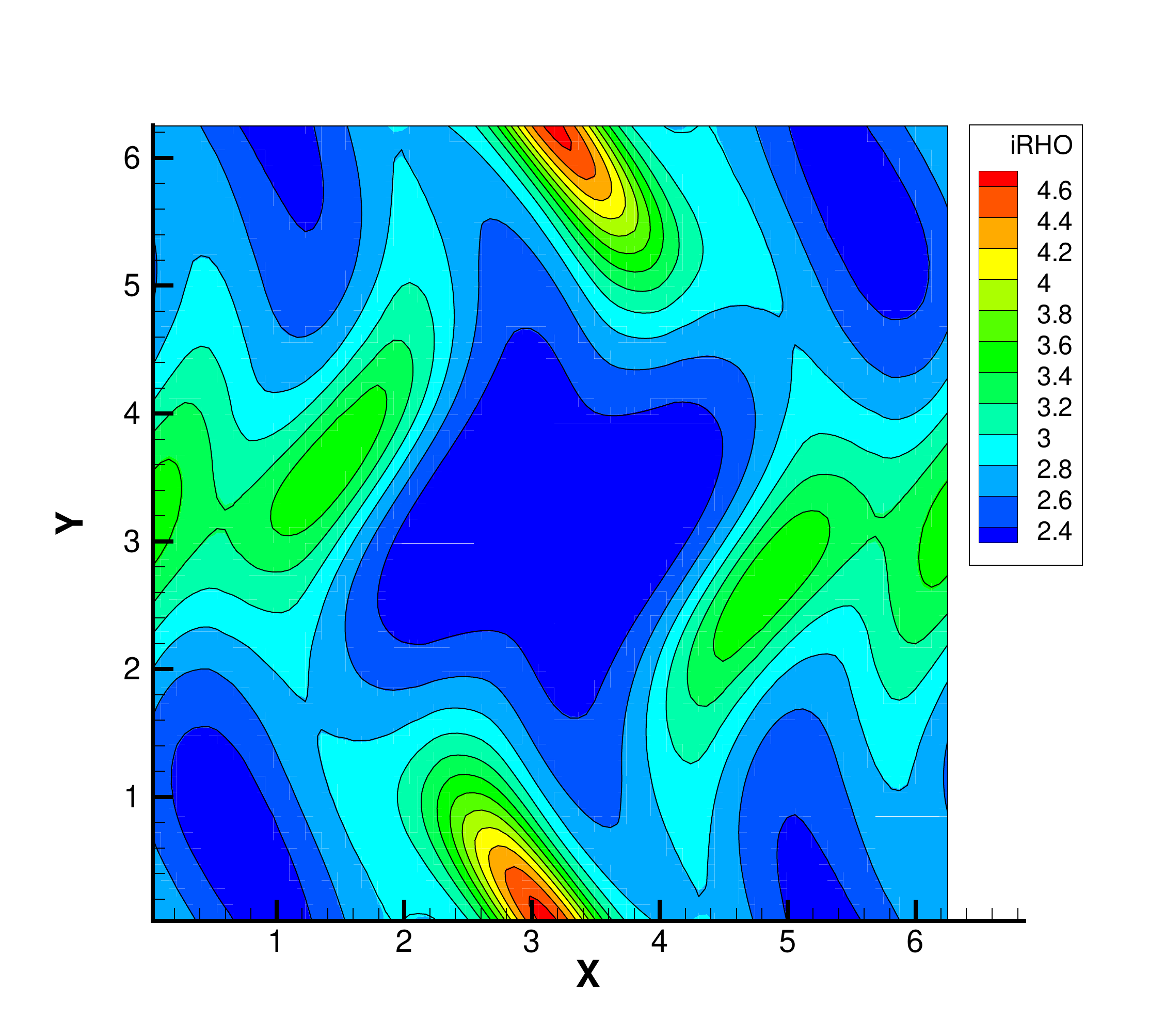}
         \caption{Ion density contour}
     \end{subfigure}
     \hfill
     \begin{subfigure}[b]{0.48\textwidth}
         \centering
         \includegraphics[width=\textwidth]{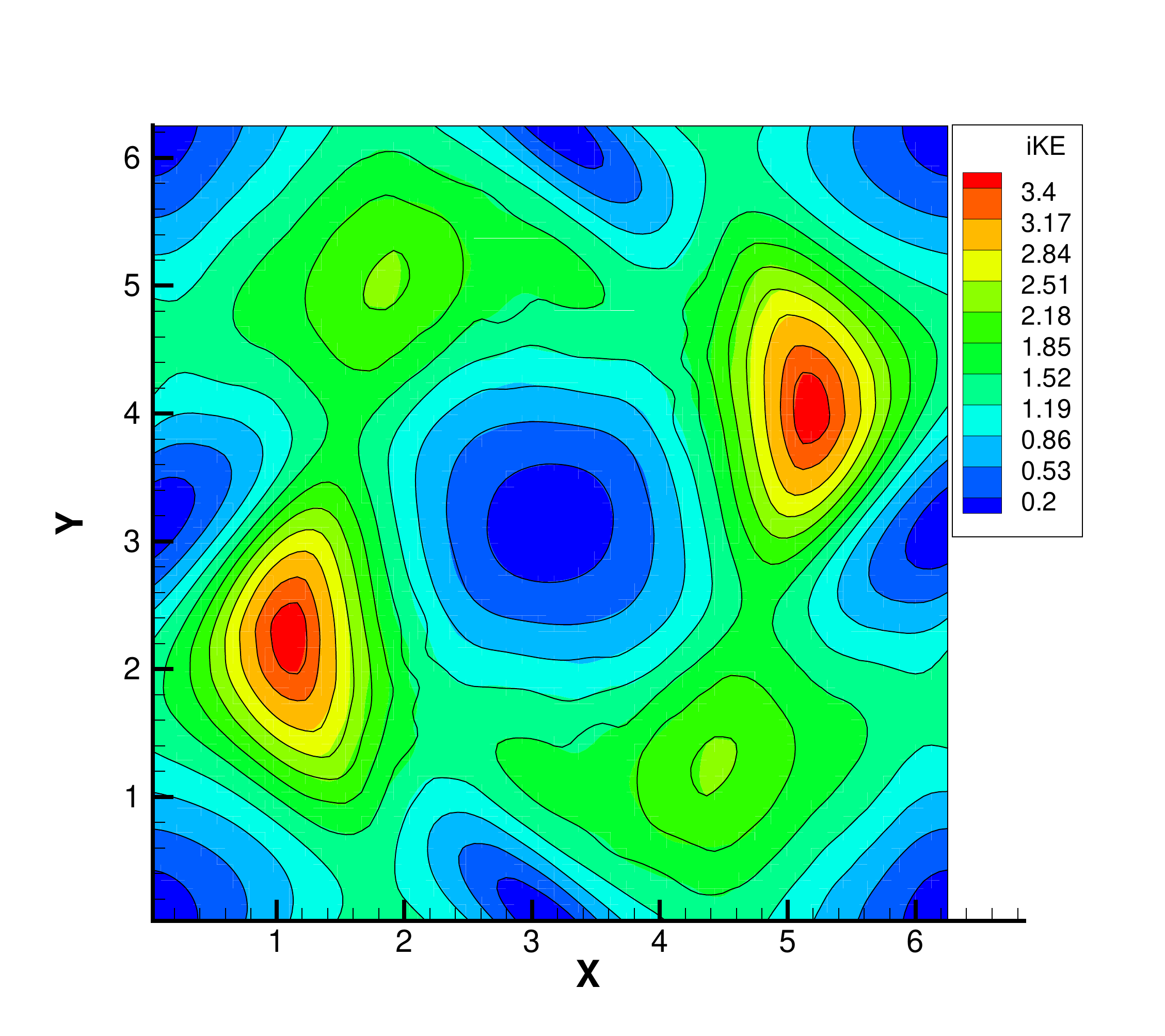}
         \caption{Ion kinetic energy contour}
     \end{subfigure}
     \vfill
     \begin{subfigure}[b]{0.48\textwidth}
         \centering
         \includegraphics[width=\textwidth]{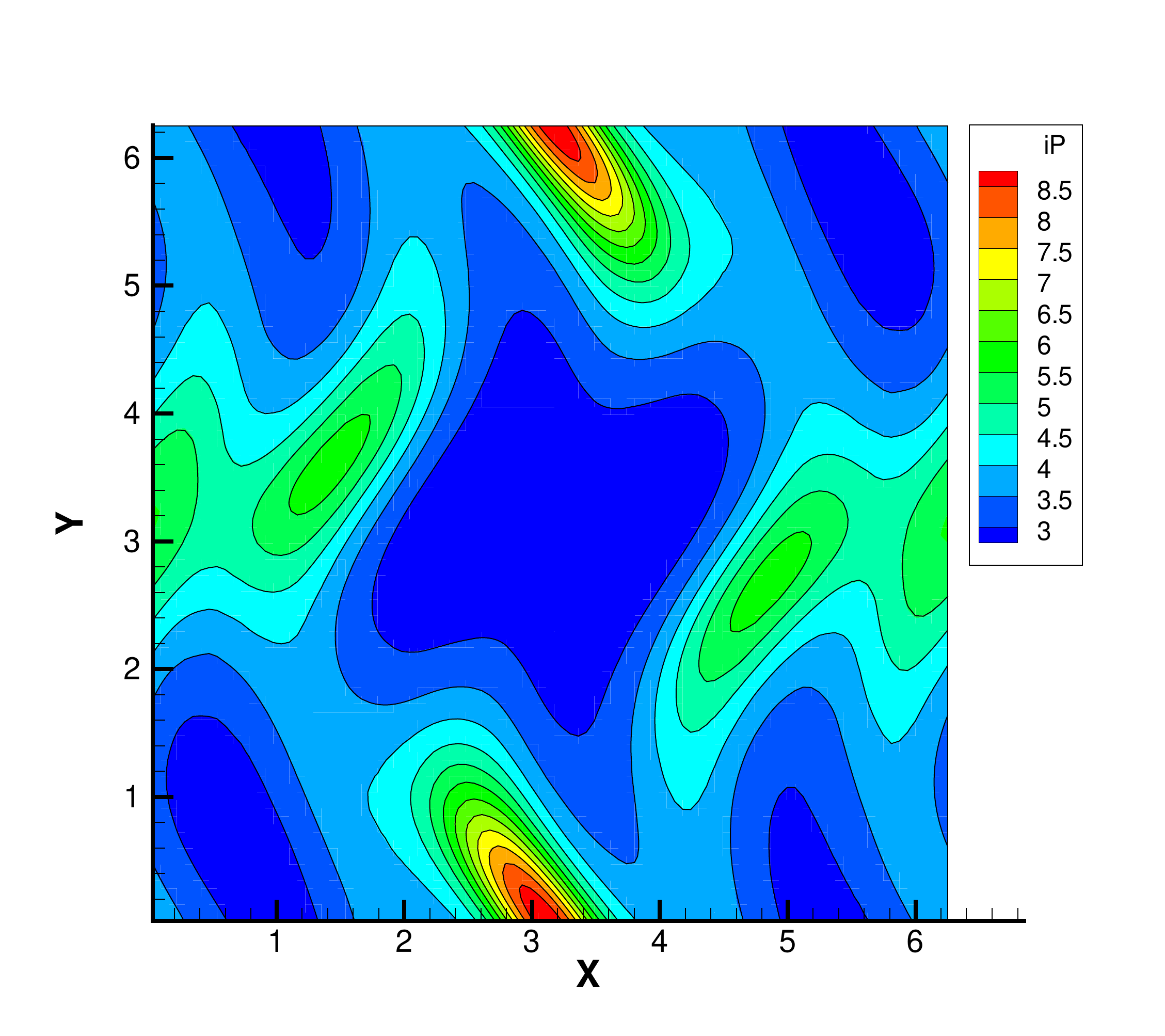}
         \caption{Ion pressure contour}
     \end{subfigure}
     \hfill
     \begin{subfigure}[b]{0.48\textwidth}
         \centering
         \includegraphics[width=\textwidth]{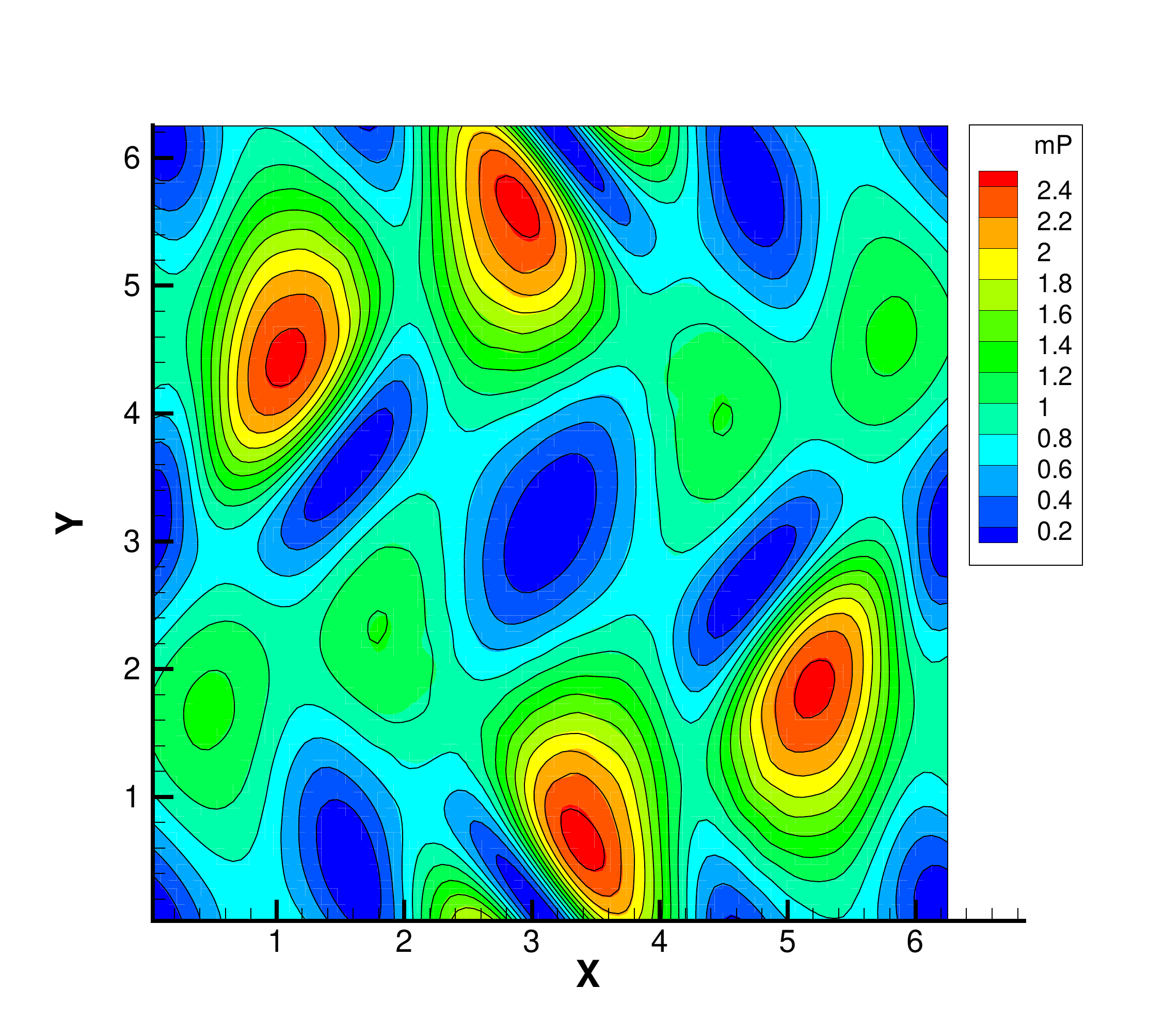}
         \caption{Magnetic pressure contour}
     \end{subfigure}
\caption{The results of the multiscale Orszag-Tang vortex problem with $\text{Kn}=10^{-4}$ and $r=10^{-3}$ at $t=1$.
Contour line shows the UGKWP solution and contour flood shows the UGKS solution.}
\label{orzag2}
\end{figure}

\begin{figure}
     \centering
     \begin{subfigure}[b]{0.48\textwidth}
         \centering
         \includegraphics[width=\textwidth]{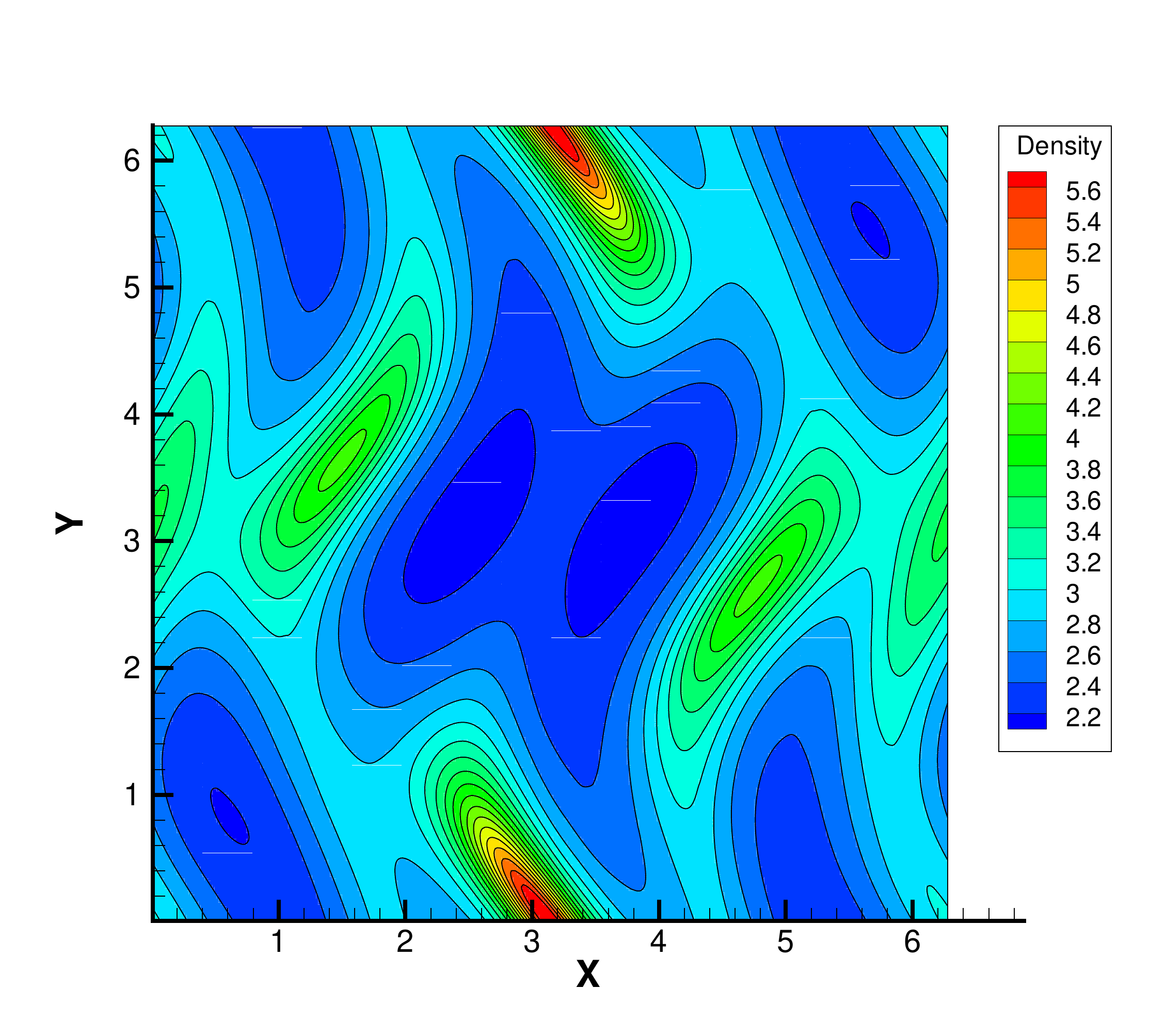}
         \caption{Density contour}
     \end{subfigure}
     \hfill
     \begin{subfigure}[b]{0.48\textwidth}
         \centering
         \includegraphics[width=\textwidth]{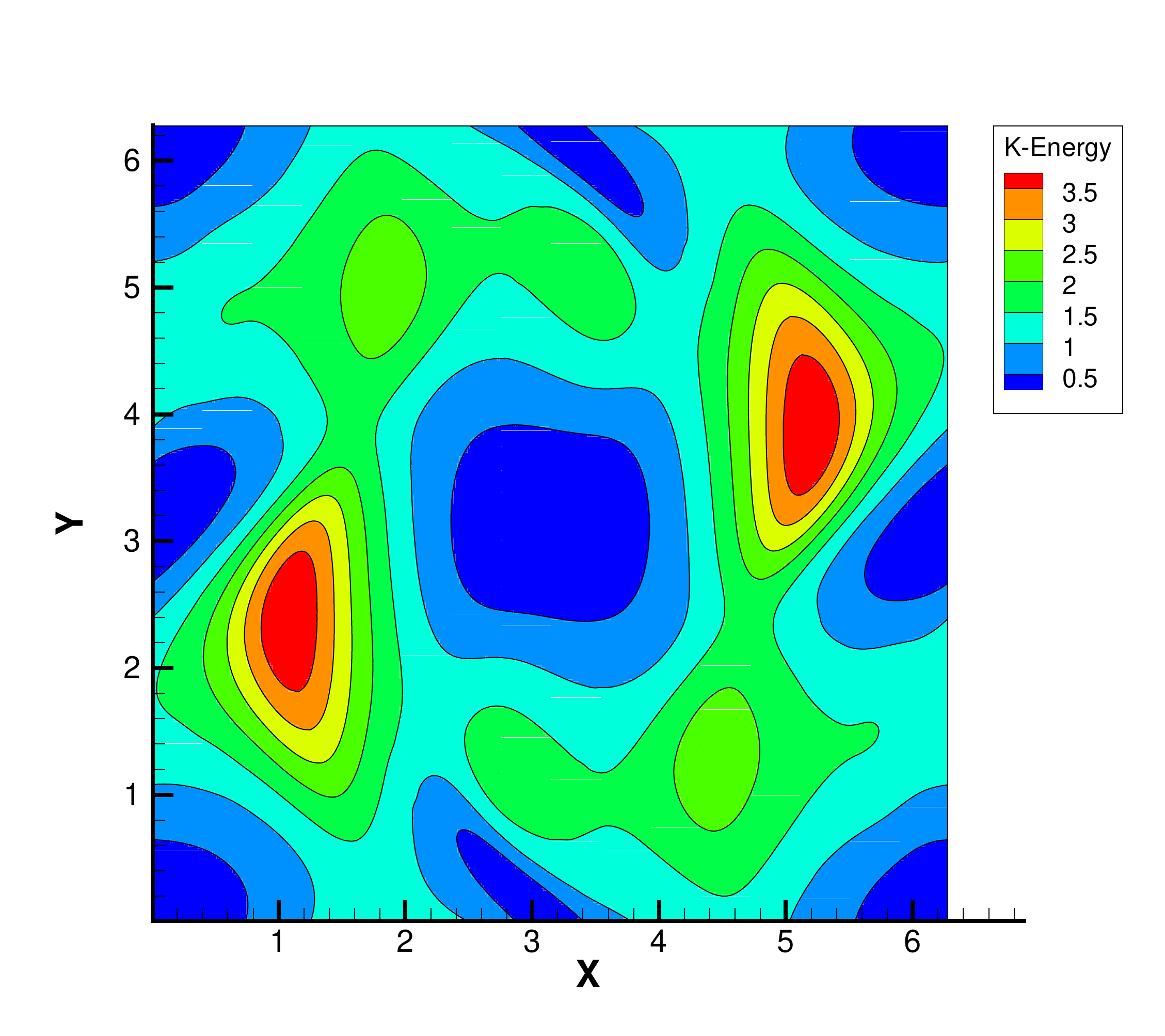}
         \caption{Kinetic energy contour}
     \end{subfigure}
     \vfill
     \begin{subfigure}[b]{0.48\textwidth}
         \centering
         \includegraphics[width=\textwidth]{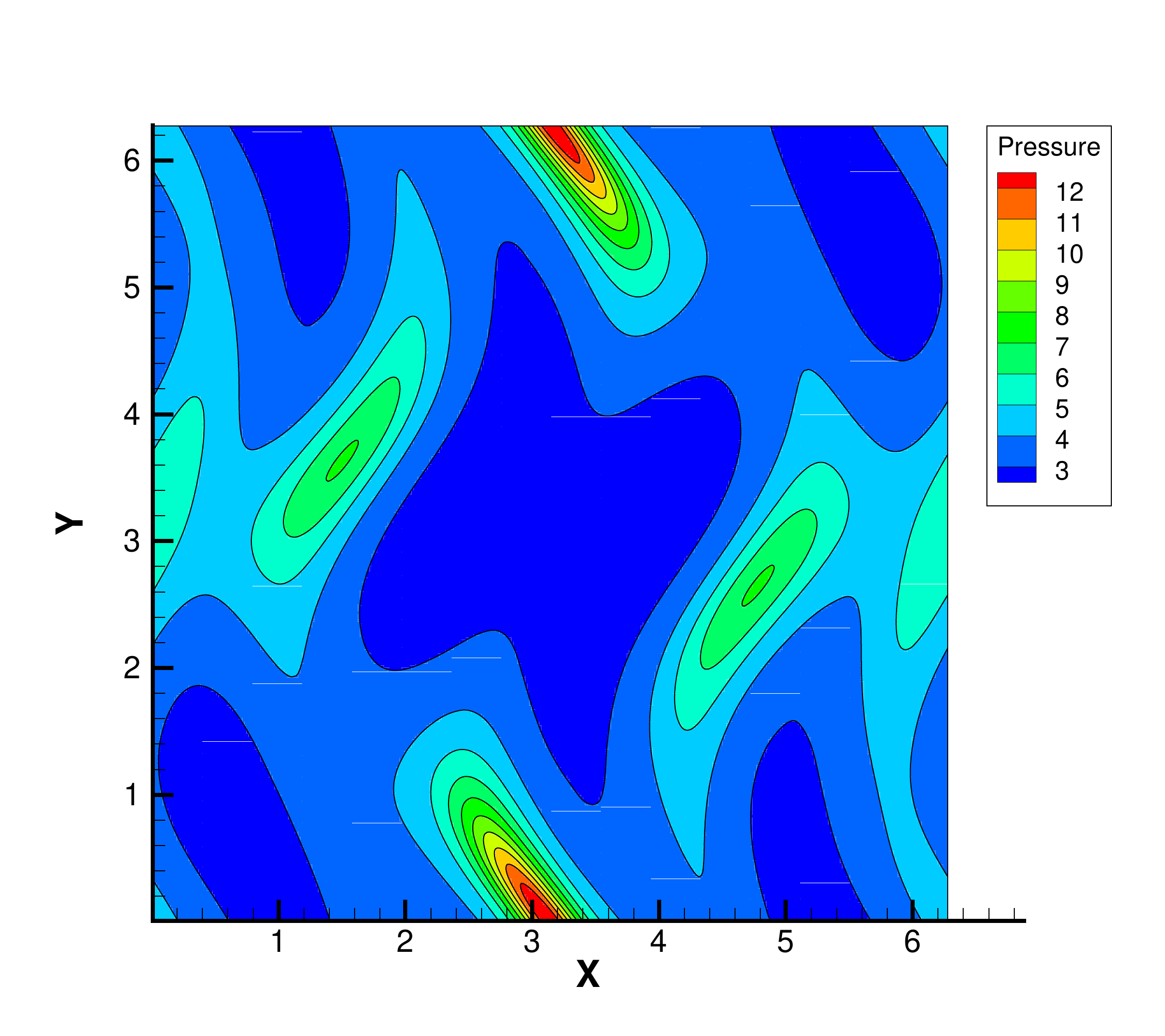}
         \caption{Pressure contour}
     \end{subfigure}
     \hfill
     \begin{subfigure}[b]{0.48\textwidth}
         \centering
         \includegraphics[width=\textwidth]{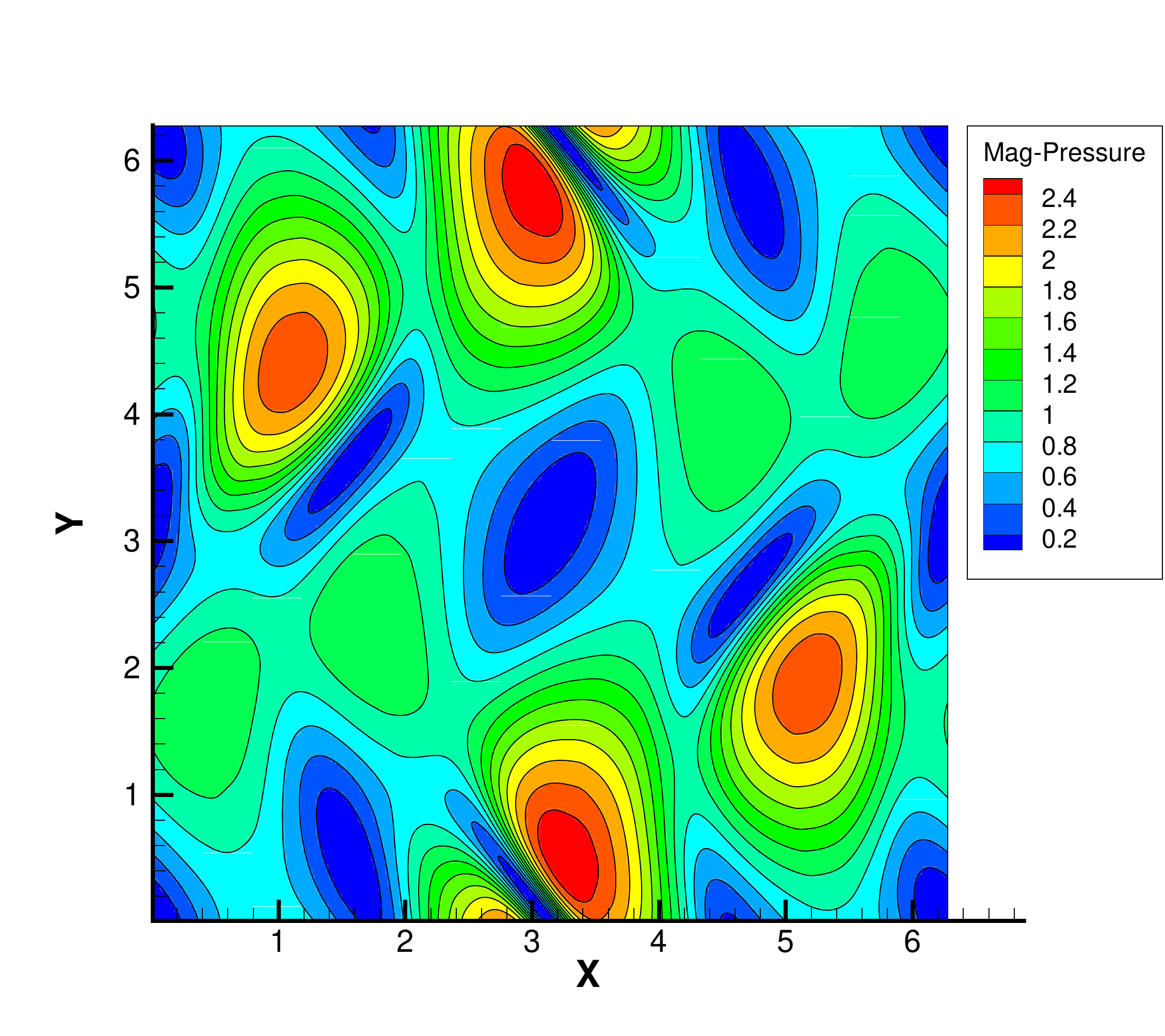}
         \caption{Magnetic pressure contour}
     \end{subfigure}
\caption{The UGKWP results of the multiscale Orszag-Tang vortex problem with $\text{Kn}=10^{-4}$ and $r=0$ at $t=1$.}
\label{orzag3}
\end{figure}

\begin{figure}
     \centering
     \begin{subfigure}[b]{0.48\textwidth}
         \centering
         \includegraphics[width=\textwidth]{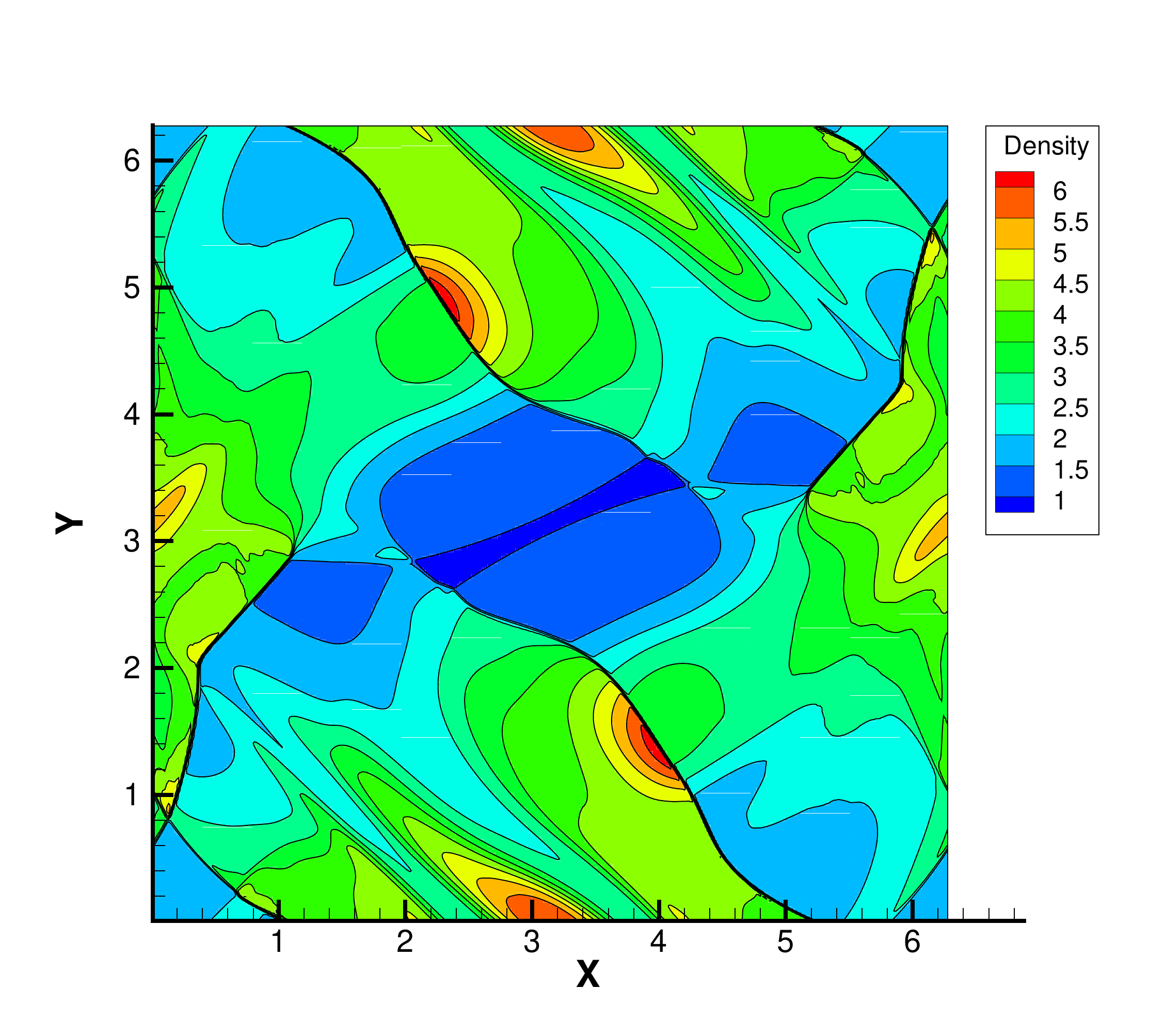}
         \caption{Density contour}
     \end{subfigure}
     \hfill
     \begin{subfigure}[b]{0.48\textwidth}
         \centering
         \includegraphics[width=\textwidth]{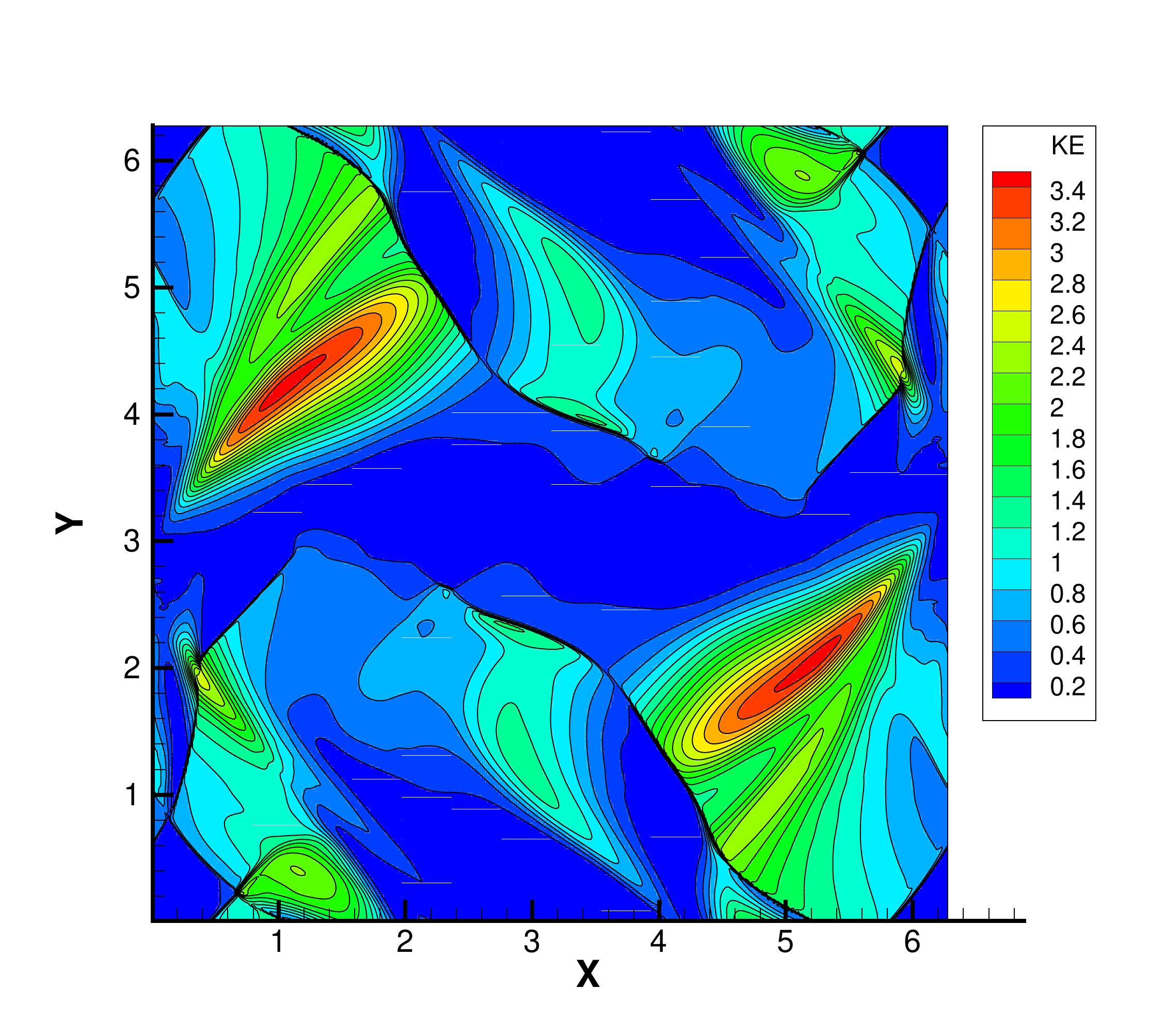}
         \caption{Kinetic energy contour}
     \end{subfigure}
     \vfill
     \begin{subfigure}[b]{0.48\textwidth}
         \centering
         \includegraphics[width=\textwidth]{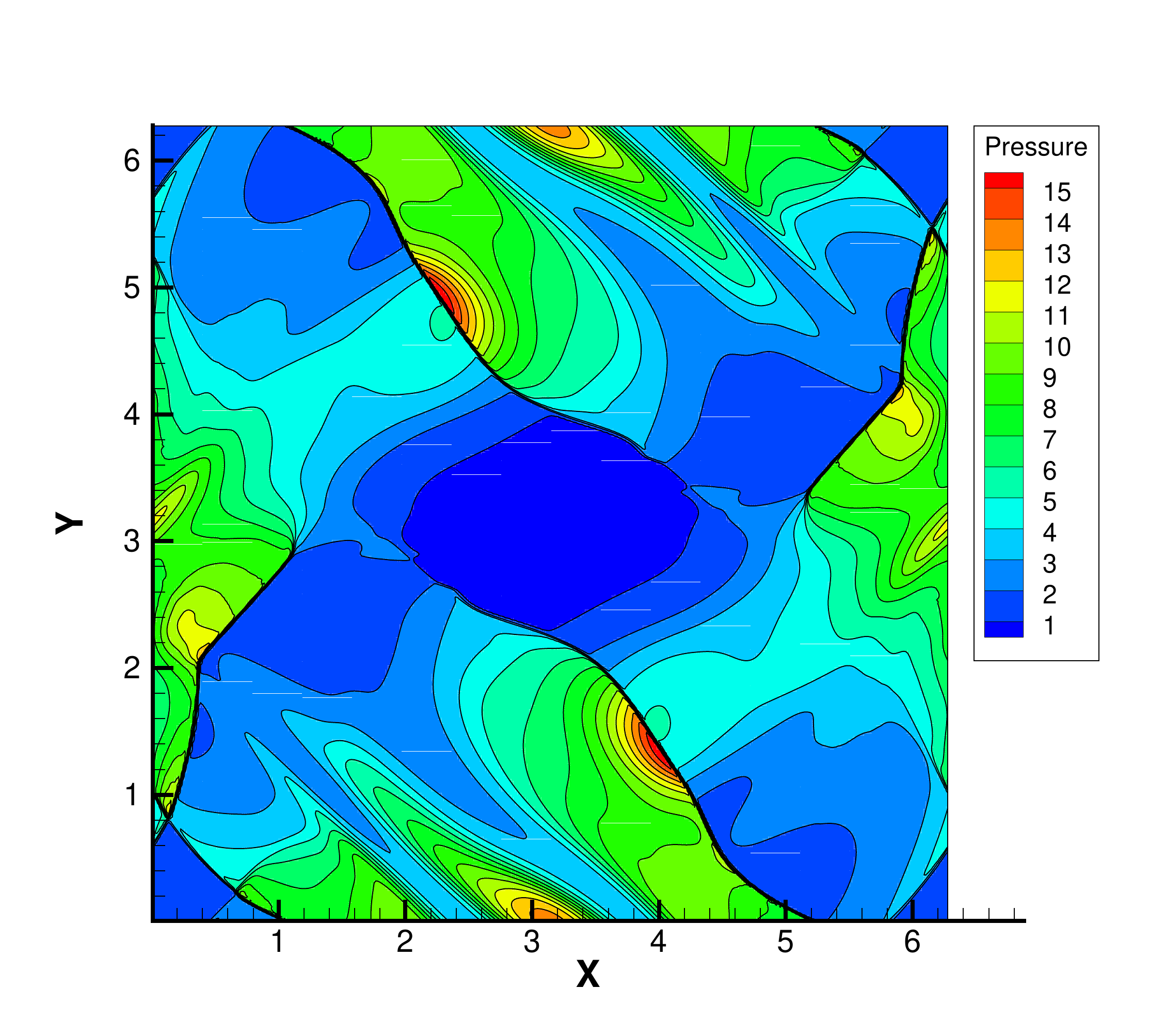}
         \caption{Pressure contour}
     \end{subfigure}
     \hfill
     \begin{subfigure}[b]{0.48\textwidth}
         \centering
         \includegraphics[width=\textwidth]{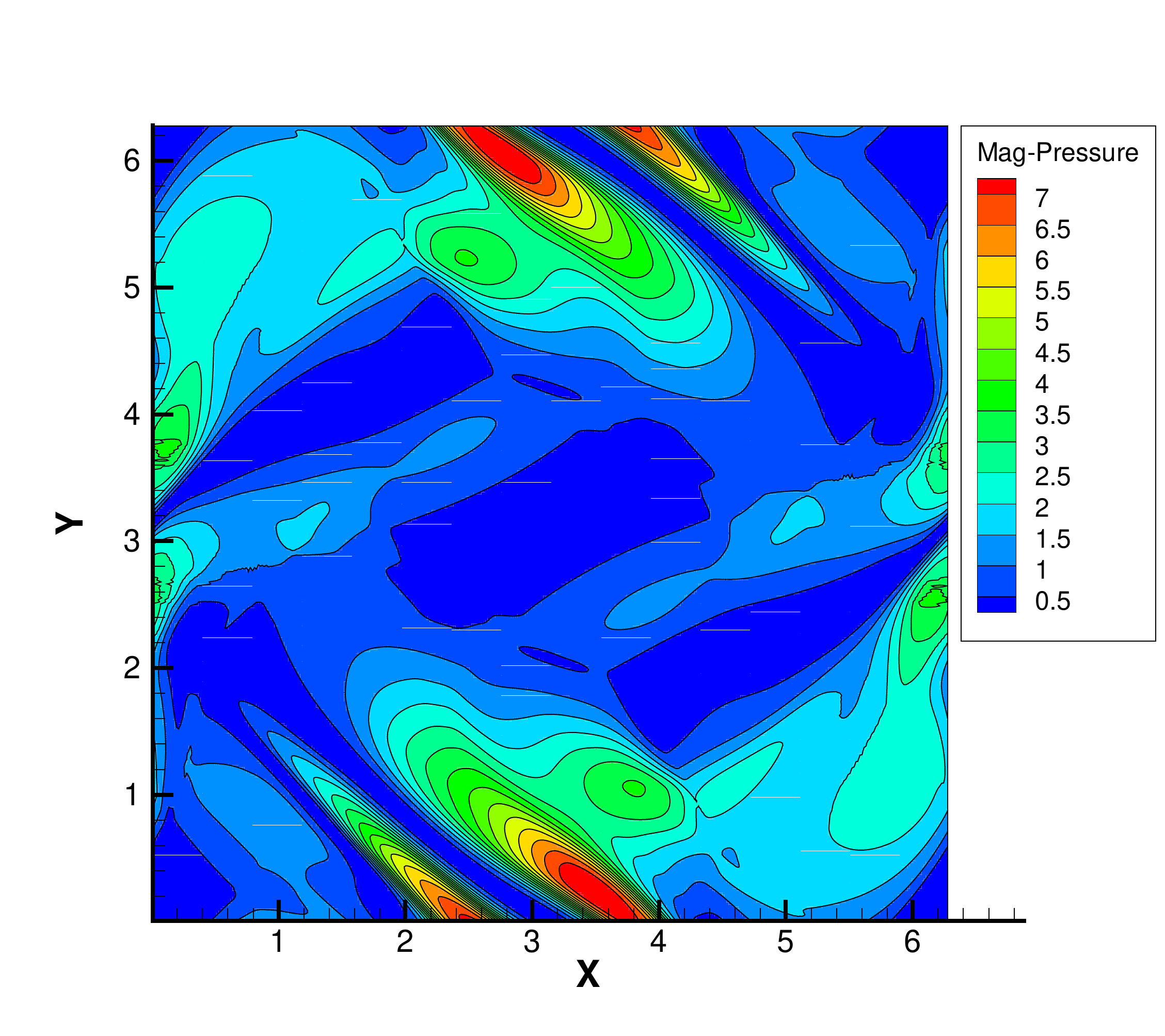}
         \caption{Magnetic pressure contour}
     \end{subfigure}
\caption{The UGKWP results of the multiscale Orszag-Tang vortex problem with $\text{Kn}=10^{-4}$ and $r=0$ at $t=2$.}
\label{orzag4}
\end{figure}

\begin{figure}
     \centering
     \begin{subfigure}[b]{0.48\textwidth}
         \centering
         \includegraphics[width=\textwidth]{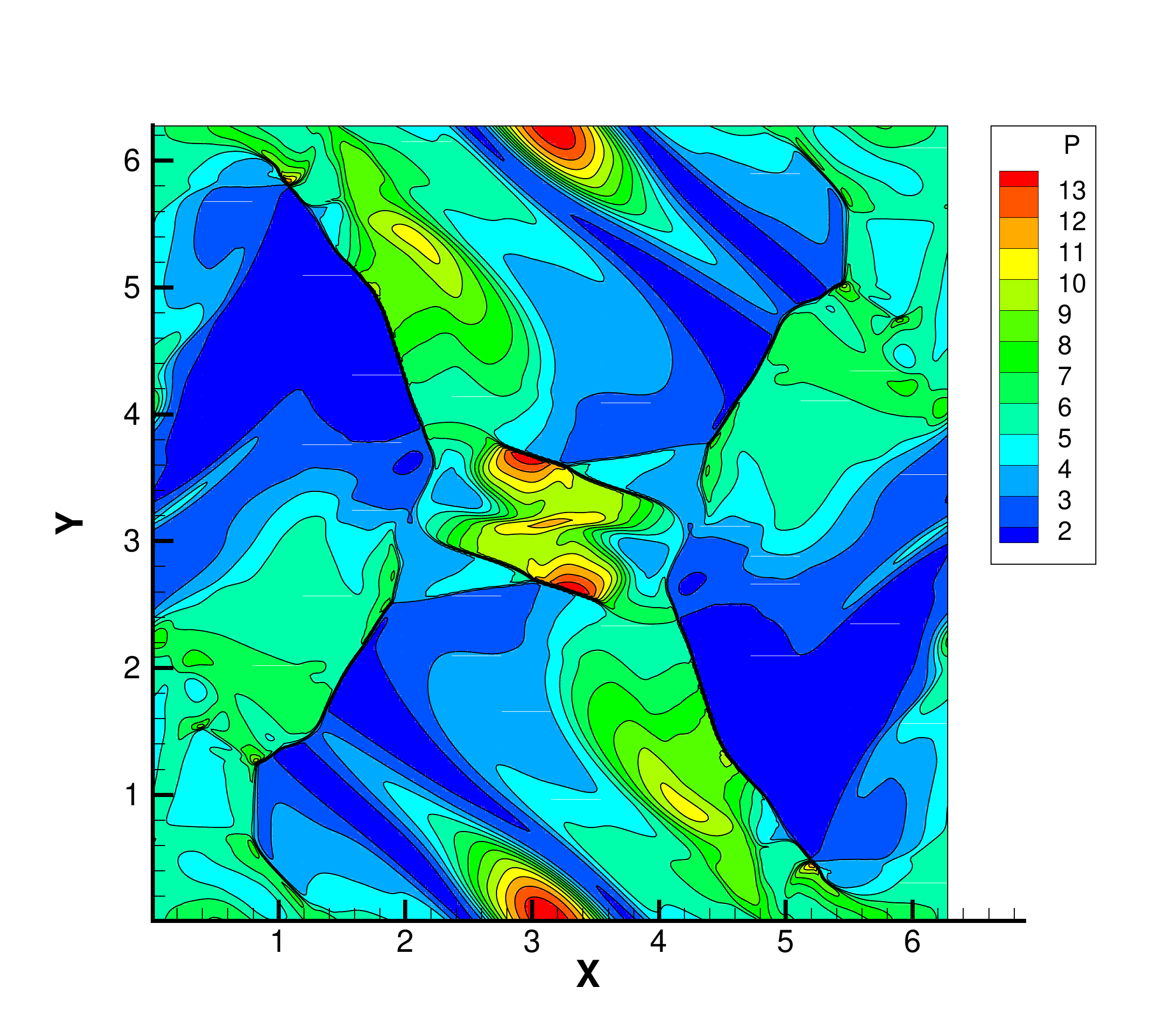}
         \caption{Pressure contour}
     \end{subfigure}
     \hfill
     \begin{subfigure}[b]{0.48\textwidth}
         \centering
         \includegraphics[width=\textwidth]{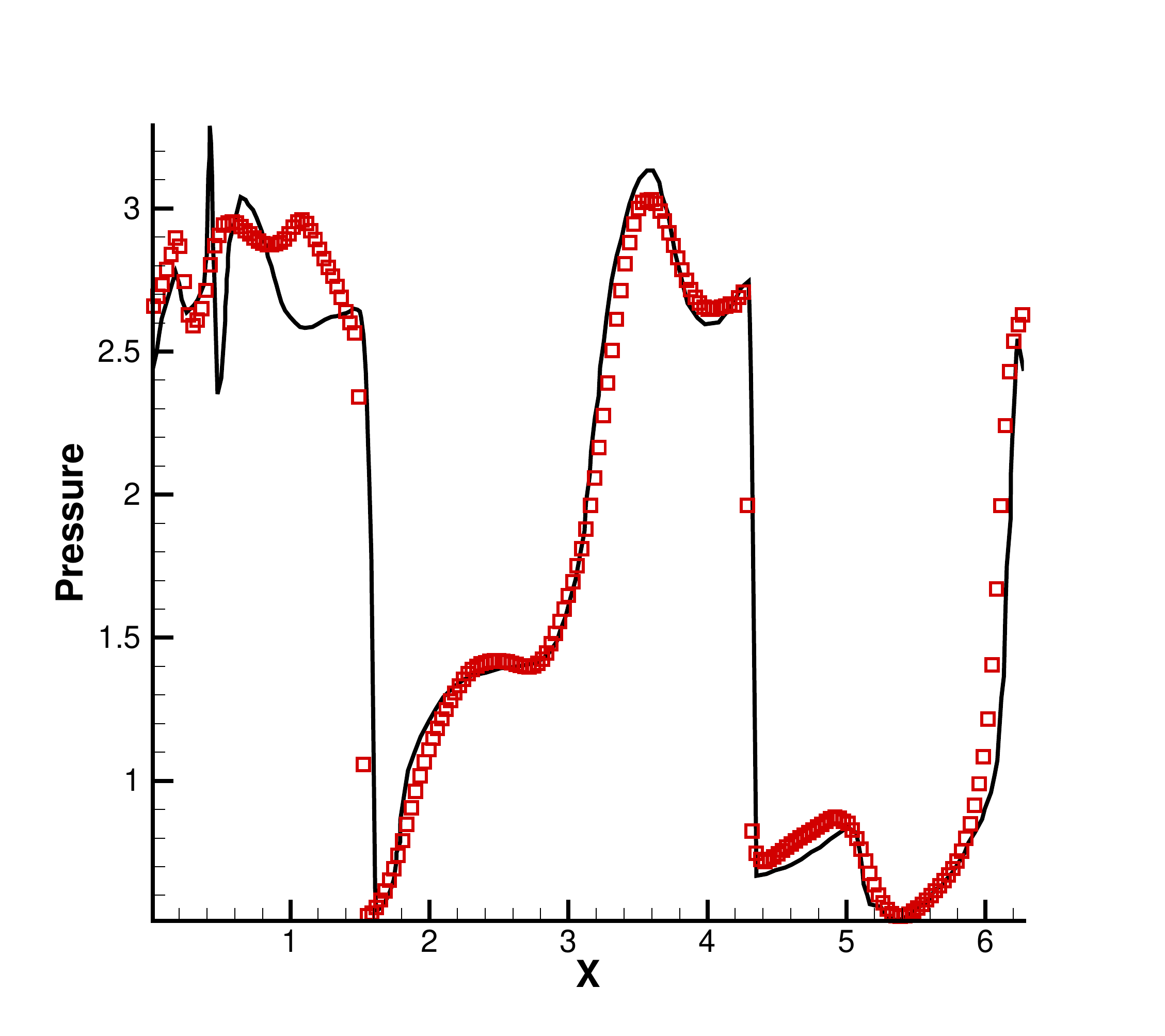}
         \caption{Pressure distribution along $y=0.625\pi$}
     \end{subfigure}
\caption{The UGKWP results of the multiscale Orszag-Tang vortex problem with $\text{Kn}=10^{-4}$ and $r=0$ at $t=3$.
Sub-figure (a) shows the UGKWP pressure contour, and sub-figure (b) shows the comparison of UGKWP and MHD pressure distribution along $y=0.625\pi$.}
\label{orzag5}
\end{figure}

\begin{figure}
     \centering
     \begin{subfigure}[b]{0.48\textwidth}
         \centering
         \includegraphics[width=\textwidth]{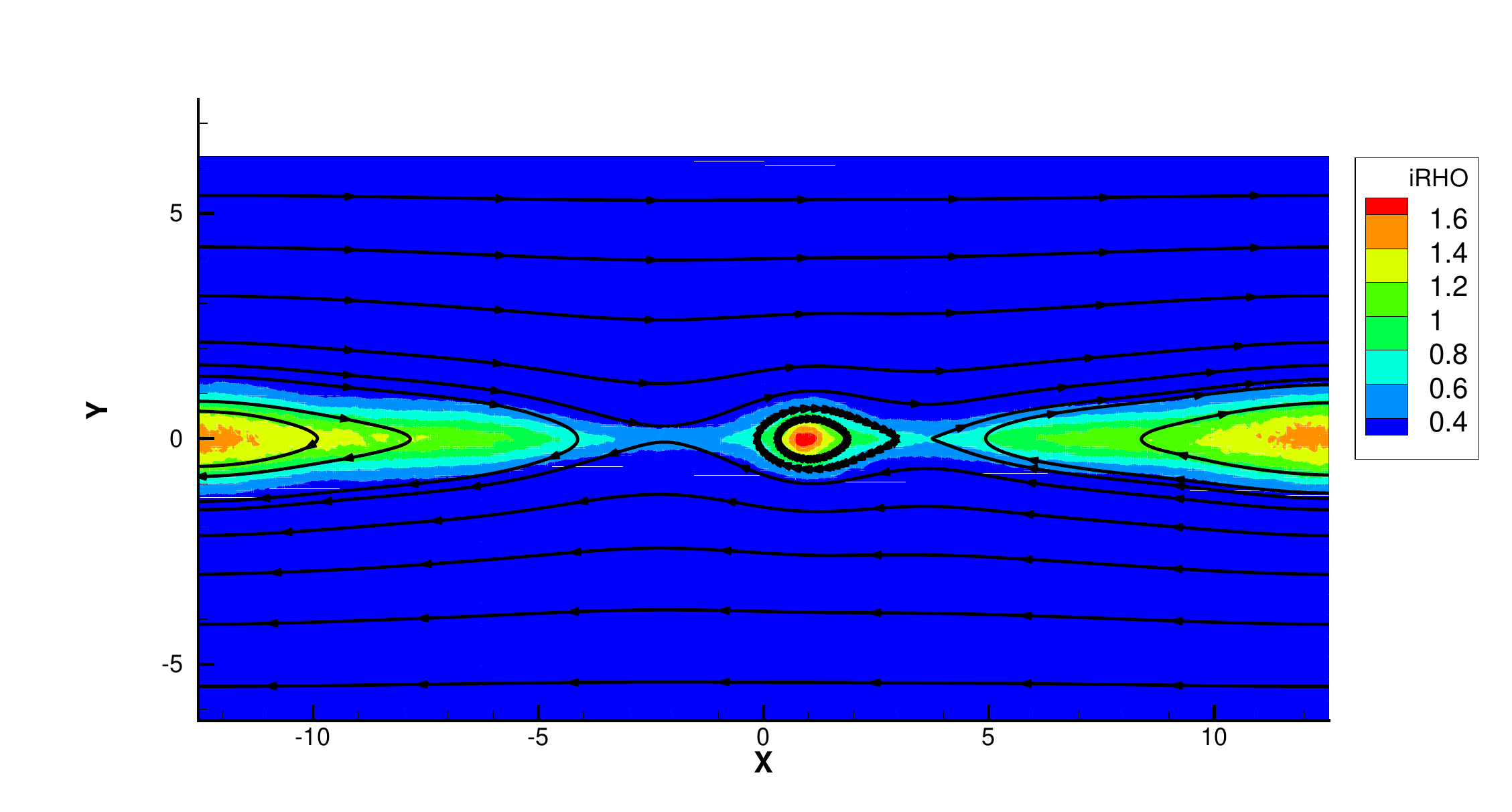}
         \caption{Magnetic field and ion density contour}
     \end{subfigure}
     \hfill
     \begin{subfigure}[b]{0.48\textwidth}
         \centering
         \includegraphics[width=\textwidth]{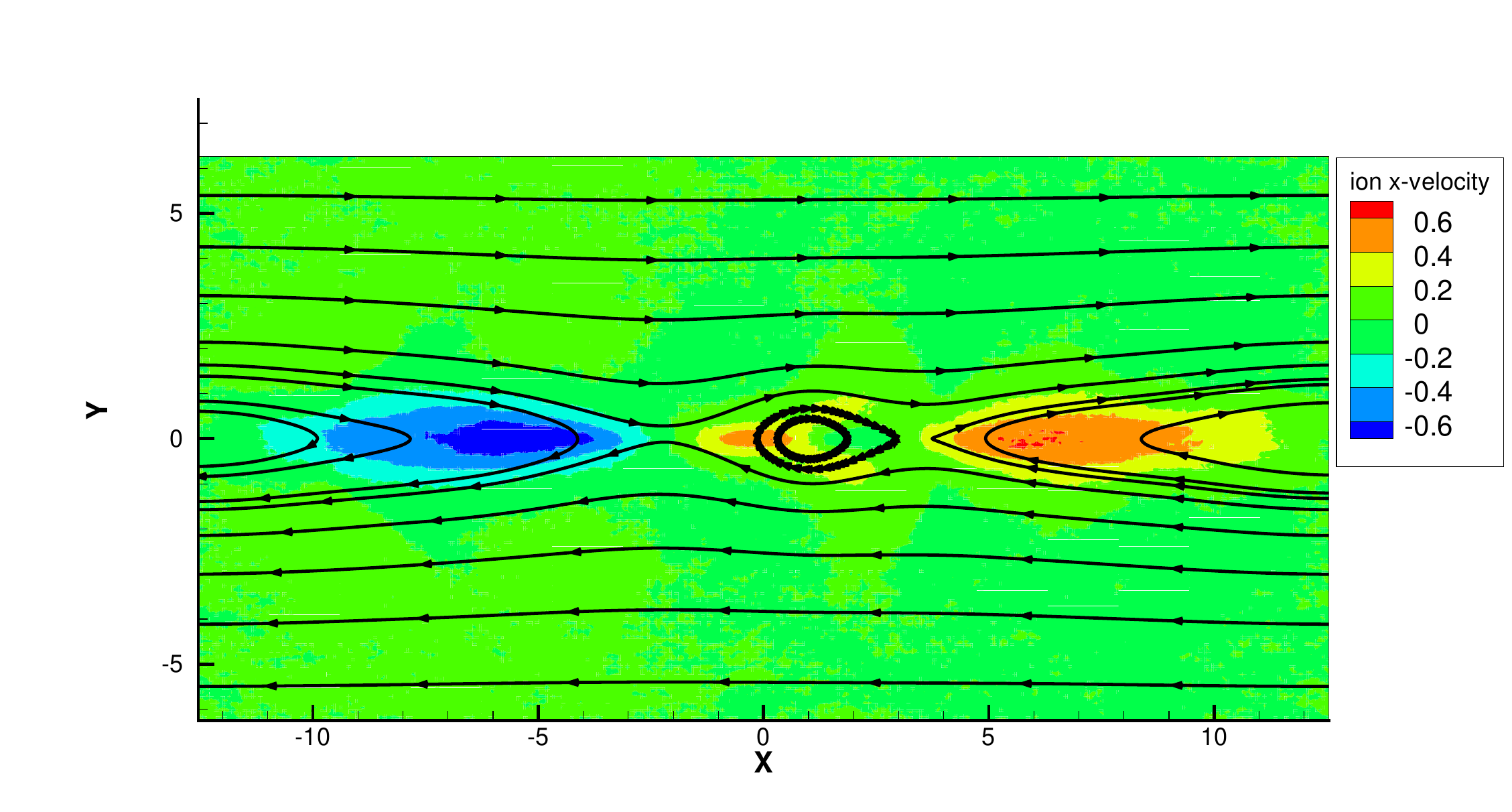}
         \caption{Magnetic field and ion x-velocity contour}
     \end{subfigure}
     \vfill
     \begin{subfigure}[b]{0.48\textwidth}
         \centering
         \includegraphics[width=\textwidth]{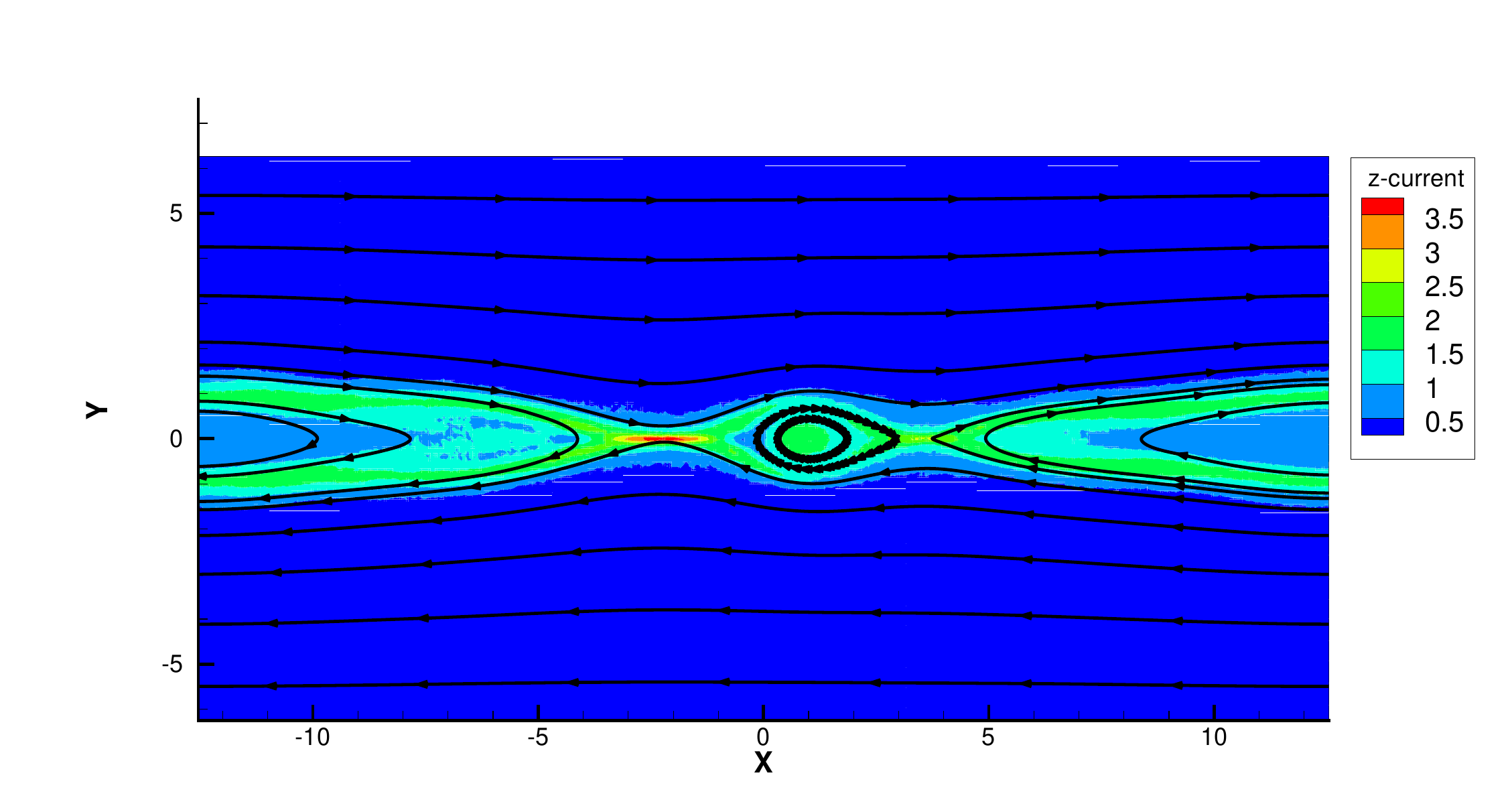}
         \caption{Magnetic field and z-current contour}
     \end{subfigure}
     \hfill
     \begin{subfigure}[b]{0.48\textwidth}
         \centering
         \includegraphics[width=\textwidth]{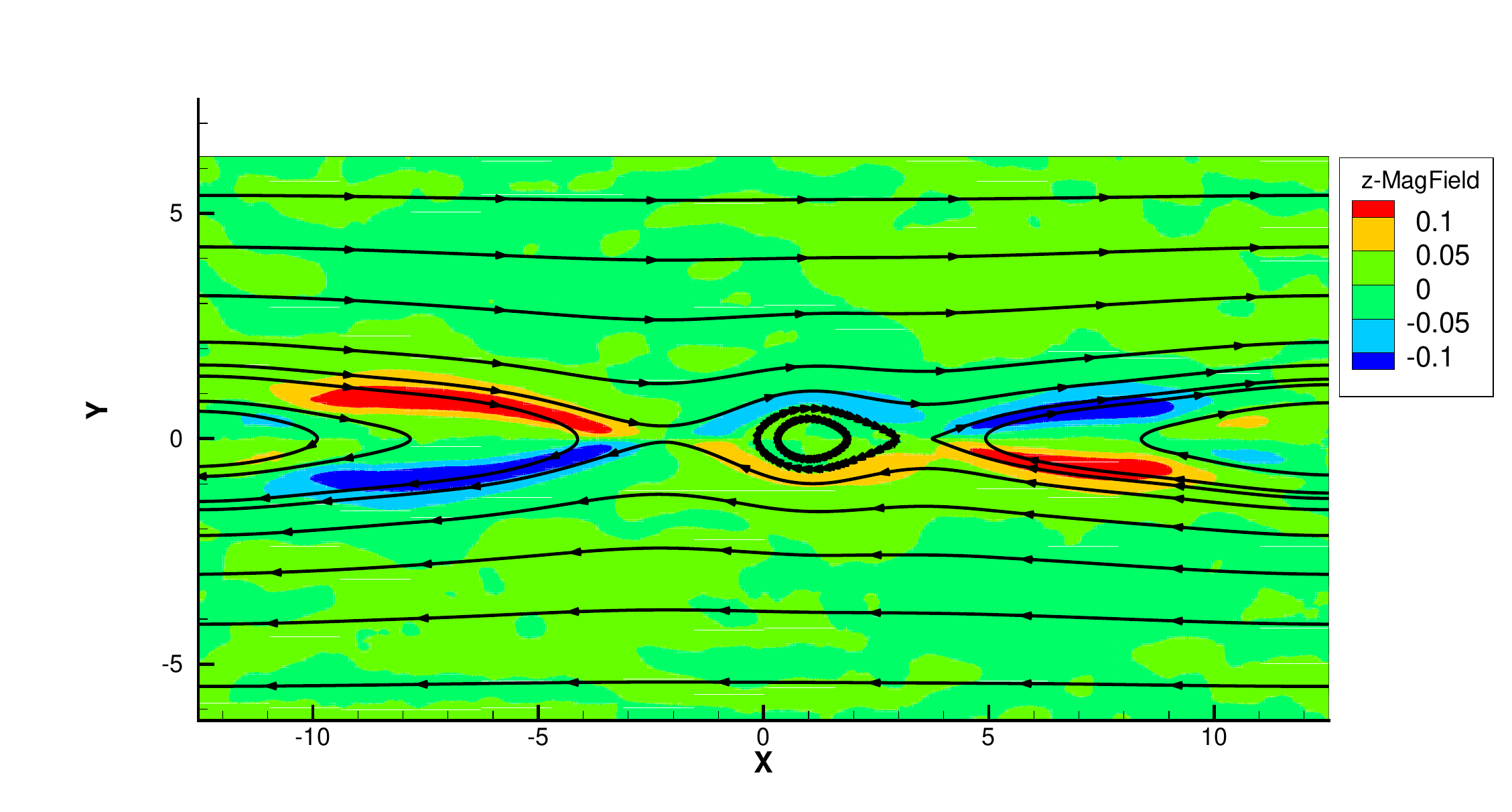}
         \caption{Magnetic field and z-magnetic field contour}
     \end{subfigure}
\caption{The UGKWP results of magnetic reconnection with Kn=$10^{-3}$ at $\omega_{pi} t=15$.}
\label{reconnection1}
\end{figure}

\begin{figure}
     \centering
     \begin{subfigure}[b]{0.48\textwidth}
         \centering
         \includegraphics[width=\textwidth]{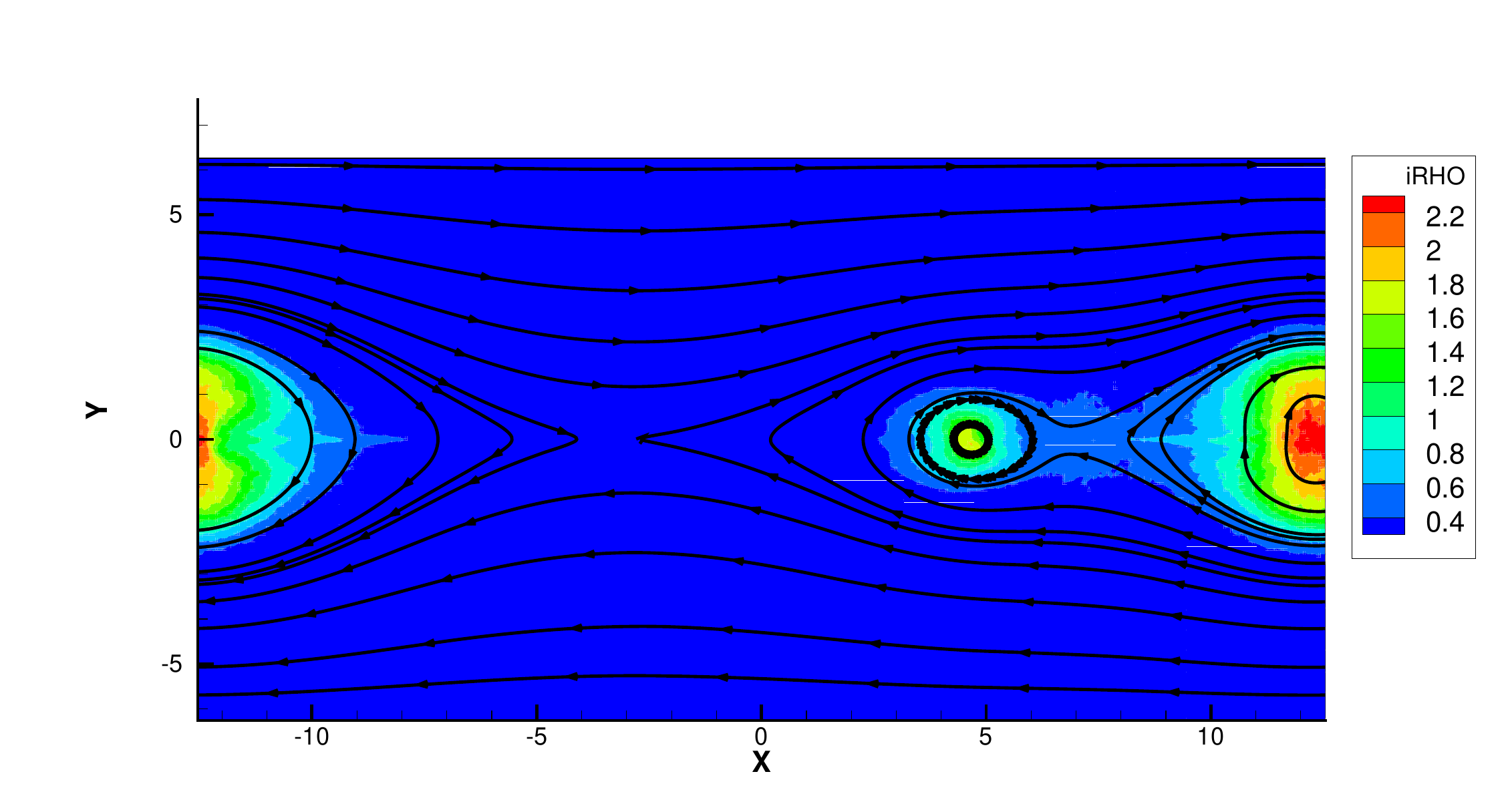}
         \caption{Magnetic field and ion density contour}
     \end{subfigure}
     \hfill
     \begin{subfigure}[b]{0.48\textwidth}
         \centering
         \includegraphics[width=\textwidth]{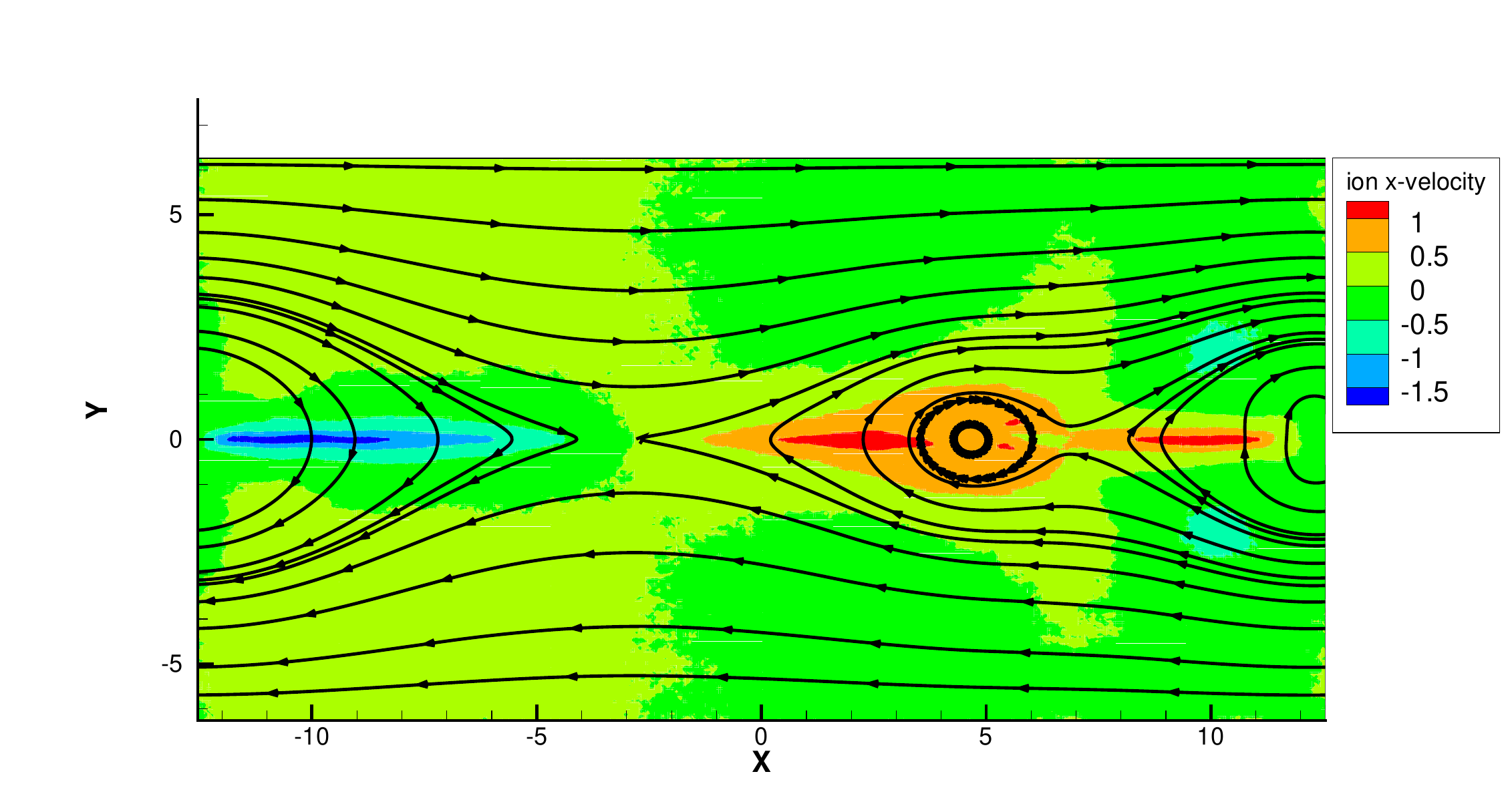}
         \caption{Magnetic field and ion x-velocity contour}
     \end{subfigure}
     \vfill
     \begin{subfigure}[b]{0.48\textwidth}
         \centering
         \includegraphics[width=\textwidth]{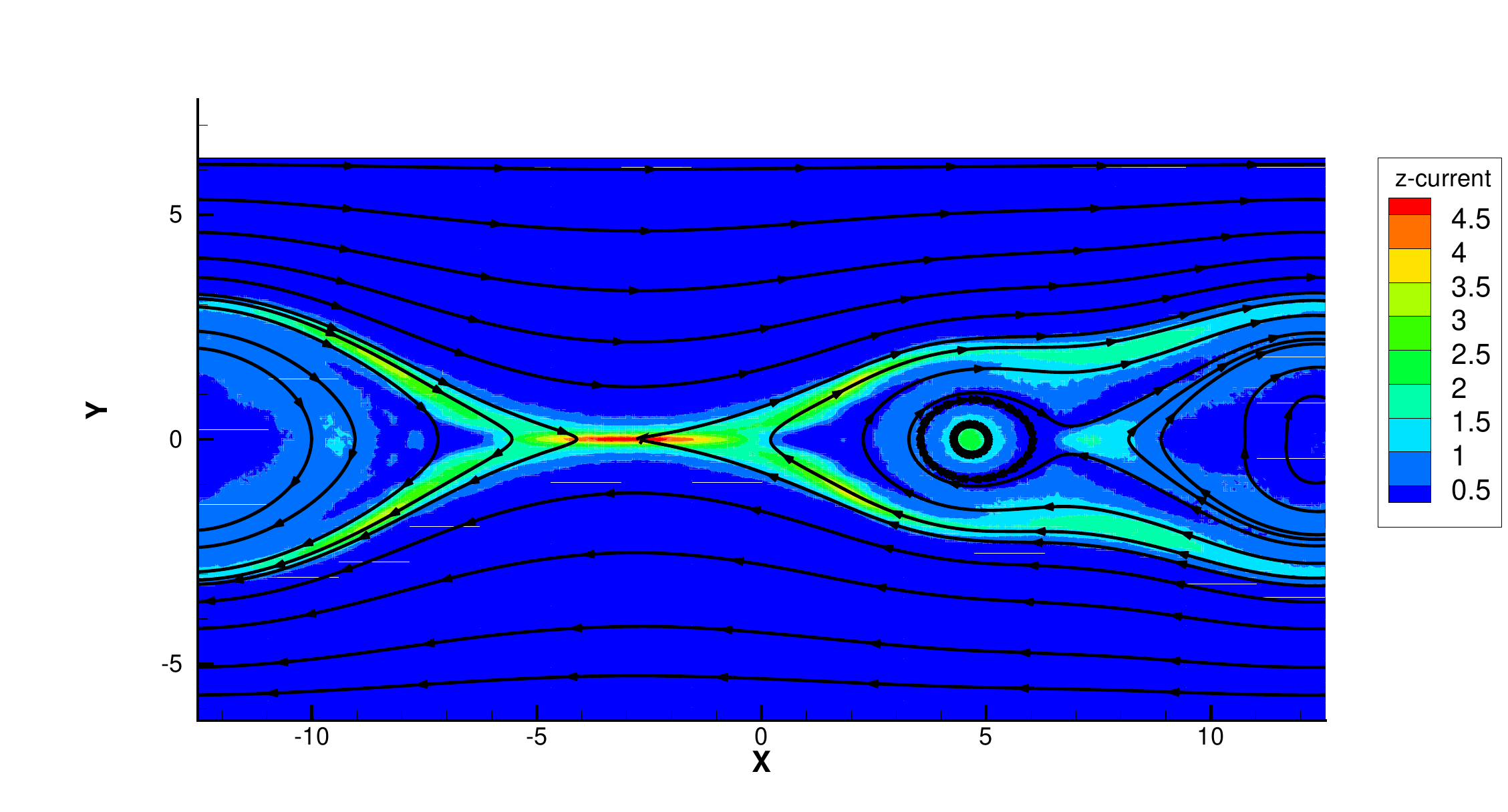}
         \caption{Magnetic field and z-current contour}
     \end{subfigure}
     \hfill
     \begin{subfigure}[b]{0.48\textwidth}
         \centering
         \includegraphics[width=\textwidth]{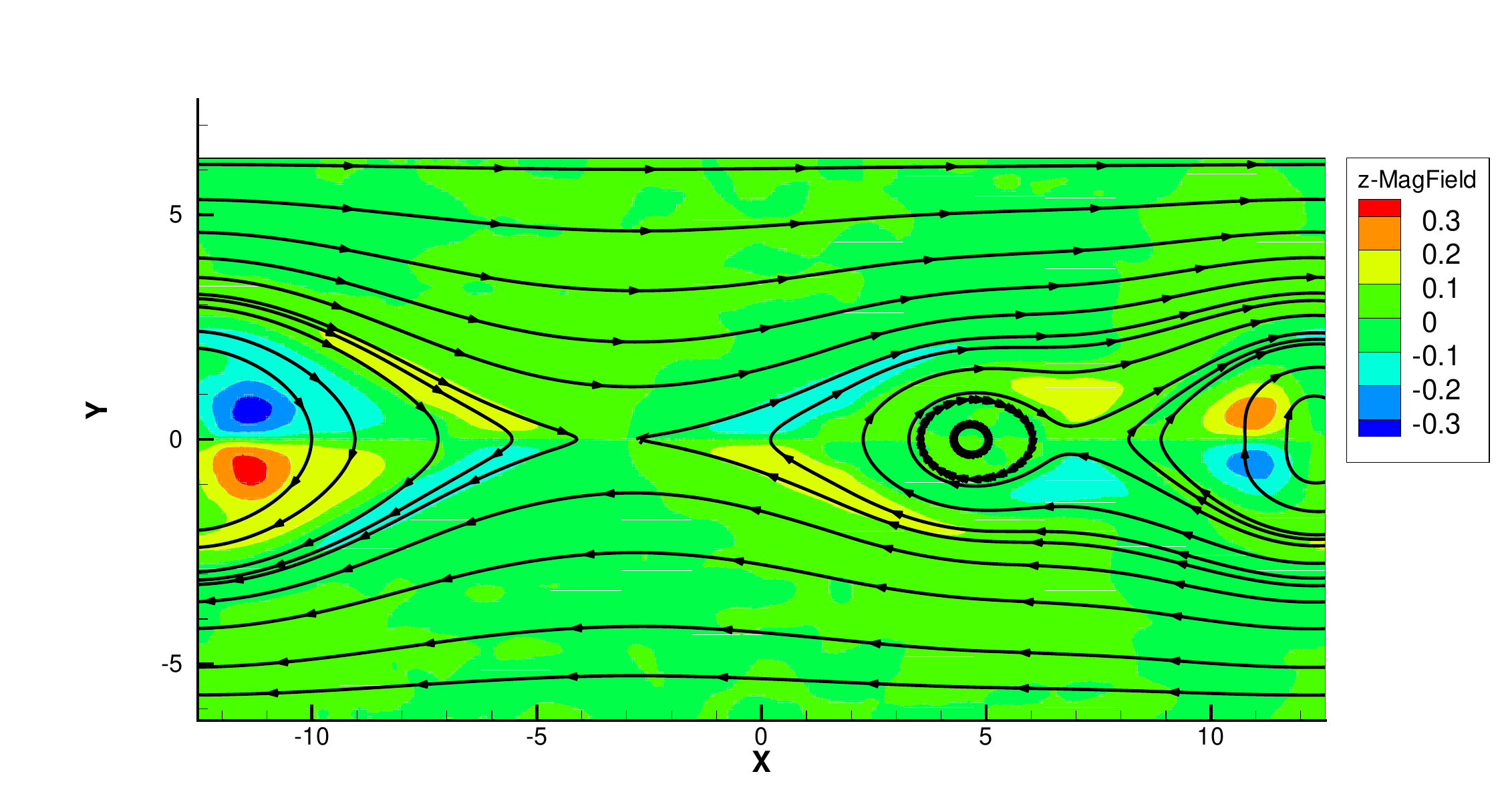}
         \caption{Magnetic field and z-magnetic field contour}
     \end{subfigure}
\caption{The UGKWP results of magnetic reconnection with Kn=$10^{-3}$ at $\omega_{pi} t=30$.}
\label{reconnection2}
\end{figure}

\begin{figure}
     \centering
     \begin{subfigure}[b]{0.48\textwidth}
         \centering
         \includegraphics[width=\textwidth]{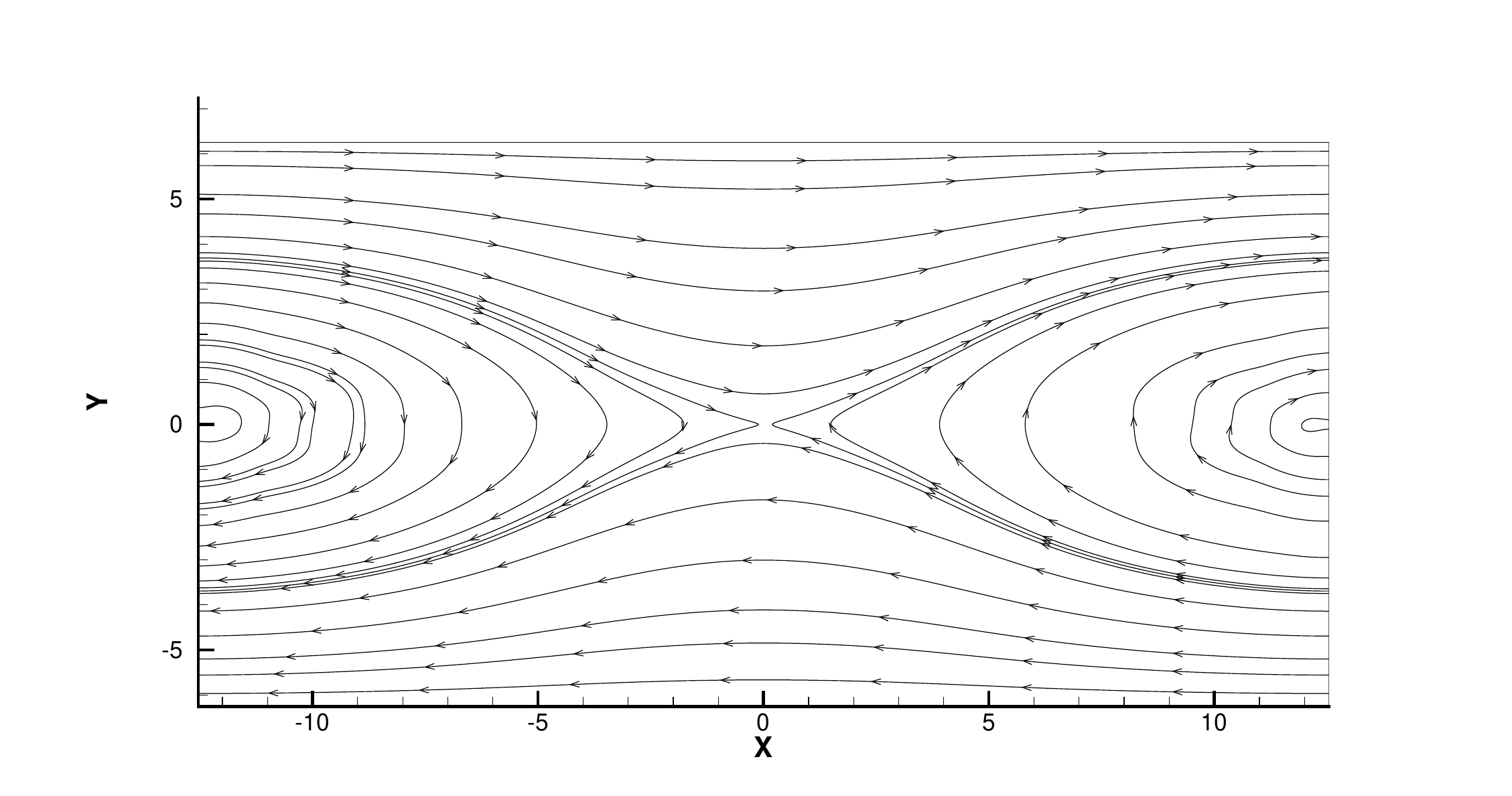}
         \caption{Magnetic field}
     \end{subfigure}
     \hfill
     \begin{subfigure}[b]{0.48\textwidth}
         \centering
         \includegraphics[width=\textwidth]{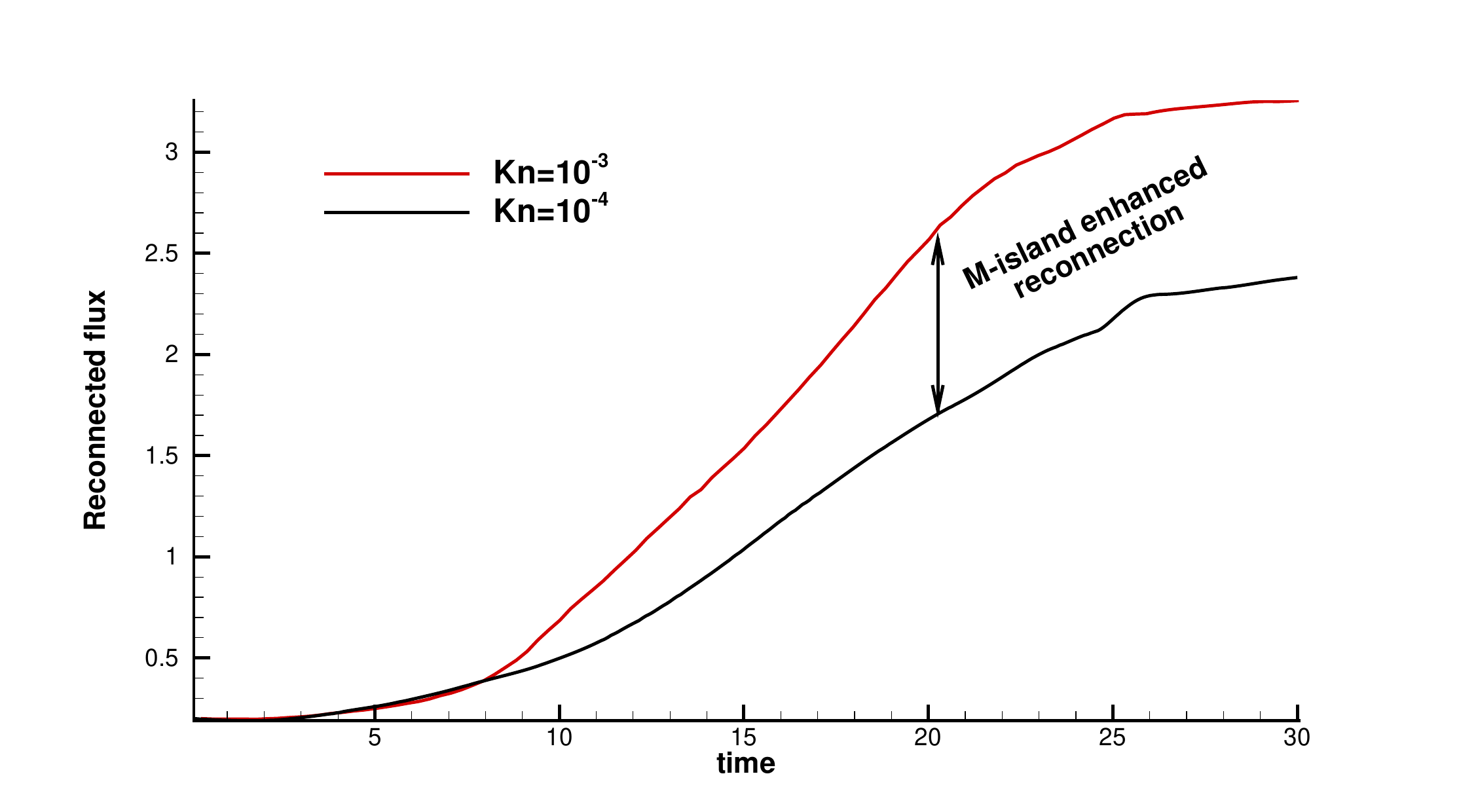}
         \caption{Magnetic reconnection rate}
     \end{subfigure}
\caption{Sub-figure (a) shows the magnetic field at $\omega_{pi} t=30$ in continuum regime with Kn=$10^{-4}$.
And sub-figure (b) shows the reconnected flux in transitional regime with Kn=$10^{-3}$ and in continuum regime with Kn=$10^{-4}$.}
\label{reconnection3}
\end{figure}

\end{document}